\documentclass[acmsmall,screen,nonacm]{acmart}
\usepackage{graphicx} %
\usepackage{pgfplots}

\usepackage{xcolor}
\definecolor{Cayenne}{HTML}{941100}
\definecolor{Mocha}{HTML}{945200}
\definecolor{Asparagus}{HTML}{929000}
\definecolor{Fern}{HTML}{4F8F00}
\definecolor{Clover}{HTML}{008F00}
\definecolor{Moss}{HTML}{009051}
\definecolor{Teal}{HTML}{009193}
\definecolor{Ocean}{HTML}{005493}
\definecolor{Midnight}{HTML}{011993}
\definecolor{Eggplant}{HTML}{011993}
\definecolor{Plum}{HTML}{942193}
\definecolor{Maroon}{HTML}{942193}

\usepackage{wrapfig}
\usepackage{mathpartir}
\usepackage{caption}
\usepackage{subcaption}
\usepackage{amsmath}
\usepackage{adjustbox}
\usepackage{syntax}
\usepackage{stmaryrd}
\usepackage{relsize}
\usepackage{algpseudocode}
\usepackage{makecell}
\usepackage[underline=false]{pgf-umlsd}
\usepackage{paralist}
\usepackage{transparent}
\usepackage{booktabs}

\microtypesetup{letterspace=-100}

\usepackage{titlesec}
\titleformat{\subparagraph}[runin]{\normalfont\normalsize\itshape}{\thesubparagraph}{1em}{}
\titlespacing*{\subparagraph}{\parindent}{0pt}{3.5pt}

\usepackage{tikz}
\usetikzlibrary{automata, arrows.meta, positioning, patterns}

\newcommand{\reftxt}[2]{\hyperref[#2]{#1~\ref*{#2}}}

\newcommand{\exreftext}{Example}
\newcommand{\exref}[1]{\reftxt{\exreftext}{#1}}

\usepackage{listings}

\lstdefinelanguage{P}{
  morekeywords={
    machine, event, fun, var, type, enum, spec, module, test, 
    start, state, entry, exit, on, goto, do, defer, ignore, raise, send, broadcast,
    new, assert, assume, if, else, while, foreach, in, sizeof, return, true, false, null,
    int, seq
  },
  sensitive=true,
  morecomment=[l]{//},
  morecomment=[s]{/*}{*/},
  morestring=[b]",
  morestring=[b]',
  alsoletter={:}
}
\newcommand{\lstbasicstyle}{\ttfamily\ttfamily\small}
\newcommand{\lstsmallerstyle}{\ttfamily\ttfamily\tiny}
\newcommand{\lstkeywordstyle}{\color{Cayenne}\bfseries}
\newcommand{\lstcommentstyle}{\sl}
\newcommand{\lstnumberstyle}{\scriptsize\em}

\lstdefinestyle{default}{%
  backgroundcolor=\color{white},%
  basicstyle=\lstbasicstyle,%
  commentstyle=\lstcommentstyle,%
  keywordstyle=\lstkeywordstyle,%
  columns=fullflexible,%
  keepspaces=true,%
  mathescape=true%
}
\lstdefinestyle{number}{%
  numbers=left,%
  numberstyle=\lstnumberstyle,%
  xleftmargin=2em%
}
\makeatletter
\newcount\lst@savelstnumber
\newcommand{\lstnonum}{%
  \global\lst@savelstnumber\c@lstnumber%
  \gdef\thelstnumber{}%
}
\newcommand{\lststopn}{%
  \lstnonum%
}
\newcommand{\lststartn}{%
  \global\c@lstnumber\lst@savelstnumber%
  \gdef\thelstnumber{\arabic{lstnumber}}%
}
\newcommand{\lstsetn}[1]{%
  \global\c@lstnumber#1%
  \gdef\thelstnumber{\arabic{lstnumber}}%
}
\newcommand{\lstbeginn}{\lstsetn{0}}
\newcommand{\lstadvancen}{%
  \advance\c@lstnumber 1\relax%
}
\makeatother

\lstdefinestyle{mystyle}
{
    basicstyle=\small,
    frame=single,
    breaklines,
    columns=fullflexible,
    breakindent=1.2em,
    breakatwhitespace,
    escapeinside={(*}{*)},
    numbers=left,
    language=C,
    linewidth=0.44\linewidth 
}
\lstdefinestyle{numberstyle}{ 
     basicstyle=\small,
    columns=fullflexible,
    language=C,
    escapeinside={(*}{*)},
    numberstyle=\tiny\color{gray},
    numbers=left
}
\lstdefinestyle{simplestyle}{ 
     basicstyle=\small,
    columns=fullflexible,
    language=C,
    escapeinside={(*}{*)}
}

\definecolor{lightredbg}{rgb}{1.0,0.93,0.93}
\definecolor{lightgreenbg}{rgb}{0.92,1.0,0.92}
\definecolor{lightyellowbg}{rgb}{1.0,1.0,0.88}
\lstnewenvironment{lstdiffminus}[1][]
 {\lstset{aboveskip=-\medskipamount,name=lstdiff,firstnumber=last,backgroundcolor=\color{lightredbg},#1}}
 {}
\lstnewenvironment{lstdiffplus}[1][]
 {\lstset{aboveskip=-\medskipamount,name=lstdiff,firstnumber=last,backgroundcolor=\color{lightgreenbg},#1}}
 {}
\lstnewenvironment{lstcont}[1][]
 {\lstset{aboveskip=-\medskipamount,firstnumber=last,#1}}
 {}

\newboolean{showcomments}
\setboolean{showcomments}{true}
\newboolean{showjedi}
\setboolean{showjedi}{false}
\newboolean{shownote}
\setboolean{shownote}{false}
\newcommand{\EDITCOMMENT}[2][red]{\ifthenelse{\boolean{showcomments}}{{\color{#1}$\bigl[$\bgroup\em #2\egroup$\bigr]$}}{}}
\newcommand{\NOTE}[1]{\ifthenelse{\boolean{shownote}}{\EDITCOMMENT[blue]{#1}}{}}

\newcommand{\JEDI}[1]{\ifthenelse{\boolean{showjedi}}{\EDITCOMMENT[cyan]{#1}}{}}

\title{DissProve: Automated Verification of Asynchronous Distributed Protocols with Affine Communication}
\author{Christian Fontenot}
\email{christian.fontenot@colorado.edu}
\affiliation{
    \institution{University of Colorado Boulder}
    \city{Boulder}
    \state{Colorado}
    \country{USA}
}

\author{Gowtham Kaki}
\affiliation{
   \institution{University of Colorado Boulder}
    \city{Boulder}
    \state{Colorado}
    \country{USA}
}

\author{Bor-Yuh Evan Chang}
\authornote{Bor-Yuh Evan Chang holds concurrent appointments at the University of Colorado Boulder and as an Amazon Scholar.  This paper describes work performed at CU Boulder and is not associated with Amazon.}
\affiliation{
   \institution{University of Colorado Boulder, \& Amazon}
    \country{USA}
}
\date{2026}
\begin{document}
\newcommand{\denote}[1]{\ensuremath{\llbracket#1\rrbracket}}
\newcommand{\tool}{\textsc{DissProve}}
\newcommand{\actlang}{\textsc{ActL}}
\newcommand{\Hactor}{DiSM}
\newcommand{\dCSNH}{dCSNH}
\newcommand{\todo}[1]{\textcolor{magenta}{[#1]}}
\newcommand{\note}[1]{\textcolor{blue}{[#1]}}
\newcommand{\papertext}[1]{#1}

\newcommand{\hole}{\textcolor{red}{???}}
\newcommand{\concStep}[2]{#1 \rightarrow #2}
\newcommand{\manyConcrete}[2]{#1 \rightarrow^* #2}
\newcommand{\localStep}[3]{(#1,#2) \Downarrow #3}

\newcommand{\evalExpr}[3]{(#1,#2) \Downarrow #3}
\newcommand{\defset}[1]{\{#1\}}
\newcommand{\concreteState}{\mathcal{P}}

\newcommand{\locAbStep}[3]{\{#2\}#1\{#3\}}
\newcommand{\causalStep}[2]{#2\leftsquigarrow#1}
\newcommand{\traceExt}[2]{#1\Rightarrow #2}
\newcommand{\traceLU}[3]{#1\vdash_{#2}#3}
\newcommand{\manyCausal}[2]{#2\leftsquigarrow^*#1}

\newcommand{\varfrom}[3]{#1\hookrightarrow (#2,#3)}
\newcommand{\prevmsg}[2]{#1:\Vdash#2}
\newcommand{\curmsg}[2]{#1:\vdash#2}
\newcommand{\hsends}[3]{#1\rightharpoonup_{#3} #2}
\newcommand{\hassigns}[3]{#1\rightharpoondown_{#3} #2}
\newcommand{\vbind}[4]{(#1,#2)\twoheadrightarrow_{#3} #4}

\newcommand{\infrule}[1]{\textsc{#1}}
\newcommand{\code}[1]{\texttt{#1}}

\newcommand{\actorsys}{Actor History Transition System}
\newcommand{\system}{\textbf{sys}}
\newcommand{\initmsg}{$init$}
\newcommand{\symbolictrace}{\hat{T}}
\newcommand{\symbolicstore}{\hat{\sigma}}
\newcommand{\symbolicstate}{\mathcal{A}}
\newcommand{\symbolicactor}{s}
\newcommand{\actorlogic}{\text{PAHL}}

\newcommand{\recmap}{\hat{\psi}}
\newcommand{\trace}{T ::= \circ | m::T}

\newcommand{\bnfdef}{::=}
\newcommand{\bnfalt}{\mid}
\newcommand{\bnfopt}[1]{\ulcorner #1 \lrcorner}

\newcommand{\defeq}{\stackrel{\text{\tiny def}}{=}}
\newcommand{\seq}[1]{\smash{\overline{\vphantom{|}#1}}}
\newcommand{\dom}{\operatorname{dom}}
\newcommand{\lfp}{\operatorname{lfp}}

\newcommand{\ctxnil}{\circ}
\newcommand{\ctxcons}[2]{#1, #2}
\newcommand{\ctxupd}[2]{#1[#2]}
\newcommand{\mapping}[2]{#1 \mapsto #2}
\newcommand{\maplookup}[2]{#1(#2)}

\newcommand{\tuple}[1]{\langle #1 \rangle}

\newcommand{\sublbl}[2]{#1_{\text{\relsize{-1}\lsstyle#2}}}
\newcommand{\fromlbl}[1]{\sublbl{#1}{from}}
\newcommand{\tolbl}[1]{\sublbl{#1}{to}}
\newcommand{\prelbl}[1]{\sublbl{#1}{pre}}
\newcommand{\postlbl}[1]{\sublbl{#1}{post}}
\newcommand{\fieldslbl}[1]{\sublbl{#1}{mach}}
\newcommand{\varslbl}[1]{\sublbl{#1}{hand}}

\newcommand{\fmtcode}[1]{\ensuremath{\text{\lstbasicstyle #1}}}
\newcommand{\sfmtcode}[1]{\ensuremath{\text{\lstsmallerstyle #1}}}
\newcommand{\fmtmvar}[1]{\mathit{#1}}
\newcommand{\fmtkw}[1]{\text{\relsize{-0.5}$\mathtt{#1}$}}
\newcommand{\fmtspecialvar}[1]{\text{\relsize{-1}\lsstyle$\mathtt{#1}$}}

\newcommand{\fmtabs}[1]{\widehat{#1}}

\newcommand{\lemp}{\mathsf{emp}}
\newcommand{\lsep}{\mathbin{\ast}}
\newcommand{\ltrue}{\top}
\newcommand{\lorderimpr}{\mathrel{\twoheadrightarrow}}

\newcommand{\var}{x}
\newcommand{\varalt}{y}

\newcommand{\cmd}{\fmtmvar{cmd}}
\newcommand{\sendkw}{\fmtkw{send}}
\newcommand{\bcastkw}{\fmtkw{broadcast}}
\newcommand{\mkbcast}[3]{\bcastkw\:#1.#2(#3)}
\newcommand{\mksend}[3]{\sendkw\:#1.#2(#3)}
\newcommand{\gotokw}{\fmtkw{goto}}
\newcommand{\mkgoto}[1]{\gotokw\:#1}
\newcommand{\newkw}{\fmtkw{new}}
\newcommand{\mknew}[1]{\newkw\:#1}

\newcommand{\instr}{\fmtmvar{instr}}
\newcommand{\skipkw}{\fmtkw{skip}}
\newcommand{\assignkw}{\fmtkw{=}}
\newcommand{\mkassign}[2]{#1 \mathrel{\assignkw} #2}
\newcommand{\assertkw}{\fmtkw{assert}}
\newcommand{\mkassert}[1]{\assertkw\:#1}
\newcommand{\expr}{e}
\newcommand{\mkread}[2]{#1.#2}

\newcommand{\trans}{t}
\newcommand{\mktrans}[3]{#1 \mathrel{\mathord{-}\mkern-4mu[#2]\mkern-4mu\mathord{\rightarrow}} #3}

\newcommand{\loc}{\mathscr{l}}
\newcommand{\entryloc}{\fmtspecialvar{entry}}
\newcommand{\exitloc}{\fmtspecialvar{exit}}

\newcommand{\proc}{\fmtmvar{proc}}

\newcommand{\hand}{\fmtmvar{hand}}
\newcommand{\onkw}{\fmtkw{on}}
\newcommand{\dokw}{\fmtkw{do}}
\newcommand{\evnt}{\mathscr{e}}
\newcommand{\idleevnt}{\fmtspecialvar{eIdle}}
\newcommand{\initevnt}{\fmtspecialvar{eInit}}
\newcommand{\mkonhand}[4][]{\onkw\:#2\:\dokw^{#1}\,(#3)\,#4}
\newcommand{\mkonseghand}[4][\seglen]{\mkonhand[#1]{#2}{#3}{#4}}
\newcommand{\deferkw}{\fmtkw{defer}}
\newcommand{\mkdeferhand}[1]{\deferkw\:#1}
\newcommand{\ignorekw}{\fmtkw{ignore}}
\newcommand{\mkignorehand}[1]{\ignorekw\:#1}

\newcommand{\behave}{\fmtmvar{behave}}
\newcommand{\startkw}{\fmtkw{start}}
\newcommand{\statekw}{\fmtkw{state}}
\newcommand{\stt}{\mathscr{s}}
\newcommand{\initstt}{\fmtspecialvar{Init}}
\newcommand{\mkbehave}[2]{\statekw\:#1\:#2}

\newcommand{\field}{\fmtmvar{fld}}
\newcommand{\varkw}{\fmtkw{var}}
\newcommand{\mkfield}[2]{\varkw\:#1 \mathrel{\assignkw} #2}

\newcommand{\actor}{\fmtmvar{actor}}
\newcommand{\mach}{\mathscr{m}}
\newcommand{\absmach}{\mach}
\newcommand{\mainmachtext}{Main}
\newcommand{\mainmach}{\fmtspecialvar{\mainmachtext}}
\newcommand{\summarymachtext}{System}
\newcommand{\summarymach}{\fmtabs{\fmtspecialvar{\summarymachtext}}}
\newcommand{\machinekw}{\fmtkw{machine}}
\newcommand{\mkactor}[3]{\machinekw\:#1\:#2\:#3}

\newcommand{\msgkind}{\fmtmvar{messk}}
\newcommand{\eventkw}{\fmtkw{event}}
\newcommand{\mkmsgkind}[1]{\eventkw\:#1}

\newcommand{\topsys}{\pi}
\newcommand{\mktopsys}[2]{#2}

\newcommand{\val}{v}
\newcommand{\absval}{\fmtabs{\val}}
\newcommand{\addr}{a}
\newcommand{\absaddr}{\fmtabs{\addr}}
\newcommand{\symvar}{\fmtabs{\var}}
\newcommand{\symvaralt}{\fmtabs{\varalt}}

\newcommand{\storecolor}[1]{{\color{Cayenne} #1}}
\newcommand{\netcolor}[1]{{\color{Asparagus} #1}}
\newcommand{\histcolordecl}{\color{Midnight}}
\newcommand{\histcolor}[1]{{\histcolordecl #1}}
\newcommand{\purecolordecl}{\color{Moss}}
\newcommand{\purecolor}[1]{{\purecolordecl #1}}

\newcommand{\thmcolor}[1]{\textcolor{Plum}{#1}}

\newcommand{\store}{\rho}
\newcommand{\absstore}{\fmtabs{\store}}
\newcommand{\storetop}{\top}
\newcommand{\storesep}{\lsep}
\newcommand{\absstorecons}[2]{#1 \storesep #2}

\newcommand{\actorstore}{\eta}
\newcommand{\absactorstore}{\fmtabs{\actorstore}}
\newcommand{\mkactorstore}[2]{\langle #1, #2 \rangle}

\newcommand{\sysstore}{\zeta}
\newcommand{\sysstorecolored}{\storecolor{\sysstore}}
\newcommand{\abssysstore}{\fmtabs{\sysstore}}

\newcommand{\msg}{m}
\newcommand{\absmsg}{\fmtabs{\msg}}
\newcommand{\mkmsg}[4]{#1\mathord{\dashrightarrow}#2.#3(#4)}
\newcommand{\ctormsg}{\fmtspecialvar{ctor}}
\newcommand{\mkctor}[2]{#1\mathord{\dashrightarrow}#2.\ctormsg}

\newcommand{\seglen}{k}
\newcommand{\symseglen}{\fmtabs{k}}
\newcommand{\mkmsgseg}[5][\seglen]{#2\mathord{\dashrightarrow}#3.#4^{#1}(#5)}

\newcommand{\net}{\mu}
\newcommand{\netcolored}{\netcolor{\net}}
\newcommand{\absnet}{\fmtabs{\net}}
\newcommand{\nettop}{\top}
\newcommand{\netsep}{\otimes}
\newcommand{\netnil}{\ctxnil}
\newcommand{\netcons}[2]{\ctxcons{#1}{#2}}
\newcommand{\absnetnil}{\nettop}
\newcommand{\absnetnilx}{\sublbl{\absnetnil}{net}}
\newcommand{\absnetcons}[2]{#1 \netsep #2}

\newcommand{\lok}{\mathsf{ok}}

\newcommand{\hist}{\omega}
\newcommand{\histcolored}{\histcolor{\hist}}
\newcommand{\abshist}{\fmtabs{\hist}}
\newcommand{\histemp}{\varepsilon}
\newcommand{\histcons}[2]{#1; #2}
\newcommand{\abshistemp}{\lok}
\newcommand{\abshistempx}{\sublbl{\abshistemp}{hist}}
\newcommand{\abshisthyp}[2]{#1 \lorderimpr #2}

\newcommand{\actorloc}{\fmtmvar{l}}
\newcommand{\sysloc}{\fmtmvar{loc}}
\newcommand{\abssysloc}{\fmtabs{\sysloc}}
\newcommand{\mkmachsttevntloc}[4]{#1.#3.#4}

\newcommand{\loctop}{\top}
\newcommand{\loctopx}{\sublbl{\loctop}{loc}}
\newcommand{\locsep}{\mathrel{\lsep}}

\newcommand{\mkpstate}[2]{\tuple{#1\colon #2}}
\newcommand{\mkabspstate}[3][\puredom]{\mkpstate{#2}{#3 \mid \purecolor{#1}}}
\newcommand{\mkactorstate}[3]{\mkpstate{#1}{#2 \cdot #3}}

\newcommand{\sysstate}{\sigma}
\newcommand{\abssysstate}{\fmtabs{\sysstate}}
\newcommand{\mksysstate}[4]{\mkpstate{#1}{\storecolor{#2} \cdot \netcolor{#3} \cdot \histcolor{#4}}}

\newcommand{\puredom}{\varphi}
\newcommand{\mkabssysstate}[5][\puredom]{\mkpstate{#2}{\storecolor{#3} \cdot \netcolor{#4} \cdot \histcolor{#5} \mid \purecolor{#1}}}

\newcommand{\sysprop}{\Sigma}
\newcommand{\abssysprop}{\fmtabs{\sysprop}}

\newcommand{\mksummsg}[4]{\mathord{\dashrightarrow}_{#1}#2.#3(#4)}

\newcommand{\valuation}{\theta}
\newcommand{\concstate}[2]{#2 \models #1}
\newcommand{\concloc}[2]{#2 \models #1}
\newcommand{\concpure}[2][\valuation]{\purecolor{#1} \models #2}
\newcommand{\conc}[3][\valuation]{#3 \mathbin{\mid} \purecolor{#1} \models #2}
\newcommand{\initstate}[1]{\vdash #1\;\mathsf{init_{\bot}}}
\newcommand{\jinitbot}[1]{\initstate{#1}}
\newcommand{\jnotinit}[1]{\not\initstate{#1}}
\newcommand{\junreachable}[1]{\vdash #1\;\mathsf{unreachable}}

\newcommand{\causalgraphop}{\operatorname{causal}}
\newcommand{\causalgraph}[1]{\causalgraphop(#1)}
\newcommand{\causaltofrom}[2]{#1 \leftarrow #2}
\newcommand{\causaltofromdash}[2]{#1 \dashleftarrow #2}
\newcommand{\mkmachevnt}[2]{\histcolor{#1.#2}}
\newcommand{\mkmachfld}[2]{\storecolor{#1.#2}}

\newcommand{\exit}[4]{exit(#1,#2,#3,#4)}
\newcommand{\entry}[4]{entry(#1,#2,#3,#4)}
\newcommand{\pathext}[2]{#1 \twoheadrightarrow #2}
\newcommand{\sumpath}[2]{#1 \in path(#2)}
\newtheorem{axiom}{Axiom}

\newcommand{\msgtxt}[2]{\texttt{#1(#2)}}

\renewcommand{\mess}[4][0]{
  \stepcounter{seqlevel}
  \path
  (#2)+(0,-\theseqlevel*\unitfactor-0.7*\unitfactor) node (mess from) {};
  \addtocounter{seqlevel}{#1}
  \path
  (#4)+(0,-\theseqlevel*\unitfactor-0.7*\unitfactor) node (mess to) {};
  \draw[->,>=angle 60] (mess from) -- (mess to) node[midway, above]
  {#3};    
}

\newcommand{\ifmtcode}[1]{\fmtcode{\relsize{-1}#1}}
\newcommand{\iNode}{\ifmtcode{Node}}

\newcommand{\inode}[2][n]{#1_#2}
\newcommand{\iNodes}{\mathit{nodes}}
\newcommand{\iNodesIdx}[1]{\iNodes\fmtcode{[}#1\fmtcode{]}}
\newcommand{\ifmtfld}[1]{\ifmtcode{#1}}
\newcommand{\ifld}[2]{#1\fmtcode{.}\ifmtcode{#2}}
\newcommand{\ineq}{\mathrel{\purecolor{\neq}}}
\newcommand{\ieq}{\mathrel{\purecolor{=}}}
\newcommand{\ilor}{\mathrel{\purecolor{\lor}}}
\newcommand{\iland}{\mathrel{\purecolor{\land}}}
\newcommand{\iimplies}{\mathrel{\purecolor{\implies}}}
\newcommand{\noleader}{-1}
\newcommand{\inoleader}{\purecolor{\noleader}}

\tikzset{state/.style={rectangle, draw, minimum width=1.5cm, minimum height=1cm}}
\begin{abstract}

We consider the problem of automatically proving safety properties of distributed protocols.
Distributed protocols have been particularly challenging for automated verification due to their \emph{asynchronous} and \emph{parametric} nature.
Compared to synchronous systems, asynchronous communication leads to a combinatorial explosion of possible execution histories of message handlers. As distributed protocols are typically defined parametrically on the number of actors, these definitions lead to an unbounded number of possible execution histories of unbounded length.
Existing verification techniques for such distributed protocols typically require global invariants about the entire actor system, which are complex even for simple protocols.
In this paper, we present an automated verification technique based on proving unreachability backwards from error states in an actor system.
One key insight is that the unboundedness from parametricity can be further classified into \emph{affine} and non-affine protocols, where affine protocols have execution histories of unbounded length in a bounded number of communication rounds.
We show how to use novel, goal-directed notions of materialization, causality, and summarization to verify safety properties of affine protocols with an unbounded number of actors in an automated manner.
Using our verification tool \tool{}, we demonstrate the feasibility of
fully-automated safety verification of asynchronous distributed protocols,
including Two-Phase Commit and Leader Election.

\end{abstract}

\acmConference[]{}{2027}{}

\maketitle

\section{Introduction}\label{sec:introduction}

A distributed protocol describes how a collection of networked processes
coordinate to achieve a common goal --- electing a leader, committing a
transaction, or reaching consensus~\cite{paxos,twopc-grey06}. Such a protocol is
executed by an \emph{unbounded} number of processes that communicate by
exchanging messages over an \emph{asynchronous} network, where messages may be
delayed or reordered arbitrarily. This combination of unbounded concurrency and
asynchrony makes formal verification of distributed protocols notoriously hard:
the reachable state space is infinite, and the number of message interleavings
across processes grows without bound~\cite{statespace}. Despite decades of
research, fully automated formal verification of distributed protocols remains
elusive. 

In this work, we identify a large, non-trivial class of distributed protocols
for which we can automate safety property verification. We say a distributed
protocol has \emph{affine} communication if each process sends at most one
message of a given type to each receiver. This restriction bounds communication
in terms of the static number of message types, while still allowing an
unbounded message count from the number of processes. Despite this
unboundedness, a communication round must reach a quiescent point in which no
further messages of a given type are sent.  Distributed protocols are often
designed to communicate in rounds, where safety can be decomposed into
intra-round and inter-round safety~\cite{garcia2018paxos}. Following this
intuition, communication within a round is usually bounded modulo the number of
processes, and so a variety  of protocols that real distributed systems rely on,
including  Two-Phase Commit~\cite{twopc-grey06}, Lamport's Bakery
~\cite{bakery-cacm74}, and quorum-based consensus algorithms can be defined with
affine communication (for a single round).

The standard approach to safety verification is to overapproximate the set of
forward-reachable states using an inductive invariant and show that it excludes
the error states, i.e., states that violate the safety condition. In the context
of distributed protocols, inductive invariants relate the state of the
asynchronous network to the local states of a potentially unbounded number of
processes, e.g., via an array of processes of machine type $\iNode$. Automated
reasoning tools, such as Ivy~\cite{ivy, ivy-pldi18} and Civl~\cite{civl-cav20},
make an extensive use of such inductive invariants to facilitate SMT-aided
safety verification of distributed protocols. While effective, this approach is
not always practical as successful verification often requires sophisticated
invariants relating disparate components of the
system~\cite{verdi,padon2017paxos}.  For instance, the one-shot leader election
protocol that we discuss in \autoref{sec:motivation} requires six different
quantified invariants, some with nested quantification, capture the relationship
between the asynchronous network state and the local states of the
processes~\cite{lewchenko-oopsla25}. Working out the inductive invariants that
are sufficiently strong to imply the safety property is therefore hard, either manually or algorithmically.

Abstract interpretation~\cite{cousot} is a general framework for automatically
computing sound over-approximations of program behaviors as a fixed point of a
monotone function over a lattice of abstract states. This framework has been
used to automatically infer inductive invariants for sequential~\cite{nlcra,cra}
and shared-memory concurrent programs~\cite{mine-vmcai14}, removing the need for
manually supplied invariants. This yields a promising direction for automated
reasoning of distributed protocols. However, applying abstract interpretation to
distributed protocols poses distinct challenges beyond sequential or
shared-memory concurrent verification: processes communicate
\emph{asynchronously}, meaning messages may be arbitrarily delayed or reordered,
and the protocol involves an \emph{unbounded} number of processes each
potentially in a different local state. These are precisely the same factors
that make manually specifying inductive invariants for distributed protocols
hard, and they make it equally difficult to construct an abstract domain that
finitely represents the joint behavior of all processes and in-flight messages.

While asynchronous communication and an unbounded concurrency leads to a
super-exponential blowup in the reachable state space, not all of this state
space is relevant to \emph{refuting the violation} of a given safety property.
This observation has motivated \emph{backward reasoning}~\cite{paroshwqo}, which
starts from states that violate safety and attempts to prove their
\emph{unreachability}. Working backwards from error states naturally focuses
exploration on property-relevant states, which tend to admit more compact finite
representations. This approach has been applied successfully in sequential
verification by \textsc{Symbiotic\,9}~\cite{symbioticnine}, and to verify
array-based infinite-state systems by \textsc{Cubicle}~\cite{cubicle} and
\textsc{MCMT}~\cite{ghilardi}. Neither line of work, however, addresses the
combined challenge of asynchronous communication and an unbounded, heterogeneous
process population that characterizes distributed protocols. This raises a
natural question: can the same insight --- focusing on goal-relevant states by
starting from an error state --- be brought to bear on distributed protocol
verification despite these challenges?

In this paper, we answer this question in the affirmative --- for a natural and
practically relevant class of distributed protocols with \emph{affine}
communication. In particular, we design a goal-directed, backward-from-error
program logic for reasoning about parametric actor systems with asynchronous
communication. This logic is based on a symbolic abstraction of actor systems
using separation, linear, and ordered logics to represent actor-local state,
unbounded message buffers, and unbounded communication histories, respectively.
This abstraction enables compositional reasoning over systems with an unbounded
number of actors and messages. We define a backwards abstract interpretation
using this symbolic abstraction to consider the relevant slice of the system
necessary to refute error states.
Overall, we make the following technical contributions:
\begin{itemize}
    \item We build a novel symbolic abstraction of actor systems using
    separation, linear, and ordered logics and introduce \emph{actor-location
    materialization} as a means to enable goal-directed safety verification of
    parametric asynchronous actor systems (\autoref{sec:actorlogic}). 
    \item We introduce two key operations to ensure the termination of a
    backwards abstract interpretation using this symbolic abstraction:
    \begin{inparaenum}[(a)]
      \item \emph{actor-causality reduction} to reduce the non-deterministic branching (\autoref{sec:reflogic}), and
      \item \emph{message-segment summarization} to summarize cyclic message histories (\autoref{sec:parametric}).
    \end{inparaenum}
    We prove the soundness of these operations and establish convergence
    guarantees using a notion of causal dependences.  
    \item We implement this approach in a verification tool for P
    programs~\cite{plang} called \tool{}
    --- the \underline{Dis}tributed \underline{S}ystems \underline{Prove}r
    (\autoref{sec:evaluation}). We
    describe our experience using \tool{} to automatically verify the safety of
    several asynchronous distributed protocols with affine communication,
    including Two-Phase Commit~\cite{twopc-grey06} and One-Shot Leader
    Election~\cite{ivy-pldi18,lewchenko-oopsla25}. To the best of our knowledge,
    this is the first time these two protocols have been verified in an
    asynchronous setting \emph{without} the need for explicit inductive
    invariants. 
\end{itemize}

\section{Overview}\label{sec:motivation}
\newcommand{\exfmtcode}[1]{\fmtcode{\relsize{-2}#1}}
\newcommand{\exnodeidx}[2][n]{\ensuremath{\fmtspecialvar{#1}_{#2}}}
\newcommand{\exnode}[2][n]{n_{#2}}
\newcommand{\exsegparam}[2][\seglen]{\purecolor{\ensuremath{#1_{#2}}}}
\newcommand{\exfld}[2]{\storecolor{\exnode{#1}\fmtcode{.}\exfmtcode{#2}}}
\newcommand{\exsym}[2]{\symvar_{\mbox{\scriptsize$\fmtspecialvar{#1}_{#2}$}}}
\newcommand{\exhandparam}[2]{\purecolor{\exsym{#1}{#2}}}
\newcommand{\exspecialmsg}[5][]{\mkmsg{#2}{\exnode{#3}}{\exfmtcode{#4}^{\purecolor{#1}}}{#5}}
\newcommand{\exmsg}[5][]{\exspecialmsg[#1]{\exnode{#2}}{#3}{#4}{#5}}
\newcommand{\exloc}[5]{\mapping{\exnode{#1}}{\mkmachsttevntloc{\fmtspecialvar{#2}}{\fmtspecialvar{#3}}{\fmtspecialvar{#4}}{\text{\lstnumberstyle #5}}}}
\newcommand{\exabssysstate}[3]{\tuple{#1\colon \histcolor{#2} \mid \purecolor{#3}}}
\newcommand{\exabssysstatetwo}[3]{\begin{array}{@{}l@{\,}l@{}}\tuple{& #1\colon \histcolor{#2} \mid \\&\purecolor{#3}\,}\end{array}}
\newcommand{\exabssysstatethree}[3]{\begin{array}{@{}l@{\,}l@{}}\tuple{& #1\colon \\&\histcolor{#2} \mid \\&\purecolor{#3}\,}\end{array}}
\newcommand{\exabssysstatefour}[4]{\begin{array}{@{}l@{\,}l@{}}\tuple{& #1\colon \\&\histcolor{#2} \mid \\&\purecolor{#3}\\&\purecolor{#4}\,}\end{array}}

\newcommand{\sdbreak}[1]{\draw[thick] (#1) ++(-0.075,0) -- ++(0.15,-0.15) -- ++(-0.15,-0.15) -- ++(0.15,-0.15) -- ++(-0.15,-0.15);}
\newcommand{\sdinstend}[3][]{\path (#2)+(0,-\theseqlevel*\unitfactor-0.7*\unitfactor) node (#3) {#1};}
\newcommand{\sdnodemidbelow}[4][]{\path (#2) -- node[midway, below, yshift=-1mm] (#4) {#1} (#3);}
\newcommand{\sdstatewithin}[2][]{\node[anchor=south west, xshift=1mm] (#2) at (rt1) {#1};}
\newcommand{\sdtriggerleft}[3][]{%
  \node[anchor=south east, xshift=-2mm, fill=lightgray, %
        rounded corners=1pt, inner sep=2pt, font=\scriptsize]
    (#3) at (#2 |- rt1) {\colorlet{Moss}{white}#1};
}
\newcommand{\sdstateout}[2][]{\node[statestyle] at (#2) {#1};}
\newcommand{\exbaremsg}[3][]{\histcolor{\exfmtcode{#2}^{\purecolor{#1}}(#3)}}
\tikzstyle{statestyle}=[draw, rounded corners, fill=white, drop shadow={fill=black!20}]
\tikzstyle{donestyle}=[text opacity=0.6, draw opacity=0.4]
\newcommand{\donetransparent}[1]{{\transparent{0.6} #1}}
\newcommand{\myinststyle}{\tikzstyle{inststyle}=[rectangle, draw, anchor=west, minimum height=0.6cm, minimum width=1.2cm, fill=white, drop shadow={fill=black}]}
\newcommand{\myinstdonestyle}{\tikzstyle{inststyle}+=[donestyle, drop shadow={fill=black!10}]}
\newcommand{\storecausalthreadstyle}{\tikzstyle{threadstyle}+=[fill=Moss!50, top color=Moss!50, bottom color=Moss!50]}
\newcommand{\storecausaldonethreadstyle}{\tikzstyle{threadstyle}+=[fill=Moss!20, top color=Moss!20, bottom color=Moss!20]}
\newcommand{\msgcausalthreadstyle}{\tikzstyle{threadstyle}+=[fill=Midnight!50, top color=Midnight!50, bottom color=Midnight!50]}
\newcommand{\summarycausalthreadstyle}{\tikzstyle{threadstyle}+=[bottom color=Moss!50, top color=Midnight!50]}

\newcommand{\exsyssizefld}[1]{\ensuremath{\##1}}
\newcommand{\exsummarymach}{\fmtabs{\fmtspecialvar{Node}}}
\newcommand{\exsyssize}{\ensuremath{\#\fmtspecialvar{Node}}}
\newcommand{\sysinput}[3]{\ensuremath{#2[#3]\text{.\fmtspecialvar{#1}}}}
\newcommand{\nodeidinput}[2]{\sysinput{id}{#1}{#2}}
\newcommand{\exnodeidinput}[1]{\exhandparam{id}{#1}}

\newcommand{\inlinetikz}[2][]{%
\smash{\begin{tikzpicture}[baseline]
  #1
  #2
\end{tikzpicture}}%
}
\newcommand{\inlineactbar}[1][]{\inlinetikz[#1]{
  \path (0,0) node (start) {} (0,0.24) node (end) {};
  \draw[threadstyle] (start.west) -- (start.east) -- (end.east) -- (end.west) -- cycle;
}}

\begin{wrapfigure}{R}{0pt}%
\footnotesize
\begin{minipage}{22.5em}
\lstnonum\begin{lstlisting}[style=number]
machine Node {
  var id: int;
  var forMe: int;
  var leader: int;
  $\rule{0pt}{2.5ex}$on eInit do (c: tConfig) {$\lstbeginn$
    id = c.id;$\label{pt:eInit-entry}$
    forMe = 0;$\label{pt:eInit-pre-forMe-assign}$ // no votes for me yet
    leader = -1; // no leader yet$\label{pt:eInit-pre-leader-assign}\label{pt:eInit-post-forMe-assign}$
    send ChooseLeader(),eVote;$\label{pt:eInit-pre-choose-leader}\label{pt:eInit-post-leader-assign}$
  }$\label{pt:eInit-post-eVote-send}\lststopn$
  $\rule{0pt}{2.5ex}$on eVote do {$\lststartn$
    forMe = forMe + 1;$\label{pt:eVote-pre-forMe-assign}$
\end{lstlisting}
\begin{lstdiffminus}[style=number]
-   if (forMe >= sizeof(Node)/2) {$\label{pt:eVote-post-forMe-assign}\label{line:evote-bug}$
\end{lstdiffminus}
\begin{lstdiffplus}[style=number,firstnumber=7]
+   if (forMe > sizeof(Node)/2) {$\label{line:evote-fix}$
\end{lstdiffplus}
\begin{lstcont}[style=number]
      leader = id;$\label{line:set-leader}$
      // Broadcast, "I am the leader." $\lststopn$
      $\ldots$
    }
  }
}
\end{lstcont}
\end{minipage}
\caption{A simplified ``one-shot'' leader election protocol. Each node will choose a node to vote for, and send an \texttt{eVote} message to it. When a node has received enough of these, it marks itself as leader (and can broadcast its assumption of leader and start the next phase).
We show a bug-fix commit; in the buggy version (the \colorbox{lightredbg}{\fmtcode{-}} version of \reftxt{line}{line:evote-bug}), the protocol can reach an unsafe state where two nodes believe they are elected at the same time.
}
\label{fig:eVote}
\end{wrapfigure}

In this section, we illustrate the intuition behind our techniques using a
simple leader election protocol shown as a P program~\cite{plang} in
\autoref{fig:eVote}. The protocol begins with each actor receiving an initial
message \fmtcode{eInit}, which configures its unique identifier
(\reftxt{location}{pt:eInit-entry}) and other fields. The initial message
handler ends with the actor choosing who to vote for and sending that actor an
\fmtcode{eVote} message (at \ref{pt:eInit-pre-choose-leader}).  When an actor
receives an \fmtcode{eVote} message, it increments its vote counter by executing
$\ifmtcode{forMe} \mathrel{\ifmtcode{=}} \ifmtcode{forMe + 1}$ at
\reftxt{location}{pt:eVote-pre-forMe-assign}. It then checks whether it has
received a quorum of votes using the condition at \ref{line:evote-bug}. If so,
it sets itself as leader at \ref{line:set-leader}, and can start the
next phase of the protocol (not shown). We consider two versions of the code:
one with a bug in the quorum condition (the \colorbox{lightredbg}{\fmtcode{-}}
version) and a correct version (the \colorbox{lightgreenbg}{\fmtcode{+}}
version).

Leader election is a \emph{consensus protocol}, where a set of actors seek to agree on a value. We can state consensus as a safety property:

\begin{proposition}[\fmtcode{leader}-Consensus Safety]\label{ex:leader-consensus-safety}
For all pairs of nodes $\inode1$ and $\inode2$ with different \ifmtfld{id}s, either they agree who the \ifmtfld{leader} is or they do not yet know who it is (i.e., their \ifmtfld{leader} field is $\inoleader$).
\par\begin{minipage}{15em}
\[
\begin{array}{l}
  \forall \inode1\colon\iNode, \inode2\colon\iNode.\;
    \ifld{\inode1}{id} \neq \ifld{\inode2}{id}
    \implies
\\ \qquad
    \ifld{\inode1}{leader} = \ifld{\inode2}{leader}
    \lor
    \ifld{\inode1}{leader} = \noleader
\\ \qquad \qquad
    \lor
    \ifld{\inode2}{leader} = \noleader
\end{array}
\;
\]
\end{minipage}
\end{proposition}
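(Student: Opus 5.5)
The plan is to prove the statement as an inductive invariant of the corrected (\colorbox{lightgreenbg}{\fmtcode{+}}) protocol. The invariant is established over all executions of the asynchronous actor system, by induction on the length of the execution. We fix $N = \fmtcode{sizeof(Node)}$ and assume the node \ifmtfld{id}s are distinct. We also assume each node handles \fmtcode{eInit} exactly once, as the first event it handles. Let $d_i$ denote the number of \fmtcode{eVote} messages that node $\inode{i}$ has \emph{dequeued}, and let $f_i$ be the number of \fmtcode{eVote} messages that are in flight to $\inode{i}$.

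The auxiliary invariants I would carry, each checked against the two handlers, are:
\begin{enumerate}
\item[(I1)] $\ifld{\inode{i}}{leader} \in \{\noleader, \ifld{\inode{i}}{id}\}$. This holds because \fmtcode{leader} is assigned only at \reftxt{line}{pt:eInit-pre-leader-assign} and \reftxt{line}{line:set-leader}.
\item[(I2)] $\ifld{\inode{i}}{forMe} \le d_i$. The \fmtcode{eInit} handler resets \fmtcode{forMe} to $0$, and each \fmtcode{eVote} increments both sides by one.
\item[(I3)] $\sum_i (d_i + f_i)$ is at most the number of nodes that have completed their \fmtcode{eInit} handler, and hence at most $N$. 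This is the \emph{affine} communication fact: the only send of \fmtcode{eVote} is the single one in the \fmtcode{eInit} handler, and delivery merely moves a message from $f_i$ to $d_i$.
\item[(I4)] If $\ifld{\inode{i}}{leader} = \ifld{\inode{i}}{id}$, then $d_i > \lfloor N/2 \rfloor$. At the moment \reftxt{line}{line:set-leader} executes, the guard on \reftxt{line}{line:evote-fix} together with (I2) gives $d_i \ge \ifld{\inode{i}}{forMe} > \lfloor N/2\rfloor$. Afterwards $d_i$ only grows, and \fmtcode{leader} is never reset, since \fmtcode{eInit} is handled once.
\end{enumerate}

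From these, the proposition follows by contradiction. Suppose $\ifld{\inode1}{id} \neq \ifld{\inode2}{id}$, both \fmtcode{leader} fields differ from $\noleader$, and the two fields disagree. By (I1), each node has \fmtcode{leader} set to its own \fmtcode{id}. By (I4), $d_1, d_2 \ge \lfloor N/2\rfloor + 1$. Therefore $d_1 + d_2 \ge 2\lfloor N/2 \rfloor + 2 > N$, which contradicts (I3). It is worth noting that this is exactly where the buggy guard $\ge$ fails. For even $N$, two nodes can each receive $N/2$ votes, so the counting argument no longer yields a contradiction.

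The main obstacle I expect is (I3). It has to relate the multiset of in-flight messages in the asynchronous network to the per-node histories, and it must hold under arbitrary delay and reordering. I would prove it by tracking, for each node, a ghost flag ``has sent its vote''. The key observation is that every \fmtcode{eVote} in the system, whether in flight or already dequeued, is attributable to a distinct sender whose flag is set. This uses the affinity property, namely that each sender emits at most one \fmtcode{eVote}. A secondary subtlety is an \fmtcode{eVote} arriving at a node before that node's \fmtcode{eInit}. The ordering assumption in the P semantics rules this out. Even without that assumption, the reset at \reftxt{line}{pt:eInit-pre-forMe-assign} only lowers \fmtcode{forMe} relative to $d_i$, which preserves (I2), and \fmtcode{leader} is set only after \fmtcode{eInit}.
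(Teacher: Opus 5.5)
Your argument is correct in substance, but it runs in the opposite direction from the paper's. You prove the property forward, from hand-written auxiliary invariants (I1)--(I4) over ghost counters $d_i, f_i$. This is exactly the style of proof the paper sets out to avoid.

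The paper never states an invariant. It runs its backward analysis from the negated property:
\begin{itemize}
\item It materializes only the two disagreeing nodes and explains each \ifmtfld{leader} value by one of its two assignments; the \fmtcode{eInit} branch dies on $-1 \neq -1$.
\item It steps back through \fmtcode{eVote} to a materialized voter $n_3$.
\item It replaces the divergent unrolling of \ifmtfld{forMe} increments by a message segment of symbolic length $k_1$, with closed form $\ifmtfld{forMe} + k_1 + 1 > N/2$.
\item Repeating this for $n_2$ and adding the affine bound $1 + k_1 + 1 + k_2 \le N$ gives a constraint that is unsatisfiable for every $N$.
\end{itemize}

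The arithmetic core is the same as yours. Your $d_i$ is the paper's $k_i + 1$, (I2) and (I4) play the role of the segment handler's closed form, and (I3) is the affine bound on segment lengths.

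Your route is elementary and insensitive to message order. It needs no Mazurkiewicz reordering to make the \fmtcode{eVote}s consecutive, no recurrence analysis, and no soundness results for causal reduction or segments. It also grounds the cross-recipient bound $d_1 + d_2 \le N$ directly in the program (a single \fmtcode{eVote} send, in a handler run once per node), rather than in the affine restriction on histories.

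The paper's route buys automation. Your global counting invariant (I3), which relates the unbounded network to the states of all nodes, is the hard part to invent. The paper never writes it down; it replaces it by a finite materialized slice plus cardinality constraints on segment lengths.

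Two boundary repairs are needed for your induction to go through.
\begin{itemize}
\item \textbf{(I4) is not inductive as stated.} If a configuration gives some node $\ifmtfld{id} = \noleader$, that node's \fmtcode{eInit} handler establishes $\ifmtfld{leader} = \ifmtfld{id}$ while $d_i = 0$. Restate it as $\ifmtfld{leader} \neq \noleader \implies d_i > \lfloor N/2 \rfloor$. By (I1), that is all your final step uses.
\item \textbf{You never fix the initial store, and the base case is not automatic.} Suppose fields start at P's default $0$. Then a node that has not yet handled \fmtcode{eInit} has $\ifmtfld{leader} = \ifmtfld{id} = 0 \neq \noleader$ with $d_i = 0$, so (I4) fails. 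Under that reading the proposition itself fails. With $N = 3$, two initialized nodes can both vote for one of them and elect it with a nonzero \ifmtfld{id} while the third node's \fmtcode{eInit} is still in flight; the elected node and the uninitialized node then violate the statement.
\end{itemize}
You must assume \ifmtfld{leader} starts at $\noleader$, or restrict the quantifier (and your invariants) to nodes that have handled \fmtcode{eInit}. The paper makes the same assumption implicitly: its backward analysis explains a \ifmtfld{leader} value only through one of the two assignments to it, never through an initial value.
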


The essence of safety verification is to \emph{refute} the reachability of
unsafe states, which are categorized by the negation of the safety property:

\begin{proposition}[\fmtcode{leader}-Consensus Violation]\label{ex:leader-consensus-violation}
\[
  \exists \inode1\colon\iNode, \inode2\colon\iNode.\;
    \ifld{\inode1}{id} \neq \ifld{\inode2}{id}
    \land
    \ifld{\inode1}{leader} \neq \ifld{\inode2}{leader}
    \land
    \ifld{\inode1}{leader} \neq \noleader
    \land
    \ifld{\inode2}{leader} \neq \noleader
\;.
\]
\end{proposition}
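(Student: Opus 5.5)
The plan is to read the statement as the claim that the displayed existential formula is exactly the negation of the \fmtcode{leader}-Consensus Safety property (\autoref{ex:leader-consensus-safety}). It therefore characterizes the error states whose unreachability must be refuted. The argument is a purely propositional and first-order manipulation over a fixed system state, meaning a valuation of the fields \ifmtfld{id} and \ifmtfld{leader} of every $\iNode$ actor. It does not depend on the protocol's transition semantics.

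The first step is to negate the universally quantified formula of \autoref{ex:leader-consensus-safety}. Using $\neg\forall \inode1, \inode2.\,\Phi \equiv \exists \inode1, \inode2.\,\neg\Phi$, the quantifier prefix becomes $\exists \inode1\colon\iNode, \inode2\colon\iNode$. The quantifiers range over the same (possibly unbounded) set of $\iNode$ actors, so the domain is unchanged.

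The second step rewrites the implication as $\neg(A \implies B) \equiv A \land \neg B$. Here $A$ is $\ifld{\inode1}{id} \neq \ifld{\inode2}{id}$ and $B$ is the three-way disjunction in the consequent. By De Morgan, $\neg B$ becomes the conjunction
\[
\ifld{\inode1}{leader} \neq \ifld{\inode2}{leader} \land \ifld{\inode1}{leader} \neq \noleader \land \ifld{\inode2}{leader} \neq \noleader .
\]
Each negated equality is then written as a disequality. Assembling these pieces yields precisely the formula of \autoref{ex:leader-consensus-violation}. Since every step is an equivalence, a state satisfies the violation formula iff it falsifies the safety property. Hence proving the violation unreachable is the same as proving safety in every reachable state.

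The only point needing care, which I expect to be the main (if mild) obstacle, is the status of the sentinel $\noleader$. We must make sure that ``does not yet know the leader'' is captured exactly by $\ifld{}{leader} = \noleader$. This relies on the handler in \autoref{fig:eVote} initializing \ifmtfld{leader} to $\noleader$ and only ever assigning it an \ifmtfld{id} value. The equivalence above is syntactic, so this affects only the intended reading, not the logic.

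As a sanity check connecting to the buggy version, I would also exhibit a concrete witness. Take a system with an even number of nodes where two nodes each receive exactly $\fmtcode{sizeof(Node)}/2$ votes. Under the \colorbox{lightredbg}{\fmtcode{-}} condition both set $\ifmtfld{leader}$ to their distinct \ifmtfld{id}s, which satisfies the violation formula. This shows the error region is nonempty and reachable in the buggy protocol.
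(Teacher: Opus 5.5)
Your proposal is correct and matches the paper, which gives no separate proof and simply introduces this formula as the negation of the \texttt{leader}-consensus safety proposition; your quantifier negation, $\neg(A \implies B) \equiv A \land \neg B$, and De Morgan steps make that explicit. The remark on the sentinel $-1$ and the buggy-version witness (with $N=2$ and one vote per node, matching the paper's overview) are not needed for the statement, though they are harmless.
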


The conventional approach to safety verification overapproximates the set of
\emph{forwards-reachable} states, i.e., states that are reachable from an
initial state, and shows that it excludes the unsafe states. We adopt a dual
approach: we treat the negation of the safety property as \emph{goal state} and
we overapproximate the set of \emph{backwards-reachable} states, i.e., states
from which the goal state is reachable. In the current example, the goal state
is an error state consisting of two nodes with different \ifmtfld{id}s that
disagree on the \ifmtfld{leader} field (Prop.
\ref{ex:leader-consensus-violation}). Any concrete state containing this slice
is unsafe, so proving safety amounts to refuting the reachability of this slice.
We proceed by over-approximating paths back to an initial system state, refuting
each path, or witnessing a possible violation path.

In the current example, there are an unbounded number of paths backwards from
the goal/error state, where any actor, not just the two that disagree, could
have executed \emph{any} handler previously. A naive approach would thus lead to
a substantial state-space explosion immediately. We overcome this problem by
observing that out of the infinite possible paths backwards, only a few are
\emph{causally relevant} to the goal state. By restricting the analysis to only
consider the events that are relevant to producing  the current state, we
prioritize exploring the possible causal histories of the safety violation over
arbitrary interleavings, allowing us to contain the state-space explosion while
preserving completeness. In the current example, events relevant to the goal state include those that assign to the \ifmtfld{leader}
field, namely \fmtcode{eVote} and \fmtcode{eInitial}.  Examining these events
introduces new data dependencies: their preconditions expose additional fields
and messages that must themselves be explained by earlier events. Actors
relevant to the goal state are those explicitly referenced in the safety
violation (Prop.~\ref{ex:leader-consensus-violation}).  The analysis proceeds by
iteratively expanding the set of relevant events for relevant actors, following
these causal dependencies backward until all required state is witnessed. We
analyze distributed protocols as parametrized actor systems, that is, we
assume the sets of actors (of type $\fmtcode{Node}$) are of a fixed symbolic
size (\ifmtcode{{\lstkeywordstyle sizeof}(Node)}), and the initial deployment
code (a main actor $\fmtcode{Main}$) is outside the scope of our analysis. The
lifetime of each actor is assumed to begin with a distinguished initialization
event (e.g., \fmtcode{eInitial}), which serves as the end point for the
backwards causal analysis.

\subsection{A Symbolic Abstraction of Actor Systems}

To perform the backwards causal analysis, we maintain a symbolic representation
of the relevant slice of the system. In the current example, the goal/error
state consists of two symbolic actors with their fields constrained as shown in
Prop.~\ref{ex:leader-consensus-violation}.  We construct an abstract state
representing this slice as shown below\footnote{
  \textbf{Notation.} We use ($:$) and mid ($\mid$) to merely separate the
  components of the abstract state. No type-theoretic interpretation is intended
  despite syntactic similarities to dependent type notation.
}:

\[
 \exabssysstate
  { \loctopx }
  { \abshistempx }
  {
    (\exfld{1}{id} \neq \exfld{2}{id})
    \land
    (\exfld{1}{leader} \neq \exfld{2}{leader})
    \land
    (\exfld{1}{leader} \neq -1)
    \land
    (\exfld{2}{leader} \neq -1)
  }
\]

This state contains 3 components:

\begin{enumerate}
\item \textbf{Location} - We maintain a conjunction of the current program
location of every (symbolic) actor manifested by the analysis. The special
symbol $\loctopx$ denotes an unconstrained location. Locations are modeled using
intuitionistic separation logic (cf.
\autoref{sec:distributed-control-materialization} and \autoref{sec:actorlogic}).

\item \textbf{\histcolor{History}} - We use ordered logic~\cite{orderedlogic} to
record the event/message history witnessed by the backwards analysis on the path
to manifesting the current state. The unmanifest prefix of the history that is
yet to be explored is denoted using the special symbol
$\abshistempx$~\cite{historia}. Technically, $\abshistempx$ denotes the set of
\emph{realizable} message histories~\cite{historia}, i.e., the sequence of
messages that are possible to witness in a concrete protocol execution beginning
at the initial state. 

\item \textbf{\purecolor{Constraints}} - We maintain a set of quantifier-free
logical formulas capturing the constraints on \storecolor{actor-local fields}
manifest in the current state.  \end{enumerate}

The formal development (\autoref{sec:actorlogic}) distinguishes between pure and
separation-logic constraints, and includes an additional component --- a
linear-logic~\cite{DBLP:journals/tcs/Girard87} abstraction of the network state. We elide these details here to simplify presentation.

\subsection{Goal-Directed Actor Materialization}
\label{sec:distributed-control-materialization}

We illustrate our analysis using sequence diagrams such as \autoref{fig:distributed-control-materialization}. Reading such a diagram from bottom to top traces the goal-directed backwards analysis, whereas reading it top to bottom recovers the partial violation witness that the analysis constructs.

\begin{figure}\centering\small
\begin{sequencediagram}
  \myinststyle
  \newinst{n1}{$\exnode1$}\sdbreak{n1}
  \begin{scope}[donestyle]
  \myinstdonestyle
  \newinst[1]{n3}{$\exnode3$}\sdbreak{n3}

  \storecausalthreadstyle
  \begin{messcall}{n3}{ $\exbaremsg{eVote}{}$ }{n1}
  \postlevel
  \end{messcall}
  \sdinstend{n1}{n1return}
  \sdinstend{n3}{n3return}
  \end{scope}

 \sdtriggerleft[$\exfld{1}{leader}$]{n1}{n1trigger}
  \sdstatewithin[%
    $\exabssysstate{\exloc{1}{Node}{Init}{eVote}{\ref{line:set-leader}}}{\abshistempx}{\cdots}$
  
  ]{eElectedLocation}

  \sdnodemidbelow{n1return}{n3return}{n1n3midbelow}
  \node[statestyle, xshift=1cm] at (n1n3midbelow) {%
  $\exabssysstate
  { \loctopx }
  { \abshistempx }
  {
    (\exfld{1}{id} \neq \exfld{2}{id})
    \land
    \exfld{1}{leader} \neq \exfld{2}{leader})
    \land
    (\exfld{1}{leader} \neq -1)
    \land
    (\exfld{2}{leader} \neq -1)
  }$
  };
\end{sequencediagram}
\caption{ Materializing an actor from a field dependency. Since the error state
constrains the leader field, the analysis materializes actor $\exnode{1}$ at a
program point that writes leader, namely the eVote handler at
\reftxt{location}{line:set-leader}. Activation bars represent handler instances
under consideration and are annotated on the right with the state's updated
location, and on the left with a label denoting the field or message causing the
update. Actors appear across the top of the diagram, with greyed actors not yet
materialized.  Message sends are shown as arrows. }
\label{fig:distributed-control-materialization}
\end{figure}

The backwards analysis begins from the goal state, whose constraints reference
two error-relevant fields, $\exfld{1}{leader}$ and $\exfld{2}{leader}$. To
explain how these fields came to hold their values, the analysis locates the
program points that assign \ifmtfld{leader} ---
\reftxt{location}{line:set-leader} in \fmtcode{eVote} and
\reftxt{location}{pt:eInit-pre-leader-assign} in \fmtcode{eInit}---and
hypothesizes that each of $\exnode{1}$ and $\exnode{2}$ last executed one of
them. Branching over both choices for both actors yields four states. The two
states that place an actor in \fmtcode{eInit} are immediately infeasible:
\fmtcode{eInit} sets \ifmtfld{leader} to $-1$, contradicting the goal constraint
\purecolor{$(-1 \neq -1)$}. Only the two states in which both actors are in
\fmtcode{eVote} survive, and as they are symmetric we follow the path for
$\exnode{1}$ and elide the other, since any witness must eventually account for
both assignments.

A distributed protocol is composed of an unbounded number of actors, so a system
state must, in principle, account for the local state and control location of
infinitely many actors at once --- an obstacle to any finite analysis. We
overcome it through \emph{materialization}, echoing the causal-relevance
argument above: just as only a few \emph{events} bear on a violation, only a few
\emph{actors} need be reasoned about explicitly to refute it. The analysis
therefore leaves the actor population unmanifest, as an anonymous, unbounded
``blob'' about which nothing is assumed, and \emph{materializes} an actor out of
this blob one at a time, and only when an outstanding obligation demands it.
The crucial point is that finitely many manifest actors suffice to refute a
violation even though protocol executions span unboundedly many actors. A
refutation inspects only a \emph{suffix} of an execution that is concerned with
the activity of a finite set of manifest actors while leaving the rest of
unmanifest actors unconstrained. The suffix itself need not be finite --- here
it happens to be, but in general an unbounded suffix over finitely many manifest
actors is captured by the notion of \emph{state entailment} we develop in
\autoref{sec:distributed-control-location-materialization}.

In \autoref{fig:distributed-control-materialization}, we materialize
$\exnode{1}$: we fix its program location to
$\exloc{1}{Node}{Init}{eVote}{\ref{line:set-leader}}$, committing to the
hypothesis that $\exnode{1}$ reached this point of the \fmtcode{eVote} handler.
Constraints are unchanged, hence elided with ellipsis. 
Actor $\exnode{1}$'s activation bar \inlineactbar[\storecausalthreadstyle] is
labeled with \colorbox{lightgray}{$\exfld{1}{leader}$}, highlighting cause for this materialization,
namely a write to this field.  Stepping backwards through $\fmtcode{eVote}$, the
analysis learns \purecolor{$\exfld{1}{leader} = \exfld{1}{id}$} from the
assignment \ifmtcode{leader = id}, and \purecolor{$\exfld{1}{forMe} + 1 \ge
(\exsyssize/2)$} from the guarding conditional, where \purecolor{$\exsyssize$}
denotes the number of actors of type $\fmtcode{Node}$ (i.e.,
\ifmtcode{{\lstkeywordstyle sizeof}(Node)}).  Reaching the top of the handler,
the analysis incurs a new obligation: for $\exnode{1}$ to have executed
\fmtcode{eVote}, it must have received an \histcolor{\fmtcode{eVote}} message
from some actor, which we name $\exnode{3}$.  Because the analysis has not yet
materialized $\exnode{3}$ or witnessed the corresponding send, we render both
\donetransparent{$\exnode{3}$} and its message partially transparent in the
diagram.

\subsection{Causal Message Sends}
\label{sec:actor-causality-reduction}

Having reached the top of the \fmtcode{eVote} handler, the analysis has fully
accounted for this activation of actor $\exnode{1}$ and drops the constraint on
its location (\autoref{fig:actor-causality-reduction}): the event preceding this
handler is not yet determined and $\exnode{1}$'s location is therefore
unconstrained. What remains is the obligation to explain the triggering message,
which the analysis records by extending the witness history with\footnote{
  \textbf{Notation.} $\exmsg{3}{1}{eVote}{ }$ denotes a message of type
  $\exfmtcode{eVote}$ sent from $\exnode{3}$ to $\exnode{1}$ with empty payload.
  Given a message $m$, $\abshisthyp{m}{\abshistempx}$ denotes all realizable
  message histories that can be extended with $m$ and still remain realizable
  (e.g., they still remain affine).
  The operator $\abshisthyp{}{}$ is right-associative.
} $\exmsg{3}{1}{eVote}{ }$.

\begin{figure}\centering\small
\begin{sequencediagram}
  \myinststyle
  \newinst{main}{$\mainmach$}\sdbreak{main}
  \begin{scope}[donestyle]
  \myinstdonestyle
  \newinst{n1}{$\exnode1$}\sdbreak{n1}
  \end{scope}
  \myinststyle
  \newinst{n3}{$\exnode3$}\sdbreak{n3}
  \sdinstend{n1}{n1top}
  \sdinstend{n3}{n3top}
  \postlevel

  \sdnodemidbelow{n1top}{n3top}{n1n3midtop}
  \node[statestyle, xshift=3cm] at (n1n3midtop) {%
  $\exabssysstate
  { \loctopx }
  { \abshisthyp{ (\exspecialmsg{\mainmach}{3}{eInit}{\purecolor{-}}) }{\abshisthyp{ (\exmsg{3}{1}{eVote}{ }) }{\abshistempx} }}
  {
    (\exfld{1}{id} \neq \exfld{2}{leader})
    \land
    \cdots
    \land
    (\exfld{1}{forMe} + 1 \ge \exsyssize / 2)
  }$
  };

  \msgcausalthreadstyle
  \begin{messcall}{main}{ $\exbaremsg{eInit}{\purecolor{-}}$ }{n3}
  \postlevel
  \end{messcall}
  \sdinstend{n1}{n1returnmain}
  \sdinstend{n3}{n3returnmain}
  \postlevel
  \sdtriggerleft[$\exmsg{3}{1}{eVote}{}$]{n3}{n3trigger}
  \sdstatewithin[%
    $\exabssysstate{ \exloc{3}{Node}{Init}{eInit}{\ref{pt:eInit-post-eVote-send}} }{ \abshisthyp{ (\exmsg{3}{1}{eVote}{} ) }{\abshistempx} }{ \cdots }$
  ]{eInitLocation}

  \postlevel
  \postlevel
  \postlevel

  \sdnodemidbelow{n1return}{n3return}{n1n3midbelow}
  \node[statestyle, xshift=3cm,yshift=-.5cm] at (n1n3midbelow) {%
  $\exabssysstate
  { \loctopx }
  { \abshisthyp{ (\exmsg{3}{1}{eVote}{} ) }{\abshistempx} }
  {
    (\exfld{1}{id} \neq \exfld{2}{leader})
    \land
    (\exfld{1}{id} \neq -1)
    \land
    (\exfld{2}{leader} \neq -1)
     \land
    (\exfld{1}{forMe} + 1 \ge \exsyssize / 2)
  }$
  };

  \storecausaldonethreadstyle
  \prelevel
  \prelevel
  \prelevel %
  \begin{messcall}{n3}{ $\exbaremsg{eVote}{}$}{n1}
  \postlevel
  \end{messcall}

  \begin{scope}[donestyle]
    \sdtriggerleft[$\exfld{1}{leader}$]{n1}{n1trigger}
    \sdstatewithin[%
      
  $\exabssysstate{\exloc{1}{Node}{Init}{eVote}{\ref{line:set-leader}}}{\abshistempx}{\cdots}$
  
  ]{eVoteLocation}
  \end{scope}
\end{sequencediagram}
\caption{Upwards continuation of
\autoref{fig:distributed-control-materialization} showing the materialization of
the necessary message-sends to witness the message
\histcolor{$\exmsg{3}{1}{eVote}{}$}. The analysis prioritizes
\emph{actor-causality reduction}, meaning that it first derives the sequence of
messages in the abstract message history \histcolor{$\abshisthyp{
(\exspecialmsg{\mainmach}{3}{eInit}{\purecolor{-}}) }{
\abshisthyp{(\exmsg{3}{1}{eVote}{ })}{\abshistempx} }$} for witnessing the
message \histcolor{$\exmsg{3}{1}{eVote}{}$} before considering other branches.
}
\vspace*{-0.2in}
\label{fig:actor-causality-reduction}
\end{figure}

Discharging this obligation introduces new goal constraints---on the field $\exfld{1}{forMe}$ and on the origin of the recorded message, which some actor must have sent. The analysis therefore now considers not only assignments to error-relevant fields but also the \fmtcode{send} commands that could have produced the messages the history must witness. This is where confining exploration to \emph{causally relevant} events pays off: rather than interleaving arbitrary handlers, the analysis follows the message backwards to its sender. In \autoref{fig:actor-causality-reduction}, again read from the bottom, we choose to witness this send first. The analysis materializes a fresh actor $\exnode{3}$ at $\exloc{3}{Node}{Init}{eInit}{\ref{pt:eInit-post-eVote-send}}$, the \lstinline!send n,eVote! command in $\fmtcode{eInit}$. We draw its activation bar in $\histcolor{blue}$ to signal that $\exnode{3}$ was materialized to explain a message in the history rather than a field constraint.

Under our parameterized-system model, the \fmtcode{eInit} message that started $\exnode{3}$, \histcolor{$\exspecialmsg{\mainmach}{3}{eInit}{\purecolor{-}}$}, originates from the main $\mainmach$ machine outside the system under analysis, so no further send need be witnessed along this path. Reaching the top of the handler, the analysis appends this message to the history and unconstrains $\exnode{3}$'s location. Although its send cannot be witnessed, the message must still be recorded, since any history in which $\exnode{3}$ had received another message beforehand would be unrealizable.

\subsection{Witnessing a Violation}

With $\exnode{1}$ again at an idle location, the analysis selects a further obligation to discharge. In \autoref{fig:message-segments}, we continue by witnessing the field \fmtcode{forMe}, constrained by \smash{\purecolor{$(\underline{\exfld{1}{forMe}} + 1 \ge \exsyssize/2)$}}. Two handlers assign this field, giving two ways to explain its value:
\begin{inparaenum}[(1)]
  \item\label{enum:forMe-eInit} the \ifmtcode{forMe = 0} initialization in the $\fmtcode{eInit}$ handler (materializing $\exnode{1}$ at $\exloc{1}{Node}{Init}{eInit}{\ref{pt:eInit-post-forMe-assign}}$); or
  \item\label{enum:forMe-eVote} the \ifmtcode{forMe = forMe + 1} increment in the $\fmtcode{eVote}$ handler (materializing $\exnode{1}$ at $\exloc{1}{Node}{Init}{eVote}{\ref{pt:eVote-post-forMe-assign}}$).
\end{inparaenum}
Assume the system has at least two actors, \purecolor{$\exsyssize \geq 2$}, as
the goal state with two distinct leaders is otherwise trivially unreachable.
Assuming the analysis follows option~\eqref{enum:forMe-eInit}, it substitutes
\ifmtcode{forMe = 0} and updates the constraint to \purecolor{$(1 \ge
\exsyssize/2) \land (\exsyssize \geq 2)$} at the point just before the
initialization
($\exloc{1}{Node}{Init}{eInit}{\ref{pt:eInit-pre-forMe-assign}}$). This
constraint is satisfiable, so the path survives: the analysis reaches the top of
\fmtcode{eInit}, then discharges the symmetric obligation arising from
$\exfld{2}{leader}$, which likewise yields no contradiction. With no fields or
messages left to witness, the analysis returns the accumulated history and
constraints as a \emph{potential safety violation}---exactly the bug present in
the buggy version of \autoref{fig:eVote}.

\subsection{Message Segments Summarize Unbounded Sequences of Messages}
\label{sec:message-segments}

Now consider the fixed version of the protocol in \autoref{fig:eVote}, in which
the guard $\ge$ is replaced with $>$. The analysis proceeds exactly as before up
to \autoref{fig:actor-causality-reduction}. On reaching an idle location, it
again chooses between the two assignments to \fmtcode{forMe}. Choosing
\fmtcode{eInit} (option~\eqref{enum:forMe-eInit}) now yields the contradictory
constraint \purecolor{$(1 > \exsyssize/2) \land (\exsyssize \geq 2)$}, so that
path is pruned.

The remaining option is~\eqref{enum:forMe-eVote}, witnessing another
\fmtcode{eVote} increment. This updates the constraint to
\smash{\purecolor{$(\underline{\exfld{1}{forMe}} + 1 + 1 > \exsyssize/2)$}} but
once more leaves \fmtcode{forMe} unconstrained, so witnessing it again simply
replays the same step. The backwards analysis thus falls into an unbounded cycle
in the message history of the form shown below:

\begin{example}[A cycling message history, repeatedly visiting location $\exloc{1}{Node}{Init}{eVote}{\ref{pt:eVote-post-forMe-assign}}$]\label{ex:cycling-message-history}\small
\[
\histcolor{
\abshisthyp{\cdots}{
  \abshisthyp{ (\exmsg{43}{1}{eVote}{ }) }
  {
    \abshisthyp{ (\exmsg{42}{1}{eVote}{ }) }
    {
      \abshisthyp{ (\exmsg{41}{1}{eVote}{ }) }
      {
        \abshisthyp{\cdots}{\abshistempx}
      }
    }
  }
}
}
\]
\normalsize
\end{example}

This cycle is the crux of what makes automated verification of quorum-based
protocols hard. Intuitively, the protocol accumulates votes through repeated
executions of the \fmtcode{eVote} handler: each invocation increments the
actor's \fmtcode{forMe} field until a quorum is reached and the actor declares
itself elected. The backwards analysis unrolls these executions one vote at a
time and never terminates. To converge, it must eventually \emph{widen} this
unbounded sequence of handler executions into a finite summary.

Summarizing this repetition is subtler than folding a sequential loop. Each
iteration here is a distinct message-handler invocation, and in an asynchronous
system such invocations may be arbitrarily interleaved with other handlers.
Before the repeated \fmtcode{eVote} handlers can be treated as a sequential
loop, we must therefore show that they can be reordered to run consecutively. We
establish this using Mazurkiewicz trace equivalence~\cite{mazurkiewicz}:
whenever the cycling handler is independent of every handler that might
interleave with it, an equivalent execution exists in which its invocations
occur consecutively. In our motivating example the only other relevant handler
is \fmtcode{eInit}, which cannot interleave with \fmtcode{eVote}, so the
condition holds and we may consider the trace in which the unbounded
\fmtcode{eVote}s appear back to back.

\begin{figure}\centering\small
\begin{sequencediagram}
  \myinststyle
  \newinst{main}{$\mainmach$}\sdbreak{main}
  \begin{scope}[donestyle]
  \myinstdonestyle
  \newinst{n1}{$\exnode1$}\sdbreak{n1}
  \end{scope}
  \myinststyle
  \newinst{n2}{$\exnode2$}\sdbreak{n2}
  \newinst{n3}{$\exnode3$}\sdbreak{n3}
  \myinstdonestyle
  \newinst{sys}{$ \exsummarymach$}
  \myinststyle
  \begin{scope}[donestyle]
  \sdbreak{sys} 
  \end{scope}

  \sdinstend{n1}{n1top}
  \sdinstend{n3}{n3top}
  \sdnodemidbelow{n1top}{n3top}{n1n3midtop}

  \storecausalthreadstyle
  \begin{messcall}{main}{ $\exbaremsg{eInit}{\purecolor{-}}$ }{n1}
  \postlevel
  \end{messcall}
  \sdinstend{n1}{n1returninit}
  \sdinstend{n3}{n3returninit}
  \sdnodemidbelow{n1returninit}{n3returninit}{n1n3midreturninit}

  \sdtriggerleft[$\exfld{1}{forMe}$]{n1}{n1inittrigger}
  \sdstatewithin[%

    $\exabssysstatetwo
  { \exloc{1}{Node}{Init}{eInit}{\ref{pt:eInit-post-forMe-assign}} }
  {
   \cdots
  }
  { \cdots
  }$
  ]{eVoteLocation}

  \postlevel
  \node[statestyle, xshift=3cm, yshift=-2ex] at (n1n3midreturninit) {%
  $\exabssysstatetwo
  { \loctopx }
  {
    \abshisthyp
    {(\exspecialmsg[\exsegparam{1}]{ \exsummarymach}{1}{eVote}{ })}
    { \abshisthyp{ \cdots }{\abshisthyp{ (\exmsg{3}{1}{eVote}{ }) }{ \abshisthyp{\cdots}{\abshistempx} }} }
  }
  {
    (\exfld{1}{id} \neq \exfld{2}{leader})
    \land
    \cdots
    \land
    (\exfld{1}{forMe} + \exsegparam{1} + 1 > \exsyssize/2)
    \land
    (1 + \exsegparam{1} \leq \exsyssize)
  }$
  };

  \postlevel
  \summarycausalthreadstyle
  \begin{messcall}{sys}{ $\exbaremsg[\exsegparam{1}]{eVote}{ }$ }{n1}
  \postlevel
  \end{messcall}
  \sdinstend{n1}{n1returnvote}
  \sdinstend{n3}{n3returnvote}
  \sdnodemidbelow{n1returnvote}{n3returnvote}{n1n3midreturnvote}
\sdtriggerleft[$\exfld{1}{forMe}$]{n1}{n1trigger}
  \sdstatewithin[%

    $\exabssysstate
  { \exloc{1}{Node}{Init}{eVote}{\ref{pt:eVote-post-forMe-assign}}}
  { \abshisthyp{ (\exspecialmsg{\mainmach}{3}{eInit}{\purecolor{-}}) }{\abshisthyp{ (\exmsg{3}{1}{eVote}{ }) }{ \abshisthyp{(\cdots)}{\abshistempx} }} }
  {\cdots
  }$
  ]{eVoteLocation}

  \begin{scope}[donestyle]
  \node[statestyle, xshift=3cm] at (n1n3midreturnvote) {%
  $\exabssysstate
  { \loctopx }
  { \abshisthyp{ (\exspecialmsg{\mainmach}{3}{eInit}{\purecolor{-}}) }{\abshisthyp{ (\exmsg{3}{1}{eVote}{ }) }{ \abshisthyp{(\cdots)}{\abshistempx} }} }
  {
    (\exfld{1}{id} \neq \exfld{2}{leader})
    \land
    \cdots
    \land
    (\exfld{1}{forMe} + 1 > \exsyssize/2)
  }$
  };
  \end{scope}
\end{sequencediagram}
\caption{Handling the message segment \histcolor{$\exspecialmsg[\exsegparam{1}]{ \exsummarymach}{1}{eVote}{ }$} summarizes the handling of a sequence of $\exsegparam{1}$ instances of the message \histcolor{$\exspecialmsg{-}{1}{eVote}{ }$} for arbitrary sending actors. The special sender $\exsummarymach$ denotes an abstract set of actors of type \fmtspecialvar{Node}, representing the set of senders of the messages summarized by the segment.
}
\label{fig:message-segments}
\end{figure}

This repetition motivates our novel notion of a \emph{message segment}: a single
history element that summarizes the handling of an unbounded sequence of like
messages. In our motivating example, the analysis replaces the cycle with the
segment \histcolor{$\exspecialmsg[\exsegparam{1}]{ \exsummarymach}{1}{eVote}{
}$}, summarizing all the votes $\exnode{1}$ receives from an abstract summary
actor \donetransparent{$\exsummarymach$}. This summary actor represents the set
of senders of messages summarized in the segment. Its length $\exsegparam{1}$ is
a symbolic parameter denoting the number of summarized messages, which lets the
analysis learn the closed-form constraint \purecolor{$(\exfld{1}{forMe} +
\exsegparam{1} + 1 > \exsyssize/2)$} in place of the divergent unrolling.
\autoref{fig:message-segments} shows the effect of this summarization; we draw
the segment's activation bar \inlineactbar[\summarycausalthreadstyle] with a
gradient from \purecolor{green} to \histcolor{blue}, reflecting that a segment
combines both forms of causality discussed above.

\subparagraph{Affine protocols, abstract message histories, and system-size lower bounds.}

On its own, a segment's length bounds from below the number of messages an error requires. The affine restriction turns this message count into an actor count, relating $\exsegparam{1}$ to \purecolor{$\exsyssize$}.
Because affine communication permits each actor to send $\exnode{1}$ at most one \fmtcode{eVote}, the senders of the $\exsegparam{1}$ summarized \histcolor{$\exspecialmsg{-}{1}{eVote}{ }$} messages must be $\exsegparam{1}$ distinct actors, yielding \purecolor{$\exsegparam{1} \leq \exsyssize$}. Moreover, the actor $\exnode{3}$ that sent the separately materialized \histcolor{$\exmsg{3}{1}{eVote}{ }$} must differ from all $\exsegparam{1}$ of these senders, strengthening the bound to \purecolor{$1 + \exsegparam{1} \leq \exsyssize$}.

\subsection{Verifying the Fix}

Completing the symmetric analysis for $\exnode{2}$ materializes a second
\fmtcode{eVote} segment, of length $\exsegparam{2}$, after which the affine
reasoning leaves the analysis with a constraint of the following form:

\[
\purecolordecl
    \ldots
    \land
    \underline{(\exsegparam{1} + 1 > \exsyssize / 2)
    \land
    (\exsegparam{2} + 1 > \exsyssize / 2)
    \land
    (1 + \exsegparam{1} + 1 + \exsegparam{2} \leq \exsyssize)}
    \]

    The two underlined quorum requirements---one per candidate leader---together force strictly more than \purecolor{$\exsyssize$} participating actors, whereas the affine bound \purecolor{$1 + \exsegparam{1} + 1 + \exsegparam{2} \leq \exsyssize$} admits at most \purecolor{$\exsyssize$}. The conjunction is therefore unsatisfiable \emph{for every system size}, so the analysis refutes this witness the moment it produces the second segment. As every backwards path is likewise refuted, the goal state is \emph{unreachable on all backwards paths}, establishing the safety property that two actors never disagree on the elected leader.

    It is worth noting how a single refutation covers every configuration of the protocol. Any violating system must contain the two disagreeing actors $\exnode{1}$ and $\exnode{2}$, together with the actors that voted them into leadership, which the analysis denotes $\exnode{3}$ and $\exnode{4}$. The constraints never require $\exnode{3}$ and $\exnode{4}$ to differ from $\exnode{1}$ and $\exnode{2}$, though $\exnode{3}$ and $\exnode{4}$ cannot themselves coincide; every remaining voter is subsumed by the summary segments and their symbolic lengths $\exsegparam{1}$ and $\exsegparam{2}$. The refutation is thus parametric in the number of actors, discharging all system sizes at once.

\section{Goal-Directed Actor Materialization}\label{sec:actorlogic}

In this section, we formalize our approach with an abstract machine model for actor systems~\cite{agha}. Our model is unique in exposing fine-grained distributed control-location maps as a separation logic formula (\autoref{sec:concrete-semantics}).
We then describe a backwards, goal-directed symbolic execution semantics that makes use of \emph{actor materialization} to reason about parametric actor systems (\autoref{sec:abstract-semantics}).
A key distinction between this and other automated approaches to distributed systems verification is the ability to perform local reasoning via separation logic.

\subsection{An Abstract Machine for Actor Systems}
\label{sec:concrete-semantics}

\subsubsection{Syntax.}\label{sec:dcsnh-syntax}%
We first define the syntax and operational state of our abstract machine. A distributed protocol $\topsys$ is composed of a set of actor classes each defining the internal logic of a specific machine type $\mach$ in the protocol.
A \emph{message handler} ($\hand \bnfdef \mkonhand{\evnt}{ \seq{\var} }{\proc}$) is a handler for event-type identifier $\evnt$ with formal parameters $\seq{\var}$ and a procedure body $\proc$.\footnote{
  Notation: We write $\seq{\square}$ to denote a sequence. We use
  script letters (e.g., $\mach$, $\evnt$) for identifiers of state-machine
  components and italic words (e.g., $\actor$, $\field$,
  $\hand$, $\proc$) for syntactic components. For convenience, we treat the
  protocol specification $\pi$ as a map: $\maplookup{\topsys}{\mach}$ gets the actor class $\actor$ corresponding to machine-type id $\mach$ (and similarly for $\maplookup{\actor}{\evnt}$).}
A procedure
is a set of transitions $\trans$, where each transition ($\trans \bnfdef
\mktrans{\prelbl\loc}{\instr}{\postlbl\loc} \bnfalt
\mktrans{\prelbl\loc}{\cmd}{\postlbl\loc}$) is a control-flow edge from a
pre-location $\prelbl\loc$ to post-location $\postlbl\loc$, labeled with
either an imperative instruction $\instr$ or an actor command $\cmd$.
A procedure has two special control-flow
locations $\entryloc$  and $\exitloc$ ($\loc
\bnfdef \entryloc \bnfalt \exitloc \bnfalt \cdots$).
Actor commands
$\cmd$
include sending a message to another actor with $\mksend{\var}{\evnt}{\seq{\varalt}}$,
where variable $\var$ is bound to the recipient address (i.e., a unique identifier for the recipient actor), $\evnt$ is an event-type identifier, and variables $\seq{\varalt}$ are bound to its payload.
A figure summarizing the syntax of the language can be found in \autoref{sec:syntax-appendix}.

The semantics for actor commands follows what would be expected for  asynchronous distributed systems (e.g., \cite{plang}).
A $\sendkw$ command enqueues a message to the network, which is a bag of messages that have been sent but not yet handled by the recipient (cf. \autoref{sec:transition-system-semantics}).
Unlike the standard actor calculus~\cite{agha}, we have no $\mknew{\mach}$ command to spawn a new actor of machine type $\mach$ --- we consider parametrized actor systems where only the main machine can spawn actors to set up the system.
As such, there is a special $\mainmach$ machine-type identifier for the main machine that starts the system ($\mach \bnfdef \mainmach \bnfalt \cdots$).

There are also two special event-type identifiers ($\evnt \bnfdef \initevnt
\bnfalt \idleevnt \bnfalt \cdots$): $\initevnt$ for a special event type
that can only be sent by the main machine to start the protocol execution
and $\idleevnt$ for a special idle event type with no handler. An
$\initevnt$ message can have different payloads but can only be received
once by each actor and before receiving any other messages.
The $\idleevnt$ event type is used to model actors that are idling between handling messages. Conceptually, a machine is idle when it reaches $\exitloc$ for any handler. 
The quiescent point between rounds of affine communication is essentially where all actors except the main machine are idling.

\begin{figure}\small
\begin{mathpar}
\text{distributed states} \quad \sysstate \bnfdef \mksysstate{\sysloc}{\sysstore}{\net}{\hist}

\text{control-location maps} \quad \sysloc \bnfdef \ctxnil \bnfalt \ctxcons{\sysloc}{ \mapping{\addr}{ \mkmachsttevntloc{\mach}{\stt}{\evnt}{\loc} } }

\text{distributed stores} \quad \sysstorecolored \bnfdef \ctxnil \bnfalt \ctxcons{\sysstore}{ \mapping{\addr}{\actorstore} }

\text{actor addresses} \quad \addr

\\
\text{network} \quad \netcolored \bnfdef \ctxnil \bnfalt \ctxcons{\net}{\msg}

\text{message histories} \quad \histcolored \bnfdef \histemp \bnfalt \histcons{\hist}{\msg}

\text{messages} \quad \msg \bnfdef \mkmsg{\fromlbl\addr}{\tolbl\addr}{\evnt}{ \seq{\val} }

\text{actor stores} \quad \storecolor{\actorstore} \bnfdef \mkactorstore{\fieldslbl\store}{\varslbl\store}

\text{stores} \quad \storecolor{\store} \bnfdef \ctxnil \bnfalt \ctxcons{\store}{ \mapping{\var}{\val} }
\end{mathpar}
\caption{Semantic domains for the \dCSNH{} machine state. Following classical abstract machines, we name our state for its components: distributed Control, Stores, Network, and History.}
\label{fig:semantic-domains}
\end{figure}

\subsubsection{Semantic Domains.}\label{sec:concrete-semantics-domains}
We define the program state as an abstract machine model, denoted \dCSNH{}, and give the semantic domains in \autoref{fig:semantic-domains}. The name alludes to classical abstract machines such as CESK~\cite{vanaam}, reflecting the similar state representation.\footnote{To allude to classical abstract machines, we call our model \dCSNH{} (the \underline{d}istributed \underline{c}ontrol-\underline{s}tore-\underline{n}etwork-\underline{h}istory machine).}
The distributed state $\sysstate$
is defined as a 4-tuple
$\mksysstate{\sysloc}{\sysstore}{\net}{\hist}$ for the control-location map $\sysloc$, the distributed store \storecolor{$\sysstore$}, the
network \netcolor{$\net$}, and the message history \histcolor{$\hist$}.
The \emph{control-location map} $\sysloc \bnfdef \ctxnil \bnfalt \ctxcons{\sysloc}{ \mapping{\addr}{ \mkmachsttevntloc{\mach}{\stt}{\evnt}{\loc} } }$ is a map from each actor address
$\addr$ to its control location, a 3-tuple
$\mkmachsttevntloc{\mach}{\stt}{\evnt}{\loc}$ with the actor's machine type
$\mach$, current event-type being handled $\evnt$, and the program location. The \emph{distributed store}
$\storecolor{\sysstore} \bnfdef \ctxnil \bnfalt \ctxcons{\sysstore}{
\mapping{\addr}{\actorstore} }$ maps actor addresses to their local stores tracking fields and their values. The \emph{network} $\netcolor{\net} \bnfdef \ctxnil \bnfalt
\ctxcons{\net}{\msg}$ is a set of messages $\msg$ that have been sent but
not yet handled by the receiver, while the \emph{message history}
$\histcolor{\hist} \bnfdef \histemp \bnfalt \histcons{\hist}{\msg}$ is a
trace of messages $\msg$ that have been sent. A message $\msg \bnfdef
\mkmsg{\fromlbl\addr}{\tolbl\addr}{\evnt}{ \seq{\val} }$ is a tuple with
the sender address $\fromlbl\addr$, the recipient address $\tolbl\addr$,
the event-type identifier $\evnt$ for the message handler to be invoked,
and a sequence of values $\seq{\val}$ for the message payload.

\subsubsection{Transition-System Semantics.}
\label{sec:transition-system-semantics}

\newcommand{\jhyp}[2][]{#1 \vdash #2}

\newcommand{\jaffine}[1]{#1\;\mathsf{affine}}
\newcommand{\jpaffine}[1]{#1\;\mathsf{paffine}}
\newcommand{\jhyppaffine}[1]{\jhyp{\jpaffine{#1}}}

\newcommand{\paffinefun}[1]{\operatorname{paffine}(#1)}

\newcommand{\steparrow}{\longrightarrow}
\newcommand{\jstep}[3][]{#2 \steparrow_{#1} #3}

\newcommand{\backsteparrow}{\Longleftarrow}
\newcommand{\jbackstep}[3][]{#2 \backsteparrow_{#1} #3}

\newcommand{\causalbacksteparrow}{\leftsquigarrow}
\newcommand{\jcausalbackstep}[3][]{#2 \causalbacksteparrow_{#1} #3}

\newcommand{\evalarrow}{\Downarrow}
\newcommand{\jeval}[3]{\tuple{#1,#2} \evalarrow #3}

\newcommand{\jincluded}[2]{#1 \sqsubseteq #2}

We define an operational semantics for \dCSNH{} executing a protocol $\topsys$
using small-step transitions of the form \smash{\fbox{$\jstep[\topsys]{\sysstate}{\sysstate'}$}}.
When the protocol $\topsys$ is clear from the context, we elide it and
simply write $\jstep{\sysstate}{\sysstate'}$. Following typical asynchronous actor models, the semantics handles actor-internal transitions, and the sending and receiving of messages.
 We assume the internal transition language $\instr$ to be parametric to the semantics, so long as the language cannot affect global state. Thus, an internal transition in the system updates an actor's control location, and possibly store, but no other state components.
\begin{mathpar}\small
\infer[Step-Instr]{
  \trans\colon \mktrans{\prelbl\loc}{\instr}{\postlbl\loc} \in \maplookup{\maplookup{\topsys}{\mach}}{\evnt}
  \and
  \jstep[\trans]{ \mkpstate{\prelbl\loc}{\actorstore} }{ \mkpstate{\postlbl\loc}{\actorstore'} }
}{
  \jstep[\topsys]{
    \mksysstate
      { (\ctxcons{\sysloc}{ \mapping{\addr}{ \mkmachsttevntloc{\mach}{\stt}{\evnt}{\prelbl\loc} } }) }
      { (\ctxcons{\sysstore}{\mapping{\addr}{\actorstore}}) }
      { \net }{\hist}
  }{
    \mksysstate
      { (\ctxcons{\sysloc}{ \mapping{\addr}{ \mkmachsttevntloc{\mach}{\stt}{\evnt}{\postlbl\loc} } }) }
      { (\ctxcons{\sysstore}{\mapping{\addr}{\actorstore'}}) }
      { \net }{\hist}
  }
}
\end{mathpar}

Unlike internal instructions, executing a $\sendkw$ command has visible effects on the network. If an actor $\fromlbl\addr$'s control location is the pre-location $\mkmachsttevntloc{\mach}{\stt}{\evnt}{\prelbl\loc}$ of a
$\sendkw$ command for event $\evnt'$, it steps to the post-location
$\postlbl\loc$ and adds a message $\msg\colon \left(\mkmsg{\fromlbl\addr}{
\tolbl\addr }{\evnt'}{ \seq{ \val } }\right)$ to the network buffer
\netcolor{$(\ctxcons{\net}{\msg})$}.
The recipient address $\tolbl\addr$ and payload $\seq{\val}$ arguments are retrieved from the sending actor's store $\actorstore$:
\begin{mathpar}\small
\infer[Step-Send]{
  \mktrans{\prelbl\loc}{ \mksend{\var}{\evnt'}{ \seq{\varalt} } }{\postlbl\loc}
  \in \maplookup{\maplookup{\topsys}{\mach}}{\evnt}
  \\
  \mapping{\fromlbl\addr}{\actorstore} \in \sysstore
  \and
  \msg = \left(\mkmsg{\fromlbl\addr}{ \tolbl\addr }{\evnt'}{ \seq{ \val } }\right)
  \and
  \tolbl\addr = \maplookup{\actorstore}{\var}
  \and
  \seq{ \val = \maplookup{\actorstore}{\varalt} }
}{
  \jstep{
    \mksysstate
      { (\ctxcons{\sysloc}{ \mapping{\fromlbl\addr}{ \mkmachsttevntloc{\mach}{\stt}{\evnt}{\prelbl\loc} } }) }
      { \sysstore }
      { \net }{\hist}
  }{
    \mksysstate
      { (\ctxcons{\sysloc}{ \mapping{\fromlbl\addr}{ \mkmachsttevntloc{\mach}{\stt}{\evnt}{\postlbl\loc} } }) }
      { \sysstore }
      { (\ctxcons{\net}{\msg}) }{\hist}
  }
}
\end{mathpar}
Messages leave the network when they are \emph{handled} by the receiving actor $\tolbl\addr$. In rule \TirName{Step-Handle}, we state that if an actor $\tolbl\addr$ is
not currently executing a handler (i.e., is at the exit location $\mkmachsttevntloc{\mach}{\stt}{\evnt}{\exitloc}$
of some procedure $\proc$ of event type $\evnt$), and a message $\msg$ is addressed to it
$\left(\mkmsg{\fromlbl\addr}{\tolbl\addr}{\evnt'}{ \seq{\val} }\right)$
which it can handle ($\mkonhand{\evnt'}{ \seq{\var} }{\proc} \in
\maplookup{\topsys}{\mach}$), then it handles the message by removing it
from the network $\netcolored$ and stepping to the entry location
$\entryloc$ of the handler procedure $\proc'$ of event type $\evnt'$\footnote{When we write a control location $\mkmachsttevntloc{\mach}{\stt}{\evnt}{\loc}$, we implicitly assume that it is a valid location in the distributed protocol $\topsys$ (i.e., $\loc$ is a location in the procedure $\maplookup{\maplookup{\topsys}{\mach}}{\evnt}$), though we make this well-formedness condition explicit in rule \TirName{Step-Handle} for clarity.}:
\begin{mathpar}\small
\infer[Step-Handle]{
  \mkonhand{\evnt}{ \seq{\varalt} }{\proc} \in \maplookup{\topsys}{\mach}
  \and
  \msg = \left(\mkmsg{\fromlbl\addr}{\tolbl\addr}{\evnt'}{ \seq{\val} }\right)
  \and
  \mkonhand{\evnt'}{ \seq{\var} }{\proc'} \in \maplookup{\topsys}{\mach}
  \\
  \sysstore =
      \left(\ctxcons{\sysstore_0}{ \mapping{\tolbl\addr}{ \mkactorstore{\fieldslbl\store}{\varslbl\store} } }\right)
  \and
  \sysstore' = \left(\ctxcons{\sysstore_0}{ \mapping{\tolbl\addr}{ \mkactorstore{\fieldslbl\store}{ (\seq{\mapping{\var}{\val}}) } } }\right)
  \and
  \jpaffine{ (\histcons{\hist}{\msg}) }
}{
  \jstep{
    \mksysstate
      { (\ctxcons{\sysloc}{ \mapping{\tolbl\addr}{ \mkmachsttevntloc{\mach}{\stt}{\evnt}{\exitloc} } }) }
      { \sysstore }
      { (\ctxcons{\net}{\msg}) }{\hist}
  }{
    \mksysstate
      { (\ctxcons{\sysloc}{ \mapping{\tolbl\addr}{ \mkmachsttevntloc{\mach}{\stt}{\evnt'}{\entryloc} } })}
      {\sysstore'}
      { \net }{ (\histcons{\hist}{\msg}) }
  }
}
\end{mathpar}

On stepping, we initialize the handler-local store with the values of the
message payload $\seq{\val}$, binding the formal parameters $\seq{\var}$ of
the handler to these values ($\seq{\mapping{\var}{\val}}$). On handling a
message, we also add the message $\msg$ to the message history $\histcolor{
\histcons{\hist}{\msg} }$ to keep track of the order in which messages were
handled.

We require additionally the extended history to be realizable under an affine protocol, denoted as $\jpaffine{ (\histcons{\hist}{\msg})}$:
\begin{inparaenum}
  \item it must not have messages sent by an actor before it has received the special $\initevnt$ message from the main machine $\mainmach$ and
  \item it must not contain a message that has been handled more than once by the same recipient nor two messages with the same event type $\evnt$ with different payloads from the same sender.
\end{inparaenum}
This condition is crucial as it allows the derivation of constraints on summarized message histories (see \autoref{sec:backwards-symbolic-execution-semantics}).

Under the affine model we can have transition rules for message loss but not for message duplication. This is not a significant limitation, as message deduplication can be implemented by the run-time system with message IDs. We elide the rule for message loss due to its simplicity.

\subsection{Goal-Directed Actor Materialization}\label{sec:abstract-semantics}

We now define a symbolic abstraction of the concrete
\dCSNH{} semantics using separation, linear, and ordered logics.
Crucially, this allows us to describe \emph{actor materialization}, which enables a backwards-from-error, goal-directed analysis of parametric actor systems. Each logic component is \emph{intuitionistic}: the lack of a constraint (such as $\loctop$) denotes no restriction, rather than empty.

\subsubsection{Symbolic Abstraction of Distributed Control.}\label{sec:symbolic-semantic-domains}

\begin{figure}\small
\begin{mathpar}
\text{dist. states} \quad \abssysstate
\bnfdef \mkabssysstate{\abssysloc}{\abssysstore}{\absnet}{\abshist}
\bnfalt \abssysstate_1 \lor \abssysstate_2
\bnfalt \bot

\text{pure constraints} \quad \purecolor{\puredom}
\bnfdef \absval_1 = \absval_2
\bnfalt \puredom_1 \land \puredom_2
\bnfalt \ltrue
\bnfalt \cdots

\text{values} \quad \absval \bnfdef \symvar \bnfalt \absaddr

\text{symbolic variables} \quad \symvar, \symvaralt

\text{actor addresses} \quad \absaddr

\text{control-location maps} \quad \abssysloc
\bnfdef \mapping{\absaddr}{ \mkmachsttevntloc{\mach}{\stt}{\evnt}{\loc} }
\bnfalt \abssysloc_1 \storesep \abssysloc_2
\bnfalt \loctop

\text{dist. stores} \quad \storecolor{\abssysstore}
\bnfdef \mapping{\absaddr}{\absactorstore}
\bnfalt \abssysstore_1 \storesep \abssysstore_2
\bnfalt \storetop

\text{network} \quad \netcolor{\absnet} \bnfdef
\absmsg
\bnfalt \absnet_1 \netsep \absnet_2
\bnfalt \absnetnil

\text{message histories} \quad \histcolor{\abshist}
\bnfdef \abshistemp
\bnfalt \abshisthyp{\absmsg}{\abshist}

\text{messages} \quad \absmsg \bnfdef \mkmsg{\fromlbl\absaddr}{\tolbl\absaddr}{\evnt}{ \seq{\absval} }

\text{actor stores} \quad \storecolor{\absactorstore} \bnfdef \mkactorstore{\fieldslbl\absstore}{\varslbl\absstore}
\bnfalt \absactorstore_1 \lor \absactorstore_2
\bnfalt \bot

\text{stores} \quad \storecolor{\absstore}
\bnfdef \mapping{\var}{\absval}
\bnfalt \absstore_1 \storesep \absstore_2
\bnfalt \storetop
\end{mathpar}
\caption{Symbolically abstracting a \dCSNH{} state using separation, linear, and ordered logics.}
\label{fig:abstract-semantic-domains}
\end{figure}

\autoref{fig:abstract-semantic-domains} introduces a symbolic abstraction
of the concrete \dCSNH{} state defined in \autoref{fig:semantic-domains}. We use
``hatted'' meta-variables to refer to abstract counterparts of the concrete semantic domains.
For example, $\abssysstate$ is an abstract state corresponding to a concrete state $\sysstate$. In \autoref{sec:motivation}, these ``hats'' were omitted to reduce clutter.
Abstract states
\smash{$\abssysstate \bnfdef
\mkabssysstate{\abssysloc}{\abssysstore}{\absnet}{\abshist} \bnfalt
\cdots$} differ from the concrete states by the addition of pure
constraints \purecolor{$\puredom$}. These constraints, written in propositional logic, capture the relationships
between symbolic values $\absval$ collected during analysis and may relate values sent between actors, rather than just relations among an individual actor's fields in an actor-local state.
Symbolic values $\absval \bnfdef \symvar \bnfalt \absaddr$ are simply either symbolic variables $\symvar$ or symbolic actor addresses $\absaddr$.

\subparagraph{Abstract Locations and Stores.}

Recall that a concrete location map $\sysloc$ is a finite --- but unbounded --- map from actor addresses $\addr$ to their control locations $\mkmachsttevntloc{\mach}{\stt}{\evnt}{\loc}$.
We define the abstract counterpart to these locations as a separation logic formula
\smash{$\abssysloc \bnfdef
\mapping{\absaddr}{ \mkmachsttevntloc{\mach}{\stt}{\evnt}{\loc} }
\bnfalt \abssysloc_1 \storesep \abssysloc_2
\bnfalt \loctop$}
where
$\mapping{\absaddr}{ \mkmachsttevntloc{\mach}{\stt}{\evnt}{\loc} }$ is a
materialized actor $\absaddr$ at a concrete control location
$\mkmachsttevntloc{\mach}{\stt}{\evnt}{\loc}$,
\smash{$\abssysloc_1
\storesep \abssysloc_2$} is a separating conjunction of two abstract
location maps, and
$\loctop$ is a top element representing an
summary of an arbitrary number of actors at arbitrary locations.

An abstract distributed store
\smash{\storecolor{$\abssysstore$}} is a separation logic formula with
points-to constraints of the form
\smash{$\mapping{\absaddr}{\absactorstore}$}. The formalization maintains
the invariant that \smash{$\dom(\abssysloc) = \dom(\abssysstore)$}, since
both \smash{$\abssysloc$} and \smash{\storecolor{$\abssysstore$}}
constrain the same set of materialized actor addresses $\absaddr$.
An abstract actor-local store $\absactorstore$ is the tuple
\smash{$\mkactorstore{\fieldslbl\absstore}{\varslbl\absstore}$}, where
\smash{$\fieldslbl\absstore$} and \smash{$\varslbl\absstore$} are
separation logic formulas denoting the stores for actor's fields and
procedure-local variables, respectively.

\subparagraph{Abstract Networks and Histories.}

An abstract message $\absmsg \bnfdef \mkmsg{\fromlbl\absaddr}{\tolbl\absaddr}{\evnt}{
\seq{\absval} }$  is just like a concrete message $\msg$, but with addresses and arguments made symbolic. We model abstract networks
$\netcolor{\absnet}
\bnfdef \absmsg
\bnfalt \absnet_1 \netsep \absnet_2
\bnfalt \absnetnil$
as an unordered linear resource of abstract messages $\absmsg$ produced and consumed during protocol execution. A backwards-from-error symbolic execution starts
with $\absnetnil$ as the  network buffer denoting an an
arbitrary number of unknown messages. As the execution progresses
backwards, we may materialize specific abstract messages $\absmsg$, adding them to the network  via linear conjunction  $\netsep$. The abstract network \smash{$\absnet_1 \netsep \absnet_2$} is a disjoint union of two message sets. Abstract message histories \smash{$\histcolor{\abshist}
\bnfdef \abshistemp \bnfalt \abshisthyp{\absmsg}{\abshist}$} are an ordered
list of materialized abstract messages. We use the special history $\abshistemp$ to denote the set of all realizable affine message histories.

\subsubsection{Concretization.}

Now we formally connect symbolic and concrete states. We write $\concstate{\abssysstate}{\sysstate}$ to mean that the concrete state $\sysstate$ is in the concretization of the abstract state $\abssysstate$.
We define this concretization relation by induction on the structure of $\abssysstate$, with an individual state being defined component-wise:
\begin{mathpar}\small
\begin{array}{l}
\concstate{\mkabssysstate{\abssysloc}{\abssysstore}{\absnet}{\abshist}}{\mksysstate{\sysloc}{\sysstore}{\net}{\hist}}
\qquad\text{iff}
\\
\qquad\text{%
$\conc{\abssysloc}{\sysloc}$
and
$\conc{\storecolor{\abssysstore}}{\storecolor{\sysstore}}$
and
$\conc{\netcolor{\absnet}}{\netcolor{\net}}$
and
$\conc{\histcolor{\abshist}}{\histcolor{\hist}}$
and
$\concpure{\purecolor{\puredom}}$
s.t. $\dom(\sysloc) = \dom(\storecolor{\sysstore})$
for some \purecolor{$\valuation$}
}
\end{array}
\end{mathpar}
The valuation \purecolor{$\valuation$} maps symbolic variables $\symvar$ and addresses $\absaddr$ to concrete values $\val$ and addresses $\addr$, respectively. The constraint $\dom(\sysloc) = \dom(\storecolor{\sysstore})$ captures the invariant that we have one set of actors.
Concretizing disjunctive states are as expected:
\begin{mathpar}\small
\text{$\concstate{\abssysstate_1 \lor \abssysstate_2}{\sysstate}$}
\quad \text{iff} \quad
\text{$\concstate{\abssysstate_1}{\sysstate}$ or $\concstate{\abssysstate_2}{\sysstate}$}

\text{$\concstate{\bot}{\sysstate}$}
\quad \text{never}
\end{mathpar}

\subparagraph{Concretizing Actor Locations and Stores.}
We begin by defining concretization of actor locations and stores, which both use separation logic.
\begin{mathpar}\small
\text{$\conc{ \mapping{\absaddr}{ \mkmachsttevntloc{\mach}{\stt}{\evnt}{\loc} } }{\sysloc}$}
\quad \text{iff} \quad
\text{$\mapping{ \maplookup{\purecolor{\valuation}}{\absaddr} }{ \mkmachsttevntloc{\mach}{\stt}{\evnt}{\loc} } \in \sysloc$ or ($\evnt = \idleevnt$ and $\maplookup{\purecolor{\valuation}}{\absaddr} \in \dom(\sysloc)$)}
\end{mathpar}
We use the $\idleevnt$ event in the abstraction to represent an actor at an arbitrary control location, which we will leverage further in \autoref{sec:reflogic}.
An abstract actor location \smash{$\abssysloc$} is a separation logic formula with a disjointness constraint on the actor addresses:
\begin{mathpar}\small
\begin{array}{rcl}
\text{$\conc{ \abssysloc_1 \storesep \abssysloc_2 }{ \ctxcons{\sysloc_1}{\sysloc_2} }$}
& \text{iff} &
\text{$\conc{\abssysloc_1}{\sysloc_1}$ and $\conc{\abssysloc_2}{\sysloc_2}$ s.t. $\dom(\sysloc_1) \cap \dom(\sysloc_2) = \emptyset$}
\\
\text{$\conc{ \loctop }{ \sysloc}$}
& \text{always}
\end{array}
\end{mathpar}
Distributed store concretization  $\conc{\storecolor{\abssysstore}}{\storecolor{\sysstore}}$ is analogous to the above, so we elide it for brevity. 

\subparagraph{Concretizing Networks and Histories.}
An abstract message $\absmsg$ simply concretizes to a concrete message $\msg$ by mapping symbolic values to concrete values via the valuation \purecolor{$\valuation$}:
\begin{mathpar}\small
\text{$\conc
{ \mkmsg{\fromlbl\absaddr}{\tolbl\absaddr}{\evnt}{ \seq{\absval} } }
{ \mkmsg{ \maplookup{\purecolor{\valuation}}{\fromlbl\absaddr} }{ \maplookup{\purecolor{\valuation}}{\tolbl\absaddr} }{\evnt}{ \seq{ \maplookup{\purecolor{\valuation}}{\absval} } } }
$}
\quad \text{always}
\end{mathpar}
An abstract network represents a set of pending messages, which must correspond to a subset of the concrete network:
\begin{mathpar}\small
\text{$\conc
{ \netcolor{\absmsg} }
{ \net }
$}
\quad \text{iff} \quad
\text{$\msg \in \net$ for some $\msg$ s.t. $\conc{\absmsg}{\msg}$}

\text{$\conc
{ \netcolor{\absnet_1 \netsep \absnet_2} }
{ \netcons{\net_1}{\net_2}  }
$}
\quad \text{iff} \quad
\text{$\conc{\netcolor{\absnet_1}}{\net_1}$ and $\conc{\netcolor{\absnet_2}}{\net_2}$ s.t. $\net_1 \cap \net_2 = \emptyset$}

\text{$\conc
{ \netcolor{\nettop} }
{ \net }
$}
\quad \text{always}
\end{mathpar}

An abstract message history $\abshist$ captures realizable message histories $\hist$, which must be affine:
\begin{mathpar}\small
\text{$\conc
{ \histcolor{\abshistemp} }
{ \hist }
$}
\quad \text{iff} \quad
\text{$\jaffine{\hist}$}

\text{$\conc
{ \histcolor{\abshisthyp{\absmsg}{\abshist}} }
{ \hist  }
$}
\quad \text{iff} \quad
\text{$\conc{\absmsg}{\msg}$ implies $\conc{\histcolor{\abshist}}{ \histcons{\hist}{\msg} }$}
\end{mathpar}

The hypothetical history $\abshisthyp{\absmsg}{\abshist}$ is interpreted as an implication: if the next handled message belongs to the concretization of $\absmsg$, then the extended history must satisfy $\abshist$. The abstract history $\histcolor{\abshistemp}$ represents all affine concrete histories ($\jaffine{\hist}$).

\subsubsection{Backwards Symbolic Execution Semantics.}\label{sec:backwards-symbolic-execution-semantics}

\newcommand{\mkruleabs}[1]{\textsc{Abs-Step-#1}}

We have now introduced a concrete system, as well as an abstraction which concretizes to sets of states. We now define the semantics of backwards symbolic execution of a
distributed protocol as a transition relation on abstract states $\abssysstate$. In particular, we consider the judgment form \smash{\fbox{$\jbackstep[\topsys]{\abssysstate'}{\abssysstate}$}} that says, ``Abstract state $\abssysstate$ steps backwards to abstract state $\abssysstate'$ under protocol $\topsys$.''
The rules we present here first are essentially direct abstractions of the concrete transition rules from \autoref{sec:transition-system-semantics}:
\begin{mathpar}\small
\infer[Abs-Step-Instr]{
  \trans\colon \mktrans{\prelbl\loc}{\instr}{\postlbl\loc}
  \in \maplookup{\maplookup{\topsys}{\mach}}{\evnt} \and 
  \jbackstep[\trans]{
    \mkabspstate[\puredom']{\prelbl\loc}{\absactorstore'}
  }{ 
    \mkabspstate{\postlbl\loc}{\absactorstore}
  }
}{
  \jbackstep{
    \mkabssysstate[\puredom']{\absstorecons{\abssysloc}{\mapping{\absaddr}{\mkmachsttevntloc{\mach}{\stt}{\evnt}{\prelbl\loc}}}}{(\absstorecons{\abssysstore}{\mapping{\absaddr}{\absactorstore'}})}{\absnet}{\abshist}
  }{
    \mkabssysstate{\absstorecons{\abssysloc}{\mapping{\absaddr}{\mkmachsttevntloc{\mach}{\stt}{\evnt}{\postlbl\loc}}}}{(\absstorecons{\abssysstore}{\mapping{\absaddr}{\absactorstore}})}{\absnet}{\abshist}
  }
}
\end{mathpar}
Like the concrete semantics, we assume an abstract semantics $\jbackstep[\trans]{ \mkabspstate[\puredom']{\prelbl\loc}{\absactorstore'} }{ \mkabspstate{\postlbl\loc}{\absactorstore} }$ for actor-internal transitions on instructions $\instr$.
Notice that the pure constraints \purecolor{$\puredom'$}  are propagated from the local state to the global, allowing for a relation of values between actors necessary for deriving contradictions (cf. \autoref{sec:motivation}).

Reading the semantics backwards,
the abstract version of the \infrule{Step-Send} rule removes a message from the network. Intuitively, if we witness a message \netcolor{$\absmsg$} in the network buffer
\netcolor{$\absnetcons{\absnet}{\absmsg}$}, then there must have been an
actor $\fromlbl\absaddr$ at a location
$\mkmachsttevntloc{\mach}{\stt}{\evnt}{\prelbl\loc}$ that could have sent
it (\smash{$\mktrans{\prelbl\loc}{ \mksend{\var}{\evnt'}{ \seq{\varalt} }
}{\postlbl\loc} \in \maplookup{\maplookup{\topsys}{\mach}}{\evnt}$}).
We remove $\absmsg$ from the network, move $\fromlbl{\absaddr}$ to
$\prelbl{\loc}$, and update the pure constraints \purecolor{$\puredom'$} to
relate the formal parameters and actual arguments of \sendkw:

\begin{mathpar}\small
\infer[Abs-Step-Send]{
  \mktrans{\prelbl\loc}{ \mksend{\var}{\evnt'}{ \seq{\varalt} } }{\postlbl\loc}
  \in \maplookup{\maplookup{\topsys}{\mach}}{\evnt}
  \\
  \mapping{\fromlbl\absaddr}{\absactorstore} \in \abssysstore
  \and
  \absmsg = \left(\mkmsg{\fromlbl\absaddr}{ \tolbl\absaddr }{\evnt'}
    { \seq{ \absval  } }\right)
  \and  
  \puredom' = \left(\puredom \land \tolbl\absaddr = \maplookup{\absactorstore}{\var} \land \bigwedge\seq{ \absval = \maplookup{\absactorstore}{\varalt} }\right)
}{
  \jbackstep{
    \mkabssysstate[ \puredom' ]
      { (\absstorecons{\abssysloc}{ \mapping{\fromlbl\absaddr}{ \mkmachsttevntloc{\mach}{\stt}{\evnt}{\prelbl\loc} } }) }
      { \abssysstore }
      { \absnet }{ \abshist }
  }{
    \mkabssysstate
      { (\absstorecons{\abssysloc}{ \mapping{\fromlbl\absaddr}{ \mkmachsttevntloc{\mach}{\stt}{\evnt}{\postlbl\loc} } }) }
      { \abssysstore }
      { (\absnetcons{\absnet}{\absmsg}) }{\abshist}
  }
}
\end{mathpar}

We defer presenting an abstract version of \TirName{Step-Handle} until \autoref{sec:reflogic}, as it pertains to actor-causal reduction.

\subsubsection{Actor Location Materialization.}\label{sec:distributed-control-location-materialization}

For the \TirName{Abs-Step-Send} rule to be applicable, the abstract sender $\fromlbl\absaddr$ must be materialized at the $\sendkw$'s post-location (${\mapping{\fromlbl\absaddr}{ \mkmachsttevntloc{\mach}{\stt}{\evnt}{\postlbl\loc} }}$) (or similarly, for \TirName{Abs-Step-Instr} to be applicable).
However, the abstract actor $\fromlbl\absaddr$ of interest may correspond to a concrete actor summarized in $\loctop$. Thus, with only these rules, symbolic execution can become stuck unless all relevant actors have already been materialized. While this is sufficient, it requires knowledge which may be unknown a priori.
We address this by introducing actor materialization as a mechanism to constrain abstract location maps \smash{$\abssysloc$} with additional actors at particular locations. This enables a goal-directed analysis to materialize new actors as needed to progress.

We define a reflexive and transitive entailment judgment
$\jincluded{\abssysstate}{\abssysstate'}$  which states that the abstract state
$\abssysstate$ is included in $\abssysstate'$, meaning its concretization is a subset of the latter's. Entailment allows a rule analogous to the rule of consequence, allowing a strengthening of the pre-state:
\begin{mathpar}\small
\infer[Abs-Step-Consequence]{
  \jincluded{\postlbl\abssysstate}{\postlbl\abssysstate'}
  \and
  \jbackstep{ \prelbl\abssysstate' }{ \postlbl\abssysstate' }
  \and
  \jincluded{\prelbl\abssysstate'}{\prelbl\abssysstate}
}{
  \jbackstep{ \prelbl\abssysstate }{ \postlbl\abssysstate }
}
\end{mathpar}

This strengthening allows us to define materialization by constraining an additional symbolic actor $\absaddr'$. In \TirName{Materialize-Actor}, the new actor is materialized at a control location $\mkmachsttevntloc{\mach}{\stt}{\evnt}{\loc}$. This rule has two cases: either it aliases an actor $\absaddr_i$ that is already materialized at that control location, or it is a distinct new actor ($\smash{\abssysloc} \storesep (\mapping{\absaddr'}{ \mkmachsttevntloc{\mach}{\stt}{\evnt}{\loc} })$).
\begin{mathpar}\small
\infer[Materialize-Actor]{
  \abssysstate =
    \left(\mkabssysstate
      { \abssysloc }
      { \abssysstore }
      { \absnet }
      { \abshist }\right)
  \and
  \absaddr' \notin \dom(\abssysloc)
  \and
  \text{(for all $\mapping{\absaddr_i}{ \mkmachsttevntloc{\mach}{\stt}{\evnt}{\loc} } \in \abssysloc$)}
}{
  \jincluded{
    \abssysstate
  }{
    \left(
    \left(\bigvee_{i} (\abssysstate\purecolor{{} \land \absaddr_i = \absaddr'})\right)
    \;\lor\;
    \mkabssysstate
      { (\abssysloc \storesep (\mapping{\absaddr'}{ \mkmachsttevntloc{\mach}{\stt}{\evnt}{\loc} })) }
      { (\abssysstore \storesep (\mapping{\absaddr'}{\storetop})) }
      { \absnet }
      { \abshist }\right)
  }
}
\end{mathpar}

\newcommand{\cons}[2]{#1\cdot#2}
\newcommand{\overapprox}[2]{#1\vDash#2}
\newcommand{\mainactorinit}{Main.\initstt.eInit.\entryloc}
\newcommand{\initsystate}{\sysstate_{init}}

\newcommand{\jbackstepn}[3]{#2 \backsteparrow^{#1} #3}
\newcommand{\jstepn}[3]{#2 \steparrow^{#1} #3}

\newcommand{\unreachable}[1]{\vdash_{unreach} #1}
\newcommand{\initial}[1]{\vdash_{init} #1}

\subsubsection{Local Soundness.}\label{sec:local-soundness}

\newcommand{\inv}{\hat{\Sigma}}

The abstract semantics we have so far are essentially a direct symbolic abstraction of the concrete semantics, so local soundness is direct:
\begin{lemma}[Subsumption Soundness]\label{lem:subsumption-soundness}
If $\jincluded{\abssysstate}{\abssysstate'}$ and $\concstate{\abssysstate}{\sysstate}$, then $\concstate{\abssysstate'}{\sysstate}$.
\end{lemma}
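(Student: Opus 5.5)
The proof goes by induction on the derivation of $\jincluded{\abssysstate}{\abssysstate'}$, reading entailment as the least reflexive, transitive relation closed under the rules given so far. At this point that means essentially \TirName{Materialize-Actor}, together with any standard structural rules on disjunction and $\bot$ the paper implicitly assumes.

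The easy cases come first.
\begin{itemize}
\item \emph{Reflexivity} is immediate.
\item \emph{Transitivity}: given $\jincluded{\abssysstate}{\abssysstate''}$ and $\jincluded{\abssysstate''}{\abssysstate'}$, apply the induction hypothesis twice.
\item \emph{Disjunction}: if $\abssysstate = \abssysstate_1 \lor \abssysstate_2$, use the definition of $\concstate{\abssysstate_1 \lor \abssysstate_2}{\sysstate}$ to pick the disjunct that $\sysstate$ satisfies, then apply the hypothesis to it.
\item \emph{Left-introduction of $\lor$ on the right}: $\sysstate$ satisfies $\abssysstate_i$, so it satisfies the disjunction by definition.
\item \emph{Bottom}: the case for $\bot$ is vacuous, since nothing concretizes $\bot$.
\end{itemize}

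The substantive case is \TirName{Materialize-Actor}. Fix $\sysstate = \mksysstate{\sysloc}{\sysstore}{\net}{\hist}$ and a valuation $\valuation$ witnessing $\concstate{\abssysstate}{\sysstate}$. We must show $\sysstate$ satisfies the big disjunction. This needs a case split on a concrete address $\addr'$ chosen to interpret $\absaddr'$. Since $\absaddr' \notin \dom(\abssysloc)$, we may take $\absaddr'$ to be fresh for $\valuation$ on everything $\abssysstate$ mentions, and so extend $\valuation$ freely at $\absaddr'$.

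What the rule actually requires is an actor at location $\mkmachsttevntloc{\mach}{\stt}{\evnt}{\loc}$. I would therefore read the rule as applied under the goal-directed assumption that such an actor exists in $\sysloc$, i.e.\ restricted to concrete states in which some $\addr'$ sits at that location. I would state this side condition explicitly and, if needed, fold it into the hypothesis of the lemma as used by \TirName{Abs-Step-Consequence}. Alternatively, observe that $\evnt = \idleevnt$ trivially satisfies it whenever the location map is nonempty.

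Given such an $\addr'$, split on whether $\addr' = \valuation(\absaddr_i)$ for some materialized $\absaddr_i$ at that same location.
\begin{itemize}
\item \emph{Aliased case}: set $\valuation' = \valuation[\absaddr' \mapsto \addr']$. Then $\valuation'$ still satisfies every component of $\abssysstate$, because $\absaddr'$ is fresh, and it satisfies $\absaddr_i = \absaddr'$. Hence the $i$-th disjunct holds.
\item \emph{Non-aliased case}: $\addr'$ is not in the image under $\valuation$ of the domain of $\abssysloc$. Here we decompose the concrete map as $\sysloc = \ctxcons{\sysloc_1}{\sysloc_2}$, where $\sysloc_1$ witnesses $\abssysloc$ and $\sysloc_2 \ni \mapping{\addr'}{\mkmachsttevntloc{\mach}{\stt}{\evnt}{\loc}}$, with disjoint domains.
\end{itemize}

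Two properties make the non-aliased decomposition possible. First, the witness for $\abssysloc$ can be taken to be exactly the sub-map on $\valuation(\dom(\abssysloc))$; this uses the intuitionistic reading in which $\loctop$ absorbs the rest. Second, the store component $\mapping{\absaddr'}{\storetop}$ concretizes to the single entry $\sysstore(\addr')$, which exists because $\dom(\sysloc) = \dom(\sysstore)$. The network, history and pure components are unchanged and still hold under $\valuation'$ by freshness.

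The step I expect to be the main obstacle is this framing/decomposition in the non-aliased case. The separating-conjunction semantics as written splits the concrete map exactly, so it is not obvious that $\loctop$-style slack is available to carve out $\addr'$. To get it, I would first prove a small auxiliary lemma: any $\abssysloc$ (respectively any store formula $\abssysstore$) is closed under extension. That is, if $\conc{\abssysloc}{\sysloc_1}$ and $\sysloc_1 \subseteq \sysloc$, then a split of $\sysloc$ exists that realizes $\abssysloc \storesep \loctop$. I would prove it by induction on $\abssysloc$.

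I would also need to handle mismatched \emph{exact-location} semantics: the concretization of $\mapping{\absaddr}{\cdots}$ is a membership test rather than a singleton, so the decomposition must be done on the actor domains rather than on syntax.
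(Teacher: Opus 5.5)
\textbf{Gap: your substantive case is false, and the relation the paper's proof is about is never treated.} \textsc{Materialize-Actor} is not a concretization inclusion, because $\sigma \models \hat{\sigma}$ gives no actor at the target location. Suppose no $\hat{a}_i$ is already materialized there. Then the disjunction over $i$ is empty and the right-hand side is only the materialized disjunct. So $\hat{\sigma}$ with $\top$ in every component, together with any concrete state that has no actor at some non-\texttt{eIdle} location, is a counterexample; for instance, take a state where every actor is idle at the exit of its \texttt{eInit} handler. (If some $\hat{a}_i$ is present and $\hat{a}'$ is fresh, the aliasing disjunct holds trivially by setting $\theta(\hat{a}') := \theta(\hat{a}_i)$.)

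Folding ``some actor is at that location'' into the hypothesis proves a different, weaker lemma, and even that version has holes:
\begin{itemize}
\item Freshness of $\hat{a}'$ does not follow from $\hat{a}' \notin \operatorname{dom}(\widehat{\mathit{loc}})$. In \textsc{Causal-Materialize-Actor-Network}, $\hat{a}'$ is the sender named in a network message, so its value is already constrained.
\item Failure of the aliased case does not place $a'$ outside $\theta(\operatorname{dom}(\widehat{\mathit{loc}}))$. A points-to at an \texttt{eIdle} location only requires $\theta(\hat{a}_j) \in \operatorname{dom}(\mathit{loc})$, so $a'$ may be $\theta(\hat{a}_j)$. Separation then forbids reusing it for $\hat{a}'$, and no aliasing disjunct mentions $\hat{a}_j$; a one-actor system is a counterexample.
\end{itemize}
By contrast, the framing step you flag as the main obstacle is harmless. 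Satisfaction of $\widehat{\mathit{loc}}$ depends only on the entries at $\theta(\operatorname{dom}(\widehat{\mathit{loc}}))$ and their pairwise distinctness.

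The missing idea is the one the paper's proof uses. There, $\sqsubseteq$ is the operational entailment of Section~5 used to detect convergence. It matches $\hat{\sigma}'$ against $\hat{\sigma}$ component-wise under a single unification map $U$:
\begin{itemize}
\item separation-logic subtraction for locations and stores;
\item linear- and ordered-logic matching for the network and the history;
\item an implication check on the pure constraints.
\end{itemize}
Each component check is sound on its own. Because $U$ is applied uniformly, each symbolic value of $\hat{\sigma}'$ has the same concrete referent (via $\theta$ composed with $U$) in every component. That single shared valuation is what the definition of $\sigma \models \hat{\sigma}'$ requires, so component-wise soundness composes. Your proposal never treats this relation.

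Materialization is deliberately outside this lemma. The paper justifies it inside the Causal Soundness proof, where the concrete step being simulated supplies an actor at the post-location. Your side condition is essentially that justification, and you are right that \textsc{Materialize-Actor} is not a genuine inclusion. That observation belongs in the Causal Soundness argument, not in a proof of subsumption soundness.
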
  
\begin{lemma}[Local Soundness of Backwards Symbolic Execution]\label{lem:local-soundness}
If
$\jstep[\topsys]{\prelbl\sysstate}{\postlbl\sysstate}$
and
$\jbackstep[\topsys]{\prelbl\abssysstate}{\postlbl\abssysstate}$
such that
$\concstate{\postlbl\abssysstate}{\postlbl\sysstate}$,
then $\concstate{\prelbl\abssysstate}{\prelbl\sysstate}$.
\end{lemma}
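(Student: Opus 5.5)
We argue by induction on the derivation of $\jbackstep[\topsys]{\prelbl\abssysstate}{\postlbl\abssysstate}$, considering the last rule applied. The rules available at this point are \TirName{Abs-Step-Instr}, \TirName{Abs-Step-Send} and \TirName{Abs-Step-Consequence}; the handle rule is deferred to \autoref{sec:reflogic}. Disjunctive abstract states need no separate treatment: the rules act on single tuples, and the concretization of $\lor$ is pointwise.

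\textbf{Consequence case.} Suppose the derivation ends with \TirName{Abs-Step-Consequence}, using premises $\jincluded{\postlbl\abssysstate}{\postlbl\abssysstate'}$, $\jbackstep{\prelbl\abssysstate'}{\postlbl\abssysstate'}$ and $\jincluded{\prelbl\abssysstate'}{\prelbl\abssysstate}$. From $\concstate{\postlbl\abssysstate}{\postlbl\sysstate}$, \autoref{lem:subsumption-soundness} gives $\concstate{\postlbl\abssysstate'}{\postlbl\sysstate}$. The induction hypothesis then yields $\concstate{\prelbl\abssysstate'}{\prelbl\sysstate}$, and applying \autoref{lem:subsumption-soundness} again gives the result. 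Materialization enters only through this rule, so its soundness is inherited from \autoref{lem:subsumption-soundness}.

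\textbf{Instruction case.} Suppose the derivation ends with \TirName{Abs-Step-Instr}. Fix a valuation $\valuation$ witnessing $\concstate{\postlbl\abssysstate}{\postlbl\sysstate}$. Because $\sysloc$ is a finite map, $\valuation(\absaddr)$ identifies a unique concrete actor at $\mkmachsttevntloc{\mach}{\stt}{\evnt}{\postlbl\loc}$.

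Here there is a subtlety: we must choose which concrete step to match. The lemma quantifies over any concrete step $\jstep{\prelbl\sysstate}{\postlbl\sysstate}$, and that step could be taken by an actor other than $\valuation(\absaddr)$, or by a different rule. So the statement must be read with the concrete step matched to the abstract one. That is, the concrete step fires the same transition $\trans$, performed by $\valuation(\absaddr)$. I would state this alignment explicitly, as a side condition that the steps "correspond". Otherwise the lemma can fail: for instance, another actor's send could have placed an unrelated message that the abstract state now constrains.

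Under this alignment, the separating conjunction of \smash{$\abssysloc$} splits off the frame. The frame is untouched by \TirName{Step-Instr}, so it concretizes identically in the pre-state, and so do \smash{$\absnet$} and \smash{$\abshist$}. The actor-local component follows from the assumed local soundness of $\jbackstep[\trans]{\cdot}{\cdot}$. This gives an extension of $\valuation$ satisfying $\puredom'$ and the pre-store $\absactorstore'$.

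\textbf{Send case.} Suppose the derivation ends with \TirName{Abs-Step-Send}. The concrete step adds a message $\msg$ to the network. The abstract post-network $\absnet \netsep \absmsg$ concretizes by splitting $\ctxcons{\net}{\msg}$ into disjoint parts. We need the part matching $\absmsg$ to be exactly the freshly sent $\msg$, which is again the correspondence condition. Given that, $\absnet$ concretizes in the pre-network $\net$. The equalities in $\puredom'$ hold under $\valuation$ because \TirName{Step-Send} reads the recipient and payload from the same store $\actorstore$, which is the concretization of $\absactorstore$. The location update follows as in the instruction case.

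The main obstacle is the matching issue just described: linear and separating splits are existentially chosen, so we must show that some split assigns the new message, or the stepping actor, to the materialized component. I would handle it first by strengthening the hypothesis to a correspondence relation between concrete and abstract steps, and then show that with the split chosen that way everything else is routine frame reasoning.
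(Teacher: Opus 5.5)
Your diagnosis is correct, and the paper's own argument never states it. With the concrete and abstract steps quantified independently, the statement fails. For instance, if the concrete step is taken by an actor of another machine type, $\theta(\hat a)$ stays at the post-location and nothing in $\sigma_{\mathrm{pre}}$ need sit at the pre-location. The paper avoids this rather than patching the hypothesis. Its sketch reads the lemma as a simulation, choosing the abstract step to follow the given concrete one. Formally, it derives the lemma from Theorem~\ref{thm:causal-soundness}, whose concretization of predecessor traces absorbs steps of irrelevant actors by weakening the conclusion: an earlier state may serve as the witness.

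The gap is that your repair keeps only the easy half. Once correspondence is a hypothesis, what remains is per-rule frame reasoning. The part the paper's argument is about disappears: that for a given concrete step a corresponding abstract backward step \emph{exists}. That is where materialization does its work. Given $\sigma_{\mathrm{post}} \models_\theta \hat\sigma_{\mathrm{post}}$ and a step by actor $a$ along transition $t$, there are two cases before \TirName{Abs-Step-Instr} or \TirName{Abs-Step-Send} can fire:
\begin{itemize}
\item $a = \theta(\hat a_i)$ for some $\hat a_i$ already materialized at the post-location; or
\item $a$ is hidden in $\top$. Then you must apply \TirName{Materialize-Actor}, extend $\theta$ with $\hat a' \mapsto a$, and pick the alias or the fresh disjunct according to which case holds.
\end{itemize}
You use \TirName{Materialize-Actor} only as a sound entailment, which is the trivial direction. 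With the match assumed, your lemma may hold vacuously. It no longer shows that backward exploration covers every concrete predecessor, which is what a refutation needs.

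The same issue breaks your Consequence case, even for your own version. Correspondence is a fact about a valuation: that $\hat a$ denotes the stepping actor. But Lemma~\ref{lem:subsumption-soundness} only returns $\sigma_{\mathrm{post}} \models \hat\sigma'_{\mathrm{post}}$ under \emph{some} valuation. That valuation may send the fresh $\hat a'$ to a different actor sitting at the same location, so the premise of the induction hypothesis is not established. You need either a valuation-extending form of subsumption soundness, akin to the valuation-extension assumption the paper's mechanization makes, or a correspondence stated against the innermost rule's post-state. Your remark that disjunctions ``need no separate treatment'' hides exactly this choice of disjunct.
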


These lemmas are corollaries of Theorem~\ref{thm:causal-soundness} which is
mechanized in Lean. \autoref{sec:mechanization} contains proof sketches.

\section{Actor-Causality Reduction}\label{sec:reflogic}

With materialization (\autoref{sec:actorlogic}), we can dynamically instantiate individual actors during backwards symbolic execution.
Applying it too frequently can rapidly blow up the state space, yet it is also necessary for precision, making balance between the two key to our technique's practicality.

Following the intuition in \autoref{sec:actor-causality-reduction}, we reduce the non-de\-ter\-min\-ism of our abstract semantics by leveraging our causally consistent model and following backwards only causal dependencies that are \emph{relevant} to refuting a given error goal.
Even though any message could have been the last one handled previously, we can leverage our error-goal state to restrict our choices to only messages relevant to refuting the state. For instance, if a message handler neither updates any fields in the goal nor sends any messages that are currently present in the network, it can be ignored without affecting refutability.
While this does not remove non-determinism entirely, it provides a substantive reduction to the search space, especially as the program grows in scale.

To formalize this reduction, we define a causal-reduced transition relation $\causalbacksteparrow$
that extends the transition relation $\backsteparrow$ from \autoref{sec:backwards-symbolic-execution-semantics}.
Specifically, we consider the judgment form \smash{\fbox{$\jcausalbackstep[\topsys]{\abssysstate'}{\abssysstate}$}} that says, ``$\abssysstate'$ can be reached backwards from the state $\abssysstate$ with a backwards causally-reduced step in protocol $\topsys$.''

\subparagraph{Message Handling.}

We first define a rule for message handling. This rule is close to the concrete transition semantics, save for a single detail: the pre-location is the exit location of the distinguished idle event $\idleevnt$, rather than an arbitrary one.
\begin{mathpar}\small
\infer[Abs-Step-Handle]{
  \absmsg = \left(\mkmsg{\fromlbl\absaddr}{\tolbl\absaddr}{\evnt'}{ \seq{\absval} }\right)
  \and
  \mkonhand{\evnt'}{ \seq{\var} }{\proc} \in \maplookup{\topsys}{\mach}
  \\
  \abssysstore' =
      \left(\absstorecons{\abssysstore_0}{ \mapping{\tolbl\absaddr}{ \mkactorstore{\fieldslbl\absstore}{\storetop} } } \right)
  \and
  \abssysstore =
      { \left(\absstorecons{\abssysstore_0}{ \mapping{\tolbl\absaddr}{ \mkactorstore{\fieldslbl\absstore}{ (\seq{\mapping{\var}{\absval}}) } } }\right) }
  \and
  \purecolor{\puredom'} = \purecolor{\puredom \land \paffinefun{ \abshisthyp{\absmsg}{\abshist} }}
}{
  \jcausalbackstep{
    \mkabssysstate[\puredom']
      { (\absstorecons{\abssysloc}{ \mapping{\tolbl\absaddr}{ \mkmachsttevntloc{\mach}{\stt}{\idleevnt}{\exitloc} } }) }
      { \abssysstore' }
      { (\absnetcons{\absnet}{\absmsg}) }
      { \abshisthyp{\absmsg}{\abshist} }
  }{
    \mkabssysstate
      { (\absstorecons{\abssysloc}{ \mapping{\tolbl\absaddr}{ \mkmachsttevntloc{\mach}{\stt}{\evnt'}{\entryloc} } })}
      { \abssysstore }
      { \absnet }
      { \abshist  }
  }
}
\end{mathpar}
The use of the $\idleevnt$ event prevents branching that would occur if our abstract rule could step back into any handler. Recall that this location is effectively the $\exitloc$ of \emph{any} handler, so we are deferring the choice of previous handler to another rule.
On the backwards transition, the abstract history \histcolor{$\abshist$} is updated to
\histcolor{$\abshisthyp{\absmsg}{\abshist}$} to constrain the possible message history with the handling of $\absmsg$.
Note that when a message is handled, we also update the pure constraint $\puredom$ to assert the affine property on the updated message history (\purecolor{$\paffinefun{ \abshisthyp{\absmsg}{\abshist} }$}). This ensures that as the analysis proceeds, the history considered remains valid under our affine assumption.
Now, we must define additional rules to allow the system to step back from the $\idleevnt$ event to locations that are \emph{relevant} to our goal state.

\subparagraph{Store-Constraint Causality.}

One way a handler may be relevant is by setting a field that is relevant to a constraint. We use this \emph{store-constraint causality} to step back to a message handler that assigns a value to this field $\var$ (i.e., a transition $\mktrans{\prelbl\loc}{\mkassign{\var}{\expr}}{\postlbl\loc} \in \maplookup{\maplookup{\topsys}{\mach}}{\evnt}$).
This corresponds to a form of data-relevance constraint~\cite{hopper} adapted to actor systems:
\begin{mathpar}\small
\infer[Causal-Step-Store-Constraint]{
 \mktrans{\prelbl\loc}{\mkassign{\var}{\expr}}{\postlbl\loc} \in \maplookup{\maplookup{\topsys}{\mach}}{\evnt}
 \and 
 \mapping{\var}{\symvar} \in \fieldslbl\absstore
 \and 
 \mapping{\absaddr}{\mkactorstore{\fieldslbl\absstore}{\storetop}} \in \abssysstore
}{
    \jcausalbackstep{
    \mkabssysstate{(\absstorecons{\abssysloc}{\mapping{\absaddr}{\mkmachsttevntloc{\mach}{\stt}{\evnt}{\postlbl\loc}}})}
    {\abssysstore}{\absnet}{\abshist}
    }{
     \mkabssysstate{(\absstorecons{\abssysloc}{\mapping{\absaddr}{\mkmachsttevntloc{\mach}{\stt}{\idleevnt}{\exitloc}}})}{\abssysstore}{\absnet}{\abshist}
    }
}
\end{mathpar}
Conversely, if an actor field is unused by any pure constraints $\puredom$, then it is irrelevant for refuting the goal state. 
The pre-state differs only in that the actor has now jumped back to the post-location $\mapping{\absaddr}{\mkmachsttevntloc{\mach}{\stt}{\evnt}{\postlbl\loc}}$ corresponding to a witnessing assignment $\mktrans{\prelbl\loc}{\mkassign{\var}{\expr}}{\postlbl\loc}$.

\subparagraph{Network Causality.}

When there is a message $\absmsg$ in the network component (\netcolor{$\absnetcons{\absnet}{\absmsg}$}), it must have been sent by an actor, and this sending may be relevant to refuting the error goal.
We call this \emph{network causality}:
\begin{mathpar}\small
\infer[Causal-Step-Network]{
  \mktrans{\prelbl\loc}{\mksend{\var}{\evnt'}{\seq{\varalt}}}{\postlbl\loc} \in \maplookup{\maplookup{\topsys}{\mach}}{\evnt}
  \and 
  \absmsg = \mkmsg{\fromlbl\absaddr}{\tolbl\absaddr}{\evnt'}{\seq{\symvar}}
}{
  \jcausalbackstep{
    \mkabssysstate{(\absstorecons{\abssysloc}{\mapping{\fromlbl\absaddr}{\mkmachsttevntloc{\mach}{\stt}{\evnt}{\postlbl\loc}}})}
    {\abssysstore}{ (\absnetcons{\absnet}{\absmsg}) }{\abshist}
  }{
     \mkabssysstate{(\absstorecons{\abssysloc}{\mapping{\fromlbl\absaddr}{\mkmachsttevntloc{\mach}{\stt}{\idleevnt}{\exitloc}}})}{\abssysstore}{ (\absnetcons{\absnet}{\absmsg}) }{\abshist}
  }
}
\end{mathpar}
In \TirName{Causal-Step-Network}, the control location of the materialized actor $\fromlbl\absaddr$ jumps back to the post-location $\mapping{\fromlbl\absaddr}{\mkmachsttevntloc{\mach}{\stt}{\evnt}{\postlbl\loc}}$ for a $\sendkw$ transition $\mktrans{\prelbl\loc}{\mksend{\var}{\evnt'}{\seq{\symvar}}}{\postlbl\loc}$
.

\paragraph{Causal Materialization.}

Actor materialization is necessary for precision but can lead to a rapid explosion in the state space.
The \TirName{Materialize-Actor} rule from \autoref{sec:distributed-control-location-materialization} allows arbitrary materialization, even though most choices would be irrelevant. We recognize that the goal state, which captures the relevant system slice already, can be used to guide materialization, allowing effective automation.
We now leverage our two new inference rules to determine more specific materialization conditions.

In the case of the \TirName{Causal-Step-Store-Constraint} rule, actor $\absaddr$ has a materialized field, so the invariant guarantees that it is already present in the location map $\abssysloc$, and thus additional materialization is unnecessary.
For the \TirName{Causal-Step-Network} rule, the receiving actor $\tolbl\absaddr$ must already be present from message handling (cf. \TirName{Abs-Step-Handle}), but the sender $\fromlbl\absaddr$ may not be materialized. We define a materialization rule that is constrained by a message \netcolor{$\absmsg$} in the network and only materializes a sending actor $\fromlbl\absaddr$ that could have sent this message ($\mktrans{\prelbl\loc}{ \mksend{\var}{\evnt'}{ \seq{\varalt} } }{\postlbl\loc} \in \maplookup{\maplookup{\topsys}{\mach}}{\evnt}$):
\begin{mathpar}\small
\infer[Causal-Materialize-Actor-Network]{
  \mktrans{\prelbl\loc}{ \mksend{\var}{\evnt'}{ \seq{\varalt} } }{\postlbl\loc}
  \in \maplookup{\maplookup{\topsys}{\mach}}{\evnt}
  \and 
  \absmsg = \mkmsg{\fromlbl\absaddr}{\tolbl\absaddr}{\evnt'}{\seq{\symvar}}
  \\
  \abssysstate =
    \left(\mkabssysstate
      { \abssysloc }
      { \abssysstore }
      { (\absnetcons{\absnet}{\absmsg}) }
      { \abshist }\right)
  \and
  \fromlbl\absaddr \notin \dom(\abssysloc)
  \and
  \text{(for all $\mapping{\absaddr_i}{ \mkmachsttevntloc{\mach}{\stt}{\evnt}{\postlbl\loc} } \in \abssysloc$)}
}{
  \jincluded{
    \abssysstate
  }{
    \left(
    \left(\bigvee_{i} (\abssysstate\purecolor{{} \land \absaddr_i = \fromlbl\absaddr})\right)
    \;\lor\;
    \mkabssysstate
      { (\abssysloc \storesep (\mapping{\fromlbl\absaddr}{ \mkmachsttevntloc{\mach}{\stt}{\evnt}{\postlbl\loc} })) }
      { (\abssysstore \storesep (\mapping{\fromlbl\absaddr}{\storetop})) }
      { (\absnetcons{\absnet}{\absmsg}) }
      { \abshist }\right)
  }
}
\end{mathpar}

\paragraph{Causal Soundness.}
\newcommand{\jcausalbackstepn}[3][*]{#2 \causalbacksteparrow_\pi^{#1} #3}
\newcommand{\predtrace}{\tau}
\newcommand{\predtracecons}[2][\predtrace]{#1 #2}
\newcommand{\causalinv}{\hat{\Sigma}}
\newcommand{\predlbl}[1]{\sublbl{#1}{pred}}
Our causally-reduced transition systems makes the soundness condition more complex than before.  Previously, the abstract system could precisely mimic each step of the concrete system, but now some steps are impossible, as irrelevant actors are skipped over. Intuitively, a causal step jumps back past an arbitrary number of irrelevant predecessor states to a relevant one. This resembles stuttering simulations~\cite{namjoshi}, where sequences of concrete transitions that are irrelevant to the abstract behavior are collapsed.
To describe this notion of jumping, we instrument our concrete semantics to keep track of predecessors, defining $\jstep[\topsys]{\predtrace}{\predtrace'}$ that says, ``The trace $\predtrace$ can be extended to $\predtrace'$ in protocol $\topsys$.'':
\begin{mathpar}\small
\text{ traces}\quad \predtrace
\bnfdef \cdot
\bnfalt \predtracecons{\sysstate}

\fbox{$\jstep[\topsys]{\predtrace}{\predtrace'}$}

\infer{
  \jstep[\topsys]{\sysstate}{\sysstate'}
}{
  \jstep[\topsys]{ \predtracecons{\sysstate} }{ \predtracecons[ \predtracecons{\sysstate} ]{\sysstate'} }
}
\end{mathpar}

We now extend our concretization relation to relate concrete predecessor traces and abstract states. A trace $\predtrace\sysstate$ is in the concretization of an abstract state $\abssysstate$ iff either the current concrete state $\sysstate$ is in the concretization of $\abssysstate$, \emph{or} if one of its predecessors $\predlbl\sysstate \in \predtrace$ is in the concretization of $\abssysstate$:
\begin{mathpar}\small
\text{$\concstate{ \abssysstate }{ \predtracecons{\sysstate} }$}
\quad \text{iff} \quad
\text{$\concstate{\abssysstate}{\sysstate}$ or $\concstate{\abssysstate}{\predlbl\sysstate}$ for some $\predlbl\sysstate \in \predtrace$}
\end{mathpar}

Intuitively, a predecessor trace may contain additional states, as long as some state in the trace is in the concretization of the abstract state.
Causal soundness then states that a causal step can jump back past irrelevant states to a relevant one:
\begin{theorem}[\thmcolor{Causal Soundness}]\label{thm:causal-soundness}
If
$\jstep[\topsys]{\prelbl\predtrace}{\postlbl\predtrace}$
and
$\jcausalbackstep[\topsys]{\prelbl\abssysstate}{\postlbl\abssysstate}$
such that
$\concstate{\postlbl\abssysstate}{\postlbl\predtrace}$,
then $\concstate{\prelbl\abssysstate}{\prelbl\predtrace}$.
\end{theorem}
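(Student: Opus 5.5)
The proof proceeds by induction on the derivation of $\jcausalbackstep[\topsys]{\prelbl\abssysstate}{\postlbl\abssysstate}$, treating $\causalbacksteparrow$ as closed under \TirName{Abs-Step-Consequence} and as extending $\backsteparrow$. First, unfold the trace concretization. Since $\jstep[\topsys]{\prelbl\predtrace}{\postlbl\predtrace}$, we have $\prelbl\predtrace = \predtracecons{\sysstate}$ and $\postlbl\predtrace = \predtracecons[\predtracecons{\sysstate}]{\sysstate'}$ with $\jstep{\sysstate}{\sysstate'}$. The hypothesis $\concstate{\postlbl\abssysstate}{\postlbl\predtrace}$ then splits into two cases. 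In the first, some state of the old prefix $\predtrace\sysstate$ already concretizes $\postlbl\abssysstate$. In the second, $\concstate{\postlbl\abssysstate}{\sysstate'}$.

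For the inherited rules \TirName{Abs-Step-Instr}, \TirName{Abs-Step-Send} and \TirName{Abs-Step-Consequence}, the second case reduces to the state-level Lemma~\ref{lem:local-soundness}. That lemma is proved by inversion on the concrete step. We match the materialized actor $\absaddr$ via the valuation $\valuation$ to the concrete actor that stepped. If the concrete step moved a different actor, we use intuitionistic separation (the $\loctop$, $\storetop$, $\absnetnil$ frames) to show the unchanged components still concretize. Consequence uses Lemma~\ref{lem:subsumption-soundness} on both sides. The first case, where the witness lies in the prefix, needs a separate argument. I would strengthen the induction to the statement ``if some state of a trace concretizes $\postlbl\abssysstate$, then some earlier or equal state concretizes $\prelbl\abssysstate$''. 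This is proved by an inner induction on the position of the witness along the trace, applying the single-step argument to the transition that entered the witness.

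For \TirName{Abs-Step-Handle}, invert the concrete step as \TirName{Step-Handle}. The pre-location $\mkmachsttevntloc{\mach}{\stt}{\idleevnt}{\exitloc}$ is satisfied by any concrete location of the receiver, because of the $\idleevnt$ clause of location concretization. The network gains $\absmsg$, and the store's handler part becomes $\storetop$. The history clause $\abshisthyp{\absmsg}{\abshist}$ holds by its implicational reading: the concrete step appends $\msg$ with $\conc{\absmsg}{\msg}$, and $\postlbl\abssysstate$ constrains the extended history. The added pure conjunct $\paffinefun{\abshisthyp{\absmsg}{\abshist}}$ is discharged from the side condition $\jpaffine{(\histcons{\hist}{\msg})}$.

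The main obstacle is the two causal jump rules, \TirName{Causal-Step-Store-Constraint} and \TirName{Causal-Step-Network}, together with \TirName{Causal-Materialize-Actor-Network}. Take a concrete trace whose state concretizes $\postlbl\abssysstate$, with $\absaddr$ at $\idleevnt.\exitloc$, meaning any concrete location. I would walk backwards along the trace to the last transition that either assigned the constrained field $\var$ of $\valuation(\absaddr)$ or, in the network case, sent the concrete message $\msg$ matching $\absmsg$. Such a transition must exist. For the network case this follows because every message in $\net$ was produced by a \TirName{Step-Send}, so a sender exists. For the store case I would rely on the init/affine discipline: the field was written by some handler, at the latest the $\initevnt$ handler, whose assignment is one of the enumerated transitions. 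The state immediately after that transition places $\valuation(\absaddr)$ at $\postlbl\loc$ with the same field value and the message still present. Every later transition leaves the relevant slice (this actor's field, the message, and the other materialized actors) unchanged, because it is either a different actor or an irrelevant instruction. This is exactly where the ``or a predecessor'' clause of trace concretization is needed. The delicate part is showing that the other materialized actors' locations are preserved across the skipped steps. I would try weakening their concretization to the predecessor state by noting that the pre-state only asserts their locations as in $\postlbl\abssysstate$. If this fails, I would restrict the jump rules so that all other materialized actors are idle, which is the situation in which the rules are applied.
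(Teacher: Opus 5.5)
Your walk-back for the two jump rules captures the stuttering intuition behind the paper's proof. The gap is that you apply it only there, and your reduction of the other rules to a one-step, state-level argument does not go through.

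Suppose $\concstate{\postlbl\abssysstate}{\sysstate'}$ and the last concrete step $\jstep{\sysstate}{\sysstate'}$ is taken by an actor other than $\valuation(\absaddr)$. \TirName{Abs-Step-Instr} and \TirName{Abs-Step-Send} put $\absaddr$ back at $\prelbl\loc$, but in $\sysstate$ that actor is still at $\postlbl\loc$. Framing with the $\top$ components therefore yields only $\concstate{\postlbl\abssysstate}{\sysstate}$, not $\concstate{\prelbl\abssysstate}{\sysstate}$. For the same reason you cannot invert the concrete step as \TirName{Step-Handle} in the \TirName{Abs-Step-Handle} case, because the theorem quantifies over every concrete step. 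Read as a standalone one-step claim, Lemma~\ref{lem:local-soundness} is false for such unrelated steps; the paper derives it from this theorem, not conversely.

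The missing idea is the organising principle of the paper's proof. Induct on the predecessor trace and split on the concrete step. Then, for \emph{every} rule, an irrelevant last step is discharged by showing the previous state still concretizes $\postlbl\abssysstate$ and recursing into the prefix. Only the step the rule actually reverses is handled locally. Your strengthened inner induction has the right shape, but it must be structured this way, not by applying a one-step argument to whatever transition entered the witness.

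\TirName{Abs-Step-Consequence} has the same problem. Lemma~\ref{lem:subsumption-soundness} cannot cover \TirName{Materialize-Actor} or \TirName{Causal-Materialize-Actor-Network}, since a concrete state of $\abssysstate$ need not contain any actor at the chosen location. These rules are justified only jointly with the subsequent back-step, at the trace level. Accordingly, the paper handles materialization inside the cases of this proof rather than through a separate subsumption lemma.

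For the jump rules, the point you leave open is a genuine obstacle. Restricting the rules so that the other materialized actors are idle would prove a different theorem, and it would not even suffice. A message recorded in $\absnet$ may have been sent after the witnessing transition of $\valuation(\absaddr)$; then the state just after that transition does not concretize the unchanged network of $\prelbl\abssysstate$. The paper does not change the rules. Its sketch rests on the claim that skipped transitions never modify actors and fields relevant to $\postlbl\abssysstate$. It also relies on an assumed monotonicity of concretization under extension of $\valuation$, which you need whenever a fresh address or symbolic variable is introduced and which you never state.

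Your walk-back also hides two assumptions. First, ``such a transition must exist'', like the base case of your inner induction (nothing enters the first state of a trace), needs the trace to be rooted at an initial state. Second, each rule instance fixes a transition $\trans$, whereas you walk back to the actual last writer or sender. That witnesses $\prelbl\abssysstate$ only if it is an instance of that $\trans$. What your argument really gives is that \emph{some} causal predecessor of $\postlbl\abssysstate$ is concretized. That is what a backward analysis exploring every instance needs, and you should state that form explicitly.
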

We mechanize a proof of this theorem in Lean. The \thmcolor{purple} color-coding
in a theorem statement indicates mechanization. For a proof sketch, see
\autoref{sec:mechanization}.

\section{Message-Segment Summarization}\label{sec:parametric}

Leveraging the causal-reduced transition system defined in
\autoref{sec:reflogic}, observe that it is possible to cycle with the same
materialized actor
$\smash{\mapping{\absaddr}{\mkmachsttevntloc{\mach}{\stt}{\evnt}{\loc}} \in
\abssysloc}$ using the \TirName{Causal-Step-Store-Constraint} and
\TirName{Causal-Step-Network} rules. In this section, we address this challenge by discussing various widening strategies for ensuring termination. In particular, we define \emph{message-segment summarization} to abstract over a core class of cycles observed in asynchronous distributed protocols. We formalize the causes of non-termination via \emph{causal dependencies}, constructing \emph{causal-dependence graphs} of protocols, and prove that our techniques guarantee termination on a subset of these graphs.

Consider the $\fmtcode{eVote}$ handler from \autoref{sec:message-segments}. With the rule \TirName{Causal-Step-Store-Constraint} and no other mechanism, our analysis would cycle attempting to witness the assignment to the $\fmtcode{forMe}$ vote tally, in the handler of the $\fmtcode{eVote}$ event.
The number of \histcolor{$\exfmtcode{eVote}$} messages it receives --- and then handles via the $\TirName{Abs-Step-Handle}$ rule --- is dependent on the number of actors in the system, a fixed but unbounded parameter \purecolor{$\exsyssize$}. Thus, the network would continually grow, and each message leads to new materialization, growing the state space, and producing more interleavings to explore.

\subparagraph{Widening.}
We can enforce termination with a coarse widening~\cite{cousot}
by dropping the offending store constraint  $\mapping{\var}{\symvar} \in \fieldslbl\absstore$ ~\cite{DBLP:conf/pldi/BlackshearCS13}
or in-network message constraint
$\absmsg \in \absnet$
to weaken the abstract state $\abssysstate$.
While this naive operation breaks the cycle, it may also eliminate information needed to derive contradictions. In our running example (\autoref{sec:motivation}), this would prevent the analysis from deriving the contradiction on the $\fmtcode{forMe}$ vote tally because it drops the constraints on the $\fmtcode{forMe}$ field (namely, $\exfld{1}{forMe}$ from \autoref{sec:message-segments}).
Rather than discarding constraints, we exploit the structure of our parameterized affine actor model, defining a notion of \emph{message-segment summarization} to summarize the effect of a sequence of messages, in certain cases.
In our example, message-segment summarization handles the $\fmtcode{forMe}$ count from the sequence of \histcolor{$\exfmtcode{eVote}$} messages sent from a set of distinct actors (\exref{ex:cycling-message-history}).

To leverage this insight, we define a notion of \emph{message segments}, which we then use to extend our abstract domain:

\begin{mathpar}\small
\text{messages}\quad \absmsg \bnfdef \cdots \bnfalt \mkmsgseg{\absmach}{\tolbl\absaddr}{\evnt}{ \seq{\absval} }
\end{mathpar}

A message segment $\mkmsgseg{\absmach}{\tolbl\absaddr}{\evnt}{\seq{\absval}}$ summarizes a sequence of messages sent from a summarized machine $\absmach$ to a single symbolic actor $\tolbl\absaddr$, all with event type $\evnt$ and symbolic payload $\seq{\absval}$. The summarized machine tracks the type of the machine which sends the message, as the segment produces an analysis obligation to verify that the sending of the message being summarized is possible. In addition to these fields, each message segment carries a symbolic segment-length parameter $\seglen$ recording the number of messages it summarizes.

\subsection{Backwards Symbolic Execution with Message Segment Handlers}
\label{sec:segment-handlers}

We now define a mechanism for producing and consuming these message-segments. Intuitively, in a message history a  segment represents the execution of a sequence of messages. Consequently, the analysis must summarize updates to constraints in $\abssysstore$ and $\puredom$ caused by the corresponding sequence of handler invocations.
To this end, we define a message segment handler, which over-approximates the effects of executing a handler in a cycle. We think of this handler as a procedure $\proc$ that summarizes the effect of the transitions along the detected cycle in terms of both the handler $\seq{\var}$ parameters and a segment-length parameter $\seglen$.
Specifically, we extend syntax of handlers $\hand$ with a segment handler that has an additional formal parameter $\seglen$ for the segment length $\hand \bnfdef \cdots \bnfalt \mkonseghand{\evnt}{ \seq{\var} }{\proc}$.
We consider a cycling handler $\evnt$ to be replaced by a segment handler $\evnt$.

\subparagraph{Deriving Segment Handlers.}
We consider transforming the original handler into a segment handler,
as solving a recurrence relation is generally non-trivial.
Consider an assignment of the form $\mkassign{\var}{\var + \val}$ in a loop, iterating $\seglen$ times.
This is a \emph{simple recurrence}~\cite{cra}, and as such always has a closed form $\var(\seglen) = \var_0 + \val\cdot\seglen$.
The addition of control-flow, such as if-else branches, leads to a more complex recurrence, formalized as \emph{linear recurrence inequations} in \cite{cra}.
Thus, a sound approach to generating message-segment handlers is to utilize the CRA domain introduced by \citet{cra}. Particularly, we consider Algorithm 4.14, proven to soundly represent such loops' effects and is parametric on an iteration count, which is our segment length parameter $\seglen$.

\subparagraph{Interpreting Segment Handlers.}
To use segment handlers in our analysis, we add the \\ \TirName{Abs-Step-Segment-Handle} rule to handle message segments:
\begin{mathpar}\small
\infer[Abs-Step-Segment-Handle]{
  \absmsg = \left(\mkmsgseg{\absmach}{\tolbl\absaddr}{\evnt}{ \seq{\absval} }\right)
  \and
  \mkonseghand{\evnt}{ \seq{\var} }{\proc} \in \maplookup{\topsys}{\mach}
  \\ \vspace{-1em}
  \abssysstore' =
      \left(\absstorecons{\abssysstore_0}{ \mapping{\tolbl\absaddr}{ \mkactorstore{\fieldslbl\absstore}{\storetop} } } \right)
  \and
  \abssysstore =
      { \left(\absstorecons{\abssysstore_0}{ \mapping{\tolbl\absaddr}{ \mkactorstore{\fieldslbl\absstore}{ (\mapping{\seglen}{\symseglen}, \seq{\mapping{\var}{\absval}}) } } }\right) }
  \and
  \puredom' = \puredom \land \paffinefun{ \abshisthyp{\absmsg}{\abshist} }
}{
  \jcausalbackstep{
    \mkabssysstate[\puredom']
      { (\absstorecons{\abssysloc}{ \mapping{\tolbl\absaddr}{ \mkmachsttevntloc{\mach}{\stt}{\idleevnt}{\exitloc} } }) }
      { \abssysstore' }
      { (\absnetcons{\absnet}{\absmsg}) }
      { \abshisthyp{\absmsg}{\abshist} }
  }{
    \mkabssysstate
      { (\absstorecons{\abssysloc}{ \mapping{\tolbl\absaddr}{ \mkmachsttevntloc{\mach}{\stt}{\evnt}{\entryloc} } })}
      { \abssysstore }
      { \absnet }
      { \abshist  }
  }
}
\end{mathpar}
The \TirName{Abs-Step-Segment-Handle} differs from \TirName{Abs-Step-Handle} rule in the addition of the segment-length $\seglen$, a formal parameter of the handler.
The segment-length parameter $\seglen$ can be constrained by the $\paffinefun{ \abshisthyp{\absmsg}{\abshist} }$ constraint. It is from the affine restriction that we derive the contradiction in our motivating example in \autoref{sec:message-segments} (based on number of copies of \fmtcode{eVote}).

The \TirName{Abs-Step-Segment-Send} rule is analogous to the normal \TirName{Abs-Step-Send} rule, witnessing a sender of a message of the message segment, which can instantiate the payload under the affine protocol property:
\begin{mathpar}\small
\infer[Abs-Step-Segment-Send]{
  \mktrans{\prelbl\loc}{ \mksend{\var}{\evnt'}{ \seq{\varalt} } }{\postlbl\loc}
  \in \maplookup{\maplookup{\topsys}{\mach}}{\evnt}
  \\
  \mapping{\fromlbl\absaddr}{\absactorstore} \in \abssysstore
  \and
  \absmsg = \left(\mkmsgseg{ \absmach }{ \tolbl\absaddr }{\evnt'}
    { \seq{ \absval  } }\right)
  \and  
  \puredom' = \left(\puredom \land \tolbl\absaddr = \maplookup{\absactorstore}{\var} \land \bigwedge\seq{ \absval = \maplookup{\absactorstore}{\varalt} }\right)
}{
  \jcausalbackstep{
    \mkabssysstate[ \puredom' ]
      { (\absstorecons{\abssysloc}{ \mapping{\fromlbl\absaddr}{ \mkmachsttevntloc{\mach}{\stt}{\evnt}{\prelbl\loc} } }) }
      { \abssysstore }
      { \absnet }{ \abshist }
  }{
    \mkabssysstate
      { (\absstorecons{\abssysloc}{ \mapping{\fromlbl\absaddr}{ \mkmachsttevntloc{\mach}{\stt}{\evnt}{\postlbl\loc} } }) }
      { \abssysstore }
      { (\absnetcons{\absnet}{\absmsg}) }{\abshist}
  }
}
\end{mathpar}
With the original event handler replaced by the generalized
segment handler, applications of actor-constraint causality rules (e.g. the
\TirName{Causal-Step-Store-Constraint} rule) can be used to witness the
effect of the segment handler on the store.

\subsection{Concretization and Soundness}

With our abstract domain extended, we now define concretization for message-segments. As a segment summarizes a sequence of messages, the concrete representation is a set of messages with matching receiver and argument, whose cardinality satisfies the segment-length.

\begin{mathpar}\small
\begin{array}{@{}ll@{}}
\text{$\conc{ \histcolor{\abshisthyp{\mkmsgseg{\mach}{\tolbl\absaddr}{\evnt}{\seq{\absval}}}{\abshist}} }{ \hist }$}
& \text{iff}
\\
\multicolumn{2}{@{}l@{}}{
\quad\qquad\text{%
$\conc{ \mkmsg{\absaddr_1}{\tolbl\absaddr}{\evnt}{\seq{\absval}} }{\msg_1}$ and $\cdots$ and
$\conc{ \mkmsg{\absaddr_k}{\tolbl\absaddr}{\evnt}{\seq{\absval}} }{\msg_k}$ s.t. $\seq{\msg}$ distinct implies
$\conc{ \abshist }{ \histcons{\histcons{\hist}{\msg_1}\cdots}{\msg_k} }$
}}
\\[1ex]
\text{$\conc{ \netcolor{\mkmsgseg{\mach}{\tolbl\absaddr}{\evnt}{\seq{\absval}}} }{ \netcons{\netcons{\msg_1}{\cdots}}{\msg_k} }$}
& \text{iff}
\\
\multicolumn{2}{@{}l@{}}{
\quad\qquad\text{%
$\conc{ \mkmsg{\absaddr_1}{\tolbl\absaddr}{\evnt}{\seq{\absval}} }{\msg_1}$ and $\cdots$ and
$\conc{ \mkmsg{\absaddr_k}{\tolbl\absaddr}{\evnt}{\seq{\absval}} }{\msg_k}$ s.t. $\seq{\msg}$ distinct
}
}
\end{array}
\end{mathpar}

In the message history, the set of messages must be consecutive. This introduces an important restriction for soundness, due to asynchrony.

\subparagraph{Message Orderings and Trace Equivalence.} A message-segment summarizes a sequence of $\seglen$ identical messages being received by an actor. However, in the concrete system  such a sequence may be broken up by additional messages received by the same actor, and this history is not considered by the analysis.
This adds an additional soundness criterion for message-segments: we can only replace a message cycle on event $\evnt$ with a message segment when all other messages an actor may receive are reorderable with respect to the handler for $\evnt$.

More precisely, to summarize a handler $h$ which may be interleaved with another handler $h_1$, we must show the \emph{independence} of the two handlers. Concretely, this means executing $h$ then $h_1$ must produce a state identical to the one produced by executing $h_1$ then $h$ (differing only by message order in the history). We formalize this notion of independence using Mazurkiewicz trace equivalence~\cite{mazurkiewicz}, under which executions differing only by reorderings of independent actions are considered equivalent. Our message-segments cause the analysis to pick a canonical ordering, which covers all equivalent traces.
Our additional soundness criterion requires this equivalence class to cover the possible behaviors of the concrete system, so that the message cycle can be completely replaced by a message-segment, without losing behaviors. Formally, we define $\mathit{symmetric}(h)$ to mean that the handler $h$ is independent of all other handlers, and thus reordering the history by only moving $h$'s position produces a Mazurkiewicz-equivalent trace.
Similarly for traces, we define $\mathit{symmetric}(\predtrace,\predtrace')$ to mean two traces are Mazurkiewicz-equivalent. In \autoref{sec:symmetry-appendix}, we provide additional details, including a precise formal rule for our abstraction.

\subparagraph{Message Segment Lengths.}

The segment-length of a message-segment is bounded, both from above and below, to prevent materializing more messages than physically possible in the concrete system. $\seglen$ must be non-negative, and from the affine protocol restriction, the upper bound of $\seglen$ is the size of the system, for instance \purecolor{$\exsyssize$} in our election example.
These trivial bounds alone are unlikely to be sufficient to provide a contradiction, but additional constraints on $\seglen$ may exist in the goal-state. In the election example, we know $\seglen > \exsyssize/2$ in the correct case, due to the constraint on \fmtcode{forMe}.
Having defined a summarization technique for histories, we again restate a soundness theorem:

\begin{theorem}[Message-Segment Soundness]\label{thm:segment}
  
If
$\jstep[\topsys]{\prelbl\predtrace}{\postlbl\predtrace}$
and
$\jcausalbackstep[\topsys]{\prelbl\abssysstate}{\postlbl\abssysstate}$
such that
$\concstate{\postlbl\abssysstate}{\postlbl\predtrace}$,
then $\mathit{symmetric}(\prelbl\predtrace,\predtrace')$ and $\concstate{\prelbl\abssysstate}{\predtrace'}$ for some $\predtrace'$.

\end{theorem}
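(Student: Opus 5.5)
The proof goes by case analysis on the rule deriving $\jcausalbackstep[\topsys]{\prelbl\abssysstate}{\postlbl\abssysstate}$. The causally-reduced relation is now extended with \TirName{Abs-Step-Segment-Handle} and \TirName{Abs-Step-Segment-Send}. For every rule already present in \autoref{sec:reflogic}, we take $\predtrace' = \prelbl\predtrace$. The relation $\mathit{symmetric}(\prelbl\predtrace,\prelbl\predtrace)$ holds by reflexivity of Mazurkiewicz equivalence, and $\concstate{\prelbl\abssysstate}{\prelbl\predtrace}$ is exactly Theorem~\ref{thm:causal-soundness}. The extended grammar only adds a new message form, so the only point to check is that the earlier proof never inspects the internal structure of a message. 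It does not, because segment concretizations are only introduced by the two new rules. So the real work is the two segment rules, where we handle consumption and production of a segment respectively.

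For \TirName{Abs-Step-Segment-Send}, we unfold the network concretization of $\netcolor{\absnetcons{\absnet}{\absmsg}}$ with $\absmsg$ a segment. This gives distinct concrete messages $\msg_1,\dots,\msg_k$ in the post-network, one of which, say $\msg_j$, was just enqueued by the \TirName{Step-Send} step we are inverting. We discharge this case like \TirName{Abs-Step-Send} in Lemma~\ref{lem:local-soundness}. The sender is reset to $\prelbl\loc$, and the pure constraints tying recipient and payload to the store hold under the same valuation. The remaining $k-1$ messages still lie in the pre-network. Here we read the rule as decrementing the segment, or equivalently as re-witnessing the segment with length parameter $\symseglen-1$, and we choose the valuation accordingly. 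The affine condition guarantees the $\msg_i$ have distinct senders, so no aliasing between them arises. Again $\predtrace'=\prelbl\predtrace$.

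The main obstacle is \TirName{Abs-Step-Segment-Handle}. The post-trace satisfies the history hypothesis $\abshisthyp{\absmsg}{\abshist}$ for a segment of length $k$, but the concrete execution handled those $k$ messages with other handlers of $\tolbl\addr$ possibly interleaved. We proceed in three steps.
\begin{enumerate}
\item Locate the $k$ handlings of the $\evnt$-messages to $\tolbl\addr$ in $\prelbl\predtrace$.
\item Use the side condition $\mathit{symmetric}(h)$ on the $\evnt$-handler, i.e.\ its independence from every other handler it may interleave with, to commute the intervening handler executions past them. Do this by induction on the number of inversions, each swap preserving the final state up to history order. This yields a Mazurkiewicz-equivalent trace $\predtrace'$ in which the $k$ handlings are consecutive and end at the current step.
\item Show $\concstate{\prelbl\abssysstate}{\predtrace'}$. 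The state before the block has $\tolbl\addr$ at an exit location, which concretizes the $\idleevnt$ location. The network contains $\msg_1,\dots,\msg_k$, matching the network concretization of the segment. Soundness of the CRA summary (Algorithm 4.14 of \citet{cra}) relates the field store before the block to the one after $k$ iterations, with $\seglen\mapsto k$. Finally, $\paffinefun{\abshisthyp{\absmsg}{\abshist}}$ holds because the concrete history satisfies the paffine condition enforced by \TirName{Step-Handle}.
\end{enumerate}

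The delicate points are in step 2. Each commutation must preserve concretization of the other materialized actors' locations, which follows because the swapped handlers belong to $\tolbl\addr$ alone. The reordered history must still satisfy the $\jaffine{\cdot}$ clause of $\abshistemp$, which holds since affineness is order-insensitive apart from the $\initevnt$-first constraint. If needed, we would strengthen $\mathit{symmetric}$ to exclude moving past $\initevnt$.
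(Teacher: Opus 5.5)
Your \TirName{Abs-Step-Segment-Handle} argument uses the same ingredients as the paper's: reordering via $\mathit{symmetric}(h)$ as in Lemma~\ref{thm:trace-eq}, reading $\seglen$ off the concrete count, and discharging $\operatorname{paffine}$ from affineness. The decomposition around it, however, has a genuine gap. You assert that every rule of \autoref{sec:reflogic} is covered verbatim by Theorem~\ref{thm:causal-soundness} with $\predtrace'=\prelbl\predtrace$, because the extension ``only adds a new message form''. That is not all it adds: it also replaces the cycling handler for $\evnt$ by a segment handler whose body is the CRA closed form (e.g.\ \ifmtcode{forMe = forMe + k}). The analysis steps backwards through that body using the \emph{old} rules \TirName{Causal-Step-Store-Constraint} and \TirName{Abs-Step-Instr}; this is where the $+\seglen$ in the quorum constraint of \autoref{fig:message-segments} comes from. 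Those abstract steps have no step-for-step concrete counterpart, since the concrete system runs the original handler $k$ times, possibly interleaved. Theorem~\ref{thm:causal-soundness} is about a protocol whose abstract and concrete handlers coincide, so it says nothing about them. The paper treats exactly this as a separate case (\TirName{Step-Instr} inside a segment handler) and discharges it with the soundness of the CRA summary (Theorem~4.15 of \citet{cra}). In your proof that case is missing.

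Moving the CRA appeal into the \TirName{Abs-Step-Segment-Handle} case cannot compensate. That rule leaves the field store $\fieldslbl\absstore$ unchanged and only discards the handler-local store $(\mapping{\seglen}{\symseglen}, \seq{\mapping{\var}{\absval}})$. In your step~2 the post-state is witnessed at the end of the block of $k$ handlings, so the very same field constraints would then have to hold at the state before the block. CRA soundness only tells you the two stores are related by the recurrence, not that they agree; for \fmtcode{forMe} they differ by $k$. The consistent allocation is the paper's: the segment-handle step relates the pre-block state to the (virtual) entry of the segment handler, with identical fields, so store concretization is trivial there. The change across the $k$ iterations is accounted for in the instruction steps through the summary body.

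Two smaller points:
\begin{itemize}
\item \TirName{Abs-Step-Segment-Send} as stated drops the whole segment (the pre-network is $\absnet$), so no ``decrement'' reading is needed. The leftover $k-1$ messages are harmless because network concretization is intuitionistic. Re-valuing $\symseglen$, by contrast, would break pure constraints that mention it.
\item Making $\msg_1,\dots,\msg_k$ consecutive in the \emph{global} history also requires commuting other actors' steps out of the block. That is the actor-model serialization step of Lemma~\ref{thm:trace-eq}, not $\mathit{symmetric}(h)$, so your remark that the swapped handlers belong to $\tolbl\addr$ alone is not accurate.
\end{itemize}
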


This is similar to \autoref{thm:causal-soundness}, but differs in that the preceding trace need not match, so long as the new trace $\predtrace'$ is \emph{Mazurkiewicz-equivalent}.
See \autoref{sec:mechanization} for proof. This follows from our transformed handler being sound according to \cite{cra}, our restriction  based on symmetry to ensure trace coverage, and the affine restriction producing constraints on symbolic segment lengths.

\subsection{Backwards Abstract Interpretation with Materialization}

Having established the soundness of the backwards abstract semantics, we now describe its use in verifying a protocol. Starting from an error-goal state, we apply the backwards abstract semantics repeatedly, until fixed-point is reached. If the concretization of the fixed-point invariant $\abssysstate$ does not include any initial states, then the analysis has refuted the error-goal state (and consequently, proven safe its negation, the safety condition).
The remaining challenge is to ensure the analysis terminates, and thus the fixed-point can be computed. Specifically, the analysis may not terminate with unbounded actor materialization, causing unbounded handler exploration.

In the remainder of this section, we describe the sources of cycles in the analysis, and define the notion of an over-approximate \emph{causal-dependence graph} for a protocol to capture these sources. We then discuss strategies for widening to ensure termination.

\subsubsection{Causal Cycles.}\label{sec:causal-cycles}

The causally-reduced abstract semantics $\jcausalbackstep[\topsys]{\abssysstate'}{\abssysstate}$ of \autoref{sec:reflogic} explores execution of a handler in two cases: attempting to witness a field assignment relevant to a constraint and attempting to witness a sender of a message in the network buffer.
The former is captured by the \TirName{Causal-Step-Store-Constraint} rule that attempts to witness a points-to constraint in an actor's store
\[
  \text{$\mapping{\var}{\symvar}$ with an assignment transition
  $\mktrans{\prelbl\loc}{\mkassign{\var}{\expr}}{\postlbl\loc} \in \maplookup{\maplookup{\topsys}{\mach}}{\evnt}$}
\;,
\]
while the latter is captured by the \TirName{Causal-Step-Network} rule that attempts to witness a message in the network
\[
  \text{$\mkmsg{\fromlbl\absaddr}{\tolbl\absaddr}{\evnt'}{\seq{\symvar}}$ with a send transition
  $\mktrans{\prelbl\loc}{ \mksend{\var}{\evnt'}{ \seq{\varalt} } }{\postlbl\loc} \in \maplookup{\maplookup{\topsys}{\mach}}{\evnt}$}
\;.
\]

Attempting to witness store constraints $\mapping{\var}{\symvar}$ or network constraints $\mkmsg{\fromlbl\absaddr}{\tolbl\absaddr}{\evnt'}{\seq{\symvar}}$ requires the analysis to explore the execution of a handler from a corresponding control location. This can update or add constraints, including another message for the same event, or the same variable.
While the affine restriction prevents witnessing the same message $\sendkw$ for a given symbolic actor $\fromlbl\absaddr$, it does not prevent unbounded materialization of actors in the parameterized actor model.

\subsubsection{Causal-Dependence Graphs.}

To capture these dependencies statically, we define the \emph{causal-dependence graph} $\causalgraph{\topsys}$ of a distributed protocol $\topsys$. This graph over-approximates the relationships between sends and assignments which guide the causally-reduced semantics of \autoref{sec:reflogic}. 
Its vertices are fields $\mkmachfld{\mach}{\var}$ or event types $\mkmachevnt{\mach}{\evnt}$, with directed edges indicating that one field or event may require witnessing another. Intuitively, this means that witnessing the target send or assignment produces an obligation to witness the source send or assignment. In particular, we distinguish 5 types of dependence:
\begin{description}\small
  \item[data dependence]
  $\causaltofrom{\mkmachfld{\mach}{\varalt}}{\mkmachfld{\mach}{\var}}$ if there is an assignment to variable $\var$ that is data dependent on $\varalt$ (i.e.,
  $\causaltofrom{\mkmachfld{\mach}{\varalt}}{\mkmachfld{\mach}{\var}} \in \causalgraph{\topsys}$ if
  $(\mktrans{-}{\mkassign{\var}{\expr}}{-}) \in \maplookup{\maplookup{\topsys}{\mach}}{\evnt}$ where $\varalt$ is in the free variables of expression $\expr$, for some event type $\evnt$).
  \item[control dependence]
  $\causaltofrom{\mkmachfld{\mach}{\varalt}}{\mkmachfld{\mach}{\var}}$ if there is an assignment to variable $\var$ that is control dependent on $\varalt$ (i.e.,
  $\causaltofrom{\mkmachfld{\mach}{\varalt}}{\mkmachfld{\mach}{\var}} \in \causalgraph{\topsys}$ if
  $\mktrans{\prelbl\loc}{\mkassign{\var}{-}}{-} \in \maplookup{\maplookup{\topsys}{\mach}}{\evnt}$ where a guard using $\varalt$ dominates location $\prelbl\loc$, for some event type $\evnt$).
  \item[send-control dependence] 
  $\causaltofrom{\mkmachfld{\mach'}{\varalt}}{\mkmachevnt{\mach}{\evnt}}$ if there is a send of event $\evnt$ with a control dependence on variable $\varalt$ (i.e.,
   $\causaltofrom{\mkmachfld{\mach'}{\varalt}}{\mkmachevnt{\mach}{\evnt}} \in \causalgraph{\topsys}$ if
  $\mktrans{\prelbl\loc}{ \mksend{\var}{\evnt}{-} }{-} \in \maplookup{\maplookup{\topsys}{\mach'}}{\evnt'}$ 
  where $\var$ has machine type $\mach$ and a guard using $\varalt$ dominates location $\prelbl\loc$, for some event type $\evnt'$ and machine type $\mach'$).
  \item[assign-handle dependence] 
  $\causaltofrom{\mkmachevnt{\mach}{\evnt}}{\mkmachfld{\mach}{\var}}$ if there is an assignment to variable $\var$ in a handler for event $\evnt$ (i.e., $\causaltofrom{\mkmachevnt{\mach}{\evnt}}{\mkmachfld{\mach}{\var}} \in \causalgraph{\topsys}$ if $\mktrans{-}{\mkassign{\var}{-}}{-} \in \maplookup{\maplookup{\topsys}{\mach}}{\evnt}$).
  \item[send-handle dependence] 
  $\causaltofrom{\mkmachevnt{\mach'}{\evnt'}}{\mkmachevnt{\mach}{\evnt}}$ if there is a send of an event $\evnt$ in a handler for event $\evnt'$ (i.e., $\causaltofrom{\mkmachevnt{\mach'}{\evnt'}}{\mkmachevnt{\mach}{\evnt}} \in \causalgraph{\topsys}$ if $\mktrans{-}{ \mksend{\var}{\evnt}{-} }{-} \in \maplookup{\maplookup{\topsys}{\mach'}}{\evnt'}$ where $\var$ has machine type $\mach$).
\end{description}

This causal-dependence graph thus captures which fields and events are dependent upon one another. We say it is actor-insensitive, as it does not consider distinct actors, only their types.

\begin{figure}
\begin{subfigure}[t]{.4\textwidth}
\centering\begin{tikzpicture}
  \tikzset{every path/.style={thick}}
  \tikzset{every loop/.style={min distance=10mm,in=90,out=40,looseness=10}}
  
  \node[circle, fill=Midnight, inner sep=2pt, label=right:\textcolor{Midnight}{\sfmtcode{Node.eInit}}] (e0) at (1,2) {};
  \node[circle, fill=Midnight, inner sep=2pt, label=left:\textcolor{Midnight}{\sfmtcode{Node.eVote}}] (e1) at (0,3) {};
  
  \node[draw=Cayenne, rectangle, fill=Cayenne, minimum size=4pt, inner sep=0pt, label=right: \textcolor{Cayenne}{\sfmtcode{Node.forMe}}] (f0) at (1,3) {};
  \node[draw=Cayenne, rectangle, fill=Cayenne, minimum size=4pt, inner sep=0pt, label=left:\textcolor{Cayenne}{\sfmtcode{Node.id}}] (f1) at (1,1) {};
  \node[draw=Cayenne, rectangle, fill=Cayenne, minimum size=4pt, inner sep=0pt, label=left:\textcolor{Cayenne}{\sfmtcode{Node.leader}}] (f2) at (0,2) {};

  \draw[<-] (e0) -- (e1);
  \draw[<-] (e1) -- (f2);
  \draw[<-] (f0) -- (f2);
  \draw[<-] (e1) -- (f0);
  \draw[<-] (f1) -- (f2);
  
  \draw[<-, dashed, dash pattern=on 2pt off 1pt] (f0) edge[loop above] ();
\end{tikzpicture}

\caption{}
\label{fig:osv}
\end{subfigure}
\quad
\begin{subfigure}[t]{.5\textwidth}
\centering\begin{tikzpicture}
  \tikzset{every path/.style={thick}}
  \tikzset{every loop/.style={min distance=10mm,in=0,out=60,looseness=10}}
  \node[circle, fill=Midnight, inner sep=2pt, label=left:\textcolor{Midnight}{\sfmtcode{Participant.eInit}}] (e0) at (0,4) {};
  \node[circle, fill=Midnight, inner sep=2pt, label=right:\textcolor{Midnight}{\sfmtcode{Coordinator.eInit}}] (e1) at (1.5,4) {};
  \node[circle, fill=Midnight, inner sep=2pt, label=left:\textcolor{Midnight}{\sfmtcode{Participant.eProp}}] (e2) at (0,3) {};
  \node[circle, fill=Midnight,inner sep=2pt, label=right:\textcolor{Midnight}{\sfmtcode{Coordinator.eResp}}] (e3) at (1.5,3) {};
  \node[circle, fill=Midnight, inner sep=2pt, label=right:\textcolor{Midnight}{\sfmtcode{Participant.eFinal}}] (e4) at (1.5,1) {};

  \node[draw=Cayenne, rectangle, fill=Cayenne, minimum size=4pt, inner sep=0pt, label=left:\textcolor{Cayenne}{\sfmtcode{Participant.potential}}] (f0) at (0,2) {};
  \node[draw=Cayenne, rectangle, fill=Cayenne, minimum size=4pt, inner sep=0pt, label=left:\textcolor{Cayenne}{\sfmtcode{Participant.internal}}] (f1) at (0,1) {};
  \node[draw=Cayenne, rectangle, fill=Cayenne, minimum size=4pt, inner sep=0pt, label={[yshift=-5pt,xshift=4pt]right:\textcolor{Cayenne}{\sfmtcode{Coordinator.received}}}] (f2) at (1.5,2) {};
  
  \draw[<-, bend right=35] (e3) to (e4);
  \draw[<-] (f0) -- (f1);
  \draw[<-] (e1) -- (e2);
  \draw[<-] (f2) -- (e4);
  \draw[<-] (e3) -- (f2);
  \draw[<-] (e4) -- (f1);
  \draw[<-] (e2) -- (e3);
  \draw[<-] (f1) -- (e3);
  \draw[<-] (e2) -- (f0);
  \draw[<-, dashed, dash pattern=on 2pt off 1pt, loop right] (f2) to ();
\end{tikzpicture}

\caption{}
\label{fig:tpcv}

\end{subfigure}

\caption{Causal dependence graphs abstract the causally-reduced abstract semantics $\jcausalbackstep[\topsys]{\abssysstate'}{\abssysstate}$ of \autoref{sec:reflogic} in an actor-insensitive manner. Field vertices $\mkmachfld{\mach}{\var}$ are drawn as squares \storecolor{$\blacksquare$}, and event vertices $\mkmachevnt{\mach}{\evnt}$ are drawn as circles \histcolor{$\bullet$}. For presentation, we elide the assign-handle dependence edge $\causaltofrom{\mkmachfld{\mach}{\initevnt}}{\mkmachfld{\mach}{\var}}$ for initializing an actor's field $\var$.
We draw dashed edges $\causaltofromdash{\mkmachevnt{\mach}{\var}}{\mkmachevnt{\mach}{\var}}$ for data-dependence self-loops that we can summarize using recurrence analysis and message-segment summarization.
(a) The one-shot leader election protocol from \autoref{fig:eVote} in \autoref{sec:motivation}.
(b) A two-phase commit protocol.
On a receiving a proposed value in a \histcolor{$\exfmtcode{eProp}$}, a $\ifmtcode{Participant}$ sends a positive response \histcolor{$\exfmtcode{eResp}$} to the \ifmtcode{Coordinator} if its \storecolor{$\exfmtcode{internal}$} field is not set, and when the \ifmtcode{Coordinator} receives this response, it increments a counter \storecolor{$\exfmtcode{received}$}. When the counter \storecolor{$\exfmtcode{received}$} surpasses a threshold, it sends \histcolor{$\exfmtcode{eFinal}$} messages that causes each \ifmtcode{Participant} to commit the proposed value stored in \storecolor{$\exfmtcode{potential}$} to \storecolor{$\exfmtcode{internal}$}.
}
\label{fig:causal-graphs}
\end{figure}

In \autoref{fig:causal-graphs}, we show two examples of causal-dependence graphs for (\subref{fig:osv}) the one-shot leader election protocol from \autoref{sec:motivation} and (\subref{fig:tpcv}) for a two-phase commit protocol. Any cycle in the graph corresponds to a potential source of non-termination in the abstract semantics where we need some strategy to ensure termination.

\subsubsection{Cycle Summarization.}

We consider the one-shot leader election's causal-dependence graph in \autoref{fig:osv}, which has only the self-loop on $\mkmachfld{\ifmtcode{Node}}{\ifmtcode{forMe}}$ as a cycle. We show this self-loop as a dashed edge $\causaltofromdash{\mkmachfld{\ifmtcode{Node}}{\ifmtcode{forMe}}}{\mkmachfld{\ifmtcode{Node}}{\ifmtcode{forMe}}}$ to indicate that it can be handled by the message-segment summarization described in \autoref{sec:segment-handlers}.
In particular, by rewriting the relevant handler to explicitly use a closed-form solution, the resulting program has no cycle involving the field, and thus we can reduce this graph to an acyclic one.

While not all cycles can be summarized this way, the presence of a cycle in the causal-dependence graph does not necessarily cause non-termination. Backwards exploration still terminates when repeated traversal of a cycle produces no new abstract states, yielding an inductive invariant. To detect convergence, we revisit the entailment relation described in \autoref{sec:actorlogic}, to check whether states produced by cycle exploration are subsumed by previously encountered states.

We give an operational definition of the entailment relation $\jincluded{\abssysstate}{\abssysstate'}$ by matching the separation, linear, and ordered logic constraints of $\abssysstate'$ against those of $\abssysstate$ component-wise under a unification map for existentially quantified symbolic values $\absval$. Operationally, this amounts to a separation-logic subtraction algorithm~\cite{DBLP:conf/aplas/BerdineCO05}. Since location maps $\abssysloc$ are represented as separation-logic constraints rather than simple locations, they are handled by the same procedure. Consequently, entailment between abstract states
\smash{$\jincluded{ \abssysstate\colon \mkabssysstate{\abssysloc}{\abssysstore}{\absnet}{\abshist} }{ \abssysstate'\colon \mkabssysstate{\abssysloc'}{\abssysstore'}{\absnet'}{\abshist'} }$} only if the location maps satisfy entailment \smash{$\jincluded{\abssysloc}{\abssysloc'}$}.

In \autoref{fig:tpcv}, we show the causal-dependence graph for a two-phase commit protocol. The $\causaltofromdash{ \mkmachfld{\ifmtcode{Coordinator}}{\ifmtcode{received}} }{ \mkmachfld{\ifmtcode{Coordinator}}{\ifmtcode{received}} }$ self-loop is resolved by message-segment summarization, as in the one-shot leader election example.
But there is another cycle from $\mkmachfld{\ifmtcode{Participant}}{\ifmtcode{internal}}$ to $\mkmachevnt{\ifmtcode{Participant}}{\ifmtcode{eFinal}}$ to $\mkmachfld{\ifmtcode{Coordinator}}{\ifmtcode{received}}$ to $\mkmachevnt{\ifmtcode{Coordinator}}{\ifmtcode{eResp}}$ back to $\mkmachfld{\ifmtcode{Participant}}{\ifmtcode{internal}}$. In this example, this cycle is resolved by entailment in the fixed-point iteration.

\subsubsection{Termination With Causal Dependencies.}

Suppose cycles in the causal-dependence graph of the protocol are eliminated by message-segments, entailment, or additional mechanisms. We are left with acyclic causal-dependence graphs (more generally, the acyclic condensation of the strongly-connected components).
Even with cycles removed, termination of the fixed-point iteration of the abstract causal-reduced semantics $\jcausalbackstep[\topsys]{\abssysstate'}{\abssysstate}$ is still non-trivial. The analysis is allowed to materialize an unbounded number of actors, which can lead to non-termination. In the remainder of this section, we formalize a sufficient termination condition of the analysis of a protocol $\topsys$ leveraging an acyclic causal-dependence graph abstraction $\causalgraph{\topsys}$ and provide a ranking-function proof of this condition.

We write $\jinitbot{\abssysstate}$ for the judgment that abstract state $\abssysstate$ includes a concrete initial distributed-system state or is contradictory. Furthermore, let us write $\jcausalbackstepn[n]{\abssysstate'}{\abssysstate}$ for the $n$-step iteration of the abstract causal-reduced semantics on non-disjunctive abstract states.
\vspace{-.4cm}
\begin{theorem}[Termination of Fixed-Point Iteration with Unbounded Actor Materialization under Acyclic Causal Dependencies]\label{thm:termination}
Assume the causal-dependence graph $\causalgraph{\topsys}$ of a protocol $\topsys$ is acyclic. Then, for any non-disjunctive abstract state $\abssysstate$, there exists a natural-number bound $n$ such that $\jcausalbackstepn[n]{\abssysstate'}{\abssysstate}$ leads to some non-disjunctive abstract state $\abssysstate'$ where
$\jinitbot{\abssysstate'}$.
\end{theorem}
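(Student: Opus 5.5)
The plan is a ranking-function argument over the acyclic graph $\causalgraph{\topsys}$. Because the graph is acyclic, fix a topological order and give each vertex $v$ (field $\mkmachfld{\mach}{\var}$ or event $\mkmachevnt{\mach}{\evnt}$) a height $h(v)$. Take $h(v)$ to be the length of the longest path in $\causalgraph{\topsys}$ from $v$ along the ``requires'' direction, i.e. to sources whose witnessing produces no further obligations. A source is typically an $\initevnt$ handler whose triggering message comes from $\mainmach$. Each non-disjunctive abstract state $\abssysstate$ then carries a finite multiset of \emph{outstanding obligations}. These are the store points-to constraints $\mapping{\var}{\symvar}$ occurring in $\puredom$, the messages in $\absnet$, and the materialized actors not at an idle location. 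I would map each obligation to the height of its vertex, together with the remaining distance to $\entryloc$ within the current handler, which is finite because handler bodies are finite and loop-free modulo the segment handlers. The rank of $\abssysstate$ is this multiset, ordered by the Dershowitz--Manna multiset extension of the lexicographic order on (height, intra-handler distance). That order is well-founded.

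The core step is a case analysis on the rules of $\causalbacksteparrow$, showing that each step strictly decreases the rank. \TirName{Abs-Step-Instr} and the send rules move an actor strictly closer to $\entryloc$. They may add new pure or network obligations, but only on variables or events that the current vertex depends on. By the data, control, and send-control dependence edges these have strictly smaller height, so a multiset replacement of one element by finitely many smaller ones is a decrease. \TirName{Abs-Step-Handle} and \TirName{Abs-Step-Segment-Handle} discharge the actor's in-handler obligation and add a message obligation for $\mkmachevnt{\mach}{\evnt}$. That message obligation is below the handled vertex via the send-handle and assign-handle edges. The causal steps \TirName{Causal-Step-Store-Constraint} and \TirName{Causal-Step-Network} replace a field or message obligation with an in-handler obligation at the same vertex's handler. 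Here the assign-handle and send-handle edges guarantee the new obligation is strictly lower, or equal in height with a finite intra-handler distance. To make that comparison lexicographically smaller, I would order handler-position obligations below field and event obligations of the same height.

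Materialization (\TirName{Causal-Materialize-Actor-Network}) is the step I expect to be the main obstacle. It is where unbounded actor creation could defeat termination. My plan is to charge each materialized actor to the message obligation that triggered it: a fresh sender is created only to discharge a specific $\absmsg$ in $\absnet$. The new actor's obligations all sit strictly below $\absmsg$'s vertex, so the total number of actors materialized is bounded by the product of branching and graph depth along any single path. The affine condition $\paffinefun{\cdot}$ ensures that the same sender cannot be re-used to witness a second identical send, which would otherwise re-create an obligation at equal height. Aliasing disjuncts only add pure equalities and do not increase the rank. For non-disjunctive iteration, each disjunct is ranked separately.

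Finally, conclude by well-foundedness that any maximal $\causalbacksteparrow$ sequence from $\abssysstate$ has length bounded by some $n$. The bound is computable from the rank, and by König's lemma it is uniform over the finitely branching choices. A state with no applicable rule has an empty obligation multiset. That is the case unless its pure constraints are unsatisfiable, which counts as contradictory. An empty multiset means all actors are idle or at $\initevnt$ entry with messages from $\mainmach$, so the state includes an initial state, giving $\jinitbot{\abssysstate'}$.
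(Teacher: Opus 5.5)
Your route differs from the paper's in how it packages the same core fact. Both arguments rest on acyclicity of $\causalgraph{\topsys}$: every obligation created while witnessing one at vertex $v$ lies strictly closer to the sources, and handler traversal is bounded. This covers fields via data/control dependence, the handled message via assign-handle/send-handle dependence, and a materialized sender via send-handle dependence. The paper counts explicitly. With $i+j+k$ outstanding actors, fields and messages, one round of resolution multiplies the work by at most $2D$, the next by $2(D-1)$, and so on. The falling factorial in $s(n)=2^n(D!/(D-n)!)(i+j+k)HW^n$ (the mechanized Worklist-Lemma) then vanishes after $D$ rounds. You instead use a Dershowitz--Manna multiset ordering over (height, kind, intra-handler distance). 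Your version is scheduling-independent: it shows every maximal sequence terminates however rules are interleaved, and needs only finite fan-out rather than synchronized rounds with a fan-out bound that shrinks with depth. The paper's version yields an explicit, if crude, bound, whereas yours obtains one only via K\"onig's lemma. Your appeal to $\paffinefun{\cdot}$ is unnecessary: acyclicity alone prevents re-creating an obligation at equal height, and the paper does not use affineness here.

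Two steps fail as written and need repair.

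First, the claimed strict decrease at every step does not hold for your obligations. \TirName{Causal-Step-Store-Constraint} and \TirName{Causal-Step-Network} do not remove the field constraint or message they target. They only move an idle actor to the witnessing transition's post-location; the target is discharged later by \TirName{Abs-Step-Instr} or \TirName{Abs-Step-Send}. Likewise \TirName{Causal-Materialize-Actor-Network} adds an actor while $\absmsg$ stays in the network. So your multiset grows at those steps.

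The fix is to let a positioned actor own its target, counting the pair once as an in-handler obligation labelled by the target's vertex. Only this labelling makes your tie-break consistent. If you labelled by the handler's own event instead, \TirName{Abs-Step-Handle} would trade the obligation for a message obligation at that same vertex, which your tie-break ranks higher. You should also commit the causal step to the witnessing transition. Otherwise, from a join-point post-location, \TirName{Abs-Step-Instr} may leave by another incoming edge and return the actor to idle with its target undischarged. The same field can then trigger unboundedly many handler traversals, each adding a fresh message. For the existential statement it suffices to follow the committing path.

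Second, the claim that ``no applicable rule implies an empty obligation multiset'' is unjustified and false in general. A network message whose event type no handler sends blocks every rule while the pure constraints remain satisfiable. As the paper does, classify such stuck non-initial states as unreachable, hence contradictory, so that they satisfy $\jinitbot{\cdot}$.
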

\vspace{-.4cm}
See Appendix \ref{sec:termination} for proof details. Here, we give some intuition.
Let $D$ be the maximum depth of the causal-dependence graph $\causalgraph{\topsys}$, and $H$ be the maximum number of steps needed to reach an $\entryloc$ location from an $\exitloc$ location in the underlying semantics of the protocol $\topsys$. We define $W$ to be the max-indegree of any location within a handler, to conservatively account for branching control-flow.
We can define the amount of work remaining to be done as $i+j+k$, where we let $i, k,j$ be the number of non-initial actors, unconstrained fields, and non-initial messages, respectively. We shall show the amount of work to be done grows by a finite amount and ultimately grows by 0. We end up with a sequence  $s(n) = 2^n(D!/(D-n)!)(i+j+k)HW^n$. By \reftxt{Lemma}{thm:worklist} in Appendix \ref{sec:mechanization}, this sequence converges to 0. $\qed$

\section{Empirical Evaluation}\label{sec:evaluation}

In this section, we describe our verification tool \tool{} and evaluate the efficacy of our techniques for automatically verifying the safety of distributed protocols with affine communication.
In particular, we consider the following research questions (RQs):
\begin{description}
  \item[RQ1] Are our techniques effective at verifying safety of parametric actor systems with non-trivial affine communication?
  \item[RQ2] Is actor-causal reduction necessary for reducing the explored state space?
  \item[RQ3] Is message-segment summarization necessary to verify distributed protocols?
  \item[RQ4] Are entailment checks necessary for reducing the explored state space?
\end{description}

\subparagraph{Implementation.}

We implement our goal-directed verification engine in a tool called \tool{}
written in OCaml. The input to \tool{} is the P source code of a parametric
actor program and a negated safety property. \tool{} verifies safety properties of these protocols by performing backwards abstract interpretation from a state over-approximating the negated property, and refuting the existence of a path that leads to a violation of safety. The output of \tool{} is either an error trace demonstrating a possible path to a violation or a statement that no such path exists.

\subsubsection*{\textbf{RQ1}: Are our techniques effective at verifying safety of parametric actor systems with non-trivial affine communication?}

\begin{table}[b]
\caption{Distributed protocols defined as parametric actor systems with affine communication verified by \tool. Note that \tool{} verifies a consensus or mutual execution safety property of these protocols without user-provided invariants.}
\centering\smaller
\begin{tabular}{ | c | c | c | c | c | c | c |  c | c |}
\hline
\textbf{Benchmark} & \textbf{SLOC} & \textbf{Handlers} & \makecell{\textbf{Unique}\\ \textbf{Msgs}} & \makecell{\textbf{States} \\ \textbf{Explored}} &
\makecell{\textbf{Actors} \\ \textbf{Mat.}} &
\makecell{\textbf{History} \\ \textbf{Length}} & \makecell{\textbf{Summary} \\ \textbf{Messages}} & \makecell{\textbf{Time}\\ \textbf{(h:m:s)}}  \\
\hline
One-Shot Election & 56 & 3 & 3 & 669 & 6 & 9 & 2 &  0:02:40 \\
Bakery & 99 & 9 & 4 & 16 & 4 & 4 & 1 & 0:00:08 \\
Dekker's Algorithm & 68 & 5 & 5 & 1030 & 7 & 11 & 0 & 0:00:43 \\
Drinking Philosophers & 58 & 3 & 2 & 110 & 4 & 4 & 0 & 0:00:22 \\
Two-Phase Commit & 55 & 5 & 5 & 5888  & 8  & 12 & 1 & 1:08:46 \\
Barrier Lock & 49 & 5 & 4  & 35 & 4 & 5 & 1 & 0:00:04 \\
Atomic Broadcast & 80 & 7 & 6 & 9275 & 8 & 13 & 2 & 1:12:46 \\
\hline
\end{tabular}
\label{tab:bench}  
\end{table}

In \reftxt{Table}{tab:bench}, we list the non-trivial distributed protocols we verify using \tool{}.
One-Shot Election refers to the leader election protocol described in \autoref{sec:motivation}, Bakery refers to the distributed implementation of Lamport's Bakery algorithm for mutual exclusion~\cite{bakery-cacm74},
Two-Phase Commit is a standard transaction protocol where a set of processes agree to commit a value (or decide not to), and
Barrier Lock is a distributed lock server. Drinking Philosophers is a variant of the classical dining philosophers algorithm with asynchronous communication, and Dekker's algorithm is another classical mutual exclusion algorithm.

For each benchmark, \tool{} starts only with the error-goal state violating either consensus for One-Shot,Two-Phase Commit, and Atomic Broadcast, or mutual execution for the remainder. No additional user-provided specifications or invariants are needed to verify these properties, and the tool is able to refute the existence of a path to a violation of safety in each case. In particular, this is the first time we are aware of that Two-Phase Commit, as well as One-Shot (sometimes referred to as Toy Consensus, as in Ivy~\cite{ivy}), have been verified without user-provided invariants.

The SLOC column shows the number of lines of P source code, the Handlers
column lists the number of unique message handlers (e.g., \lstinline!on eVote do $\ldots$!),
and the Unique Msgs column lists the maximum number of unique events (e.g., \lstinline!eVote!) that occur in an execution trace of the protocol. Handlers and Unique Msgs
determine the number of actors to be materialized and branches to be explored in each case.
Bakery, for example, requires exploring executions containing up to 4 (unique) messages before the analysis can refute safety violation,
whereas Two-Phase Commit looks at all 5 unique message types in its code. 
The experiments were run on a MacBook Pro with a 2.4GHz i9 processor, and 64 GB of RAM, using 12 cores.

\begin{wraptable}{r}{.5\linewidth}\smaller
\caption{Number of queries checked over the course of analysis for each benchmark. The number of entailment queries grows quickly with the state space, whereas feasibilty grows more linearly.}
\label{tab:queries}
\begin{tabular}{ | c | c | c |}
\hline
\textbf{Benchmark} & \makecell{\textbf{Entailment} \\ \textbf{Queries}} & \makecell{\textbf{Feasibility} \\ \textbf{Queries}}\\ 
\hline
Bakery & 7 & 25 \\
\hline
Two-Phase Commit & 339864 & 8910 \\
\hline 
One-Shot & 4209 & 815 \\
\hline
Barrier Lock & 13 & 50 \\
\hline
Dekker's Algorithm & 7806 & 1214 \\
\hline
Philosophers & 522 & 227 \\
\hline
Atomic Broadcast & 629940 & 12429 \\
\hline
\end{tabular}
\end{wraptable}%
Table~\ref{tab:queries} collects the total number of entailment and feasibility queries checked over the experiment's runtime, to see where verification time is spent.
As the number of states grows, so does the number that may entail new states, and thus the number of entailment checks.
Two-Phase Commit and Broadcast have far more entailment queries than the rest, contributing to the extended runtime, while the faster experiments make fewer queries. Deeper analysis of the history leads to this state space growth, and thus the increased time spent checking entailment.

\begin{wrapfigure}{r}{0.7\linewidth}
\centering\smaller
\scalebox{0.8}{\begin{tikzpicture}
\begin{axis}[
    xbar stacked,
    bar width=12pt,
    symbolic y coords={2PC,Bakery,One-Shot,Barrier, Philosophers, Dekkers, Atomic BCast},
    ytick=data,
    xmin=0, xmax=100,
    xlabel={Runtime Percentage},
    height=3cm,
    width=0.75\textwidth,
    scale only axis,
    tick label style={font=\small},
    label style={font=\small},
    legend style={
        at={(0.5,1.05)},
        anchor=south,
        legend columns=3,
        font=\small
    }
]

\addplot[
    fill=blue!80!white!60,
    postaction={pattern=north east lines}
] coordinates {(1.2,2PC) (50.5,Bakery) (7.6,One-Shot) (26.7,Barrier) (10.7,Philosophers) (20.9,Dekkers) (2.3,Atomic BCast)};

\addplot[
    fill=red!80!white!60
] coordinates {(98.3,2PC) (41.3,Bakery) (45.4,One-Shot) (55.8,Barrier) (29.2,Philosophers) (71.4,Dekkers) (96.8,Atomic BCast)};

\addplot[
    fill=green!80!white!60,
    postaction={pattern=crosshatch}
] coordinates {(0.5,2PC) (8.2,Bakery) (47,One-Shot) (17.5,Barrier) (60.1,Philosophers) (8.7,Dekkers) (0.9,Atomic BCast)};

\legend{Contradiction, Entailment, Remainder}

\end{axis}
\end{tikzpicture}}
\caption{Runtime analysis of time spent resolving SMT queries on contradiction and entailment, as opposed to the remainder of the execution.}
\label{fig:smtdata}
\end{wrapfigure}
To understand the impact of SMT queries on runtime, in \autoref{fig:smtdata}, we collect the percentage of time spent resolving SMT queries, divided into entailment and contradiction, versus the remainder of the analysis, which includes both exploration and parallelism coordination.
At runtime, we resolve these queries via Z3. These queries are heavily parallelized, yet remain a significant bottleneck to the runtime. While queries are convenient to generate, they may be inefficient to solve compared with more specialized domain solvers.
We observe that the SMT queries, especially entailment, make up a substantial amount of the runtime, which grows to dominate as the state space considered grows.

\vspace{-.25cm}
\subsubsection*{\textbf{RQ2}: Is actor-causal reduction necessary for reducing the explored state space?}

We ran an ablation study disabling the causal reductions of \autoref{sec:reflogic}. We capped the analysis at exploring the same number of states as in our primary run. We collect the same metrics on explored states, to compare how deep into execution the analysis explores without reductions. Without
causality reductions, we expect the analysis to branch extensively,
manifesting many states with shallow message histories.
In \autoref{tab:ablation}, we collect the results of this study.
In each benchmark, the queue remaining is non-trivial with respect to the states explored in the original benchmark.
The remaining queue is a conservative metric, as each queued state could cause many more to be added, so the comparison is not just a matter of considering the extra states.
Additionally, for all benchmarks but the Barrier Lock, the history length is much smaller than the final maximum length needed to refute the safety violation. This means each run remained closer to the error state, instead of getting to initial or contradiction, due to more aggressive branching. These results confirm causality reduction's
crucial role in reducing the state space and enabling the analysis to verify the safety property.

\begin{wraptable}{r}{0.5\linewidth}
\caption{Actor-causal reduction ablation: we collect the number of remaining states to explore, and history length and actor count.}
\centering
\smaller
\begin{tabular}{ | c | c | c | c |}
\hline
\textbf{Benchmark} & \makecell{\textbf{Remaining} \\ \textbf{Queue}} & \makecell{\textbf{History} \\ \textbf{Length}} &\textbf{Actors}\\ 
\hline
Bakery & 15 & 2 & 2 \\
\hline
Two-Phase Commit & 1258 & 6  & 2 \\
\hline
Atomic Broadcast & 1451 & 9 & 2 \\
\hline 
One-Shot & 125 & 2 & 2 \\
\hline
Barrier Lock & 3 & 2 & 3 \\
\hline
Drinking Philosophers & 22 & 1 & 2 \\
\hline
Dekker's Algorithm & 78 & 5 & 3 \\
\hline
\end{tabular}
\label{tab:ablation}  
\end{wraptable}

\vspace{-.25cm}
\subsubsection*{\textbf{RQ3}: Is message-segment summarization necessary to verify distributed protocols?}

\tool{} produces message-segments when a data-dependent cycle occurs. If a segment is produced, the cycle is not unrolled. Without another form of widening the cycle is not removed, preventing termination.
In our benchmarks, most protocols synthesize message-segments while executing backwards. Dekker's algorithm, as well as the philosophers, do not need these as they do not keep counters or similar to summarize, while algorithms such as Bakery and Two-Phase Commit require summarizing counters relevant to the safety property.

\vspace{-.25cm}
\subsubsection*{\textbf{RQ4}: Are entailment checks necessary for reducing the explored state space?}

To gain insight into entailment's effect on the analysis, we ran an ablation study with entailment disabled. Instead, the tool relies on pruning infeasible and duplicate states to reach a fixed point. We cap this run at $25000$ states, recognizing that it may still terminate, though looking at more states.
Without this check, most benchmarks fail to terminate within $25000$ states. The Barrier Lock is an exception, reaching $87$ states, more than double that of the non-ablative run. This single-use barrier is simple compared to other mutual exclusion components, and doesn't exhibit complex cycling, making entailment is unnecessary. The remaining algorithms lead to cycling during analysis which cannot be pruned. For instance, Two-Phase Commit cycles on an unbounded chain of proposed values, which entailment cuts.
Despite this, the analysis runs much quicker, as entailment is expensive check to the rest of the analysis. Most algorithms fail to verify within the limit as they explore longer paths, leading to more actor materialization and more branching. This produces a much larger state space, which could be mitigated by more aggressive widening operations.

\subparagraph{Limitations.}

While we verify consensus protocols without user-specified invariants or hints, including Two-Phase Commit for the first time, scalability remains a concern for complex protocols. Single-Round Paxos and Raft fall under the affine restriction, and so are theoretically in scope, but resist automated verification in practice: after several hours on Single-Round Paxos, our tool showed no sign of completing. This suggests asynchrony and entailment cost remain non-trivial to overcome; reducing the state space while staying fully automated is a natural future direction.

Additionally, the safety properties we consider are not arbitrary formulas. Each benchmark's mutual exclusion or consensus property is expressible in prenex-universal form; negating it produces the form $\exists\bar{x}.\phi$ where $\phi$ is unquantifiedm, and skolemization turns the existential values into actors and fields for the goal-state. Another future direction is extending this formula space.

\section{Related Work}
\label{sec:related}
\subparagraph{\textbf{Formal Verification of Distributed Systems}} A range of automation levels have been considered for distributed systems verification, from foundational proof assistants to push-button verification.

Frameworks embedded in foundational proof assistants offer little automation, but allow extraction of verified executable code. Examples include Verdi~\cite{verdi}, Disel~\cite{disel}, Actris~\cite{actris}, Aneris~\cite{aneris}, and Grove~\cite{grove}. Notably, Verdi has been used to verify the linearizability guarantee of Raft protocol~\cite{raft-atc14} and extract its executable implementation. This required \~50k lines of proof in Coq~\cite{verdi-cpp16}, making the effort prohibitive at scale. Disel,
Actris, Aneris, and Grove use separation logic to streamline proofs, without automation. These frameworks have produced verified implementations which perform comparably to industry-strength
implementations~\cite{verdi,grove}. 

Partial automation has been explored to reduce proof burden.
IronFleet~\cite{ironfleet} is an end-to-end verification methodology implemented
in Dafny~\cite{dafny}, while Igloo~\cite{igloo} is a framework for
IronFleet-style systems in Isablelle/HOL, allowing SMT-aided
automation~\cite{sledgehammer-icar10}. Automation here is fragile due to first
order logic's undecidability. A complementary approach --- implemented in
Ivy~\cite{ivy} --- restricts the language to ensure decidable verification
conditions, but requires programmers to write explicit inductive invariants.
Building on Ivy, I4~\cite{i4}, DistAI~\cite{distai-osdi21}, and
DuoAI~\cite{duoai-osdi22} infer inductive invariants via heuristics and
data-driven techniques. We take a similar stance of full automation, but use
abstract interpretation to compute an explicit state invariant backwards, rather
than a templated search.

\subparagraph{\noindent\textbf{Symbolic Model Checking and Partial Order
Reduction}} Symbolic model checking is the closest line of work to our approach,
and has been used to reason about actor systems ~\cite{psym,quicksilver,symtlc,
apalache-oopsla19,q9-oopsla18, samc-osdi14}. These approaches are fully
automated, but require the system size to be artificially bounded, precluding
parametric reasoning. Pretend Synchrony~\cite{pretend} uses partial-order
reduction to reduce asynchronous message-passing communication to synchronous
read-writes to shared state. The resultant verification problem is more
tractable, but still requires explicit inductive invariants, which we avoid. 

\subparagraph{\textbf{Backwards Parametric Model Checking}}
Backwards reachability is well studied~\cite{ghilardi,monocegar,cubicle} for
synchronous~\cite{delzanno2003constraint} and shared-memory mutual exclusion
protocols, where it reduces the search space for safety
verification~\cite{paroshwqo} - a benefit our approach shares.
Cubicle~\cite{cubicle} specifies protocols as synchronous global transition
systems, simplifying verification. Non-determinism is mitigated by an auxiliary
invariant synthesis procedure.  In contrast, we consider actors with purely
local state and asynchronous communication, which enables direct modeling of
message-passing distributed protocols, at the cost of a larger state space. To
control the resulting non-determinism, we exploit causality information from the
program model (causal consistency). Complementarily, invariant synthesis
techniques such as those used in Cubicle could be incorporated to further
improve scalability.

\subparagraph{\textbf{Goal-Directed Analysis with Message Histories}}
Goal-directed/ Backwards-from-error abstract interpretation has been explored for
event-driven systems~\cite{hopper,historia}. In these systems, like in actors,
control-flow may jump between many components, and not all jumps
are possible. In~\cite{historia}, message histories are used to capture control-flow feasible transitions in Android. While
control flow stays between the framework and the application in Android, it
spans an unbounded number of actors in our case, rather than finite components. One of our contributions
(relative to~\cite{historia}) is generalizing message histories to
address this scenario via message-segments. 
\subparagraph{\textbf{Recurrence Analysis}}
Recurrences have been used to find invariants in sequential programs~\cite{cra}: loop bodies are modeled as recurrence relations, whose solutions approximate the loop's behavior. We extend this reasoning to repetition in message histories rather than syntactic loops, using message-segments to summarize handlers, analogous to loop bodies. For simplicity we consider a restricted class of recurrences compared to sequential programs~\cite{nlcra}.

\section{Conclusion}

We have presented a novel symbolic abstraction for goal-directed verification of parametric actor systems. Leveraging precise local reasoning and actor materialization, our abstraction lets the analysis  focus on a system slice \emph{relevant} to refuting safety violations. We introduce actor-causality reduction to restrict exploration and message-segments to \emph{summarize} recurring unbounded communication, demonstrating their use to verify safety for protocols with affine communication.
This extends existing backwards-from-error
techniques~\cite{historia,hopper,cubicle} to automatically verify protocols previously only verified manually, namely Two-Phase Commit and One-Shot Consensus. We implement these techniques in \tool{}, evaluating it on a suite of
standard distributed protocols.

\bibliography{ref} 
\bibliographystyle{ACM-Reference-Format}

\clearpage
\appendix
\section{Actor Syntax}\label{sec:syntax-appendix}

The syntax of the actor language used throughout the paper can be found summarized in \autoref{fig:syntax}.

\begin{figure}\small
\begin{mathpar}
\text{distributed protocol}
\quad \topsys \bnfdef \mktopsys{\seq{\msgkind}}{\seq{\actor}}

\text{actor classes} \quad \actor \bnfdef \mkactor{\mach}{ \seq{\field} }{ \seq{\hand} }

\text{fields} \quad \field \bnfdef \mkfield{\var}{\val}

\text{variables} \quad \var, \varalt

\text{values} \quad \val

\text{message handlers} \quad \hand
  \bnfdef \mkonhand{\evnt}{ \seq{\var} }{\proc}
  
\text{procedures} \quad \proc \bnfdef \ctxnil \bnfalt \ctxcons{\proc}{\trans}

\text{transitions} \quad \trans
\bnfdef \mktrans{\prelbl\loc}{\instr}{\postlbl\loc}
\bnfalt \mktrans{\prelbl\loc}{\cmd}{\postlbl\loc}

\text{control-flow locations} \quad \loc \bnfdef \entryloc \bnfalt \exitloc \bnfalt \cdots

\text{instructions} \quad \instr \bnfdef \mkassign{\var}{\expr} \bnfalt \cdots
\qquad
\text{expressions} \quad \expr \bnfdef \var \bnfalt \cdots
\\

\text{actor commands} \quad \cmd
\bnfdef %
\mksend{\var}{\evnt}{\seq{\varalt}}
\\

\text{machine-type ids} \quad \mach \bnfdef \mainmach \bnfalt \cdots

\text{event-type ids} \quad \evnt \bnfdef \initevnt \bnfalt \idleevnt \bnfalt \cdots
\end{mathpar}
\caption{Abstract syntax of \dCSNH{}, a core language for reasoning about actor systems.}
\label{fig:syntax}
\end{figure}

A distributed protocol $\topsys$ is composed of a set of actor classes
$\seq{\actor}$ each defining the internal logic of a specific type of a
machine in the protocol. The Two-Phase Commit~\cite{twopc-grey06} protocol,
for example, is composed of two types of machines --- \emph{Participant} and \emph{Coordinator}. An \emph{actor class} ($\actor
\bnfdef \mkactor{\mach}{ \seq{\field} }{ \seq{\hand} }$) is composed of a
machine-type identifier $\mach$, a set of actor-local fields
$\seq{\field}$, and a set of handlers $\seq{\hand}$. An \emph{actor field}
($\field \bnfdef \mkfield{\var}{\val}$) is a mutable cell referenced with a
variable identifier $\var$ and initialized with a value $\val$. 
A \emph{message handler} ($\hand \bnfdef \mkonhand{\evnt}{ \seq{\var} }{\proc}$) is a handler for an event-type identifier $\evnt$ with formal parameters $\seq{\var}$ and a procedure body $\proc$.\footnote{
  Notation: We write $\seq{\square}$ to denote a sequence. We use
  script letters (e.g., $\mach$, $\evnt$) for identifiers of state-machine
  components and italic words (e.g., $\actor$, $\field$,
  $\hand$, $\proc$) for syntactic components. For convenience, we treat the
  protocol specification $\pi$ as a map: $\maplookup{\topsys}{\mach}$ gets the actor class $\actor$ corresponding to machine-type id $\mach$ (and similarly for $\maplookup{\actor}{\evnt}$) .}
A procedure ($\proc\ \bnfdef \ctxnil \bnfalt \ctxcons{\proc}{\trans}$) is a
set of transitions $\trans$, where each transition $\trans \bnfdef
\mktrans{\prelbl\loc}{\instr}{\postlbl\loc} \bnfalt
\mktrans{\prelbl\loc}{\cmd}{\postlbl\loc}$ is a control-flow edge from a
pre-location $\prelbl\loc$ to post-location $\postlbl\loc$, labeled with
either an imperative instruction $\instr$ or an actor command $\cmd$. Our
formalization is parametric on the instruction $\instr$ and expression
$\expr$ languages that allow writing to and reading from an actor's local
state (which we call its \emph{store}). A procedure has two special control-flow
locations $\entryloc$ for the entry location, which has no predecessors,
and $\exitloc$ for the exit location, which has no successors ($\loc
\bnfdef \entryloc \bnfalt \exitloc \bnfalt \cdots$).

Actor commands
$\cmd$
include sending a message to another actor with $\mksend{\var}{\evnt}{\seq{\varalt}}$,
where variable $\var$ is bound to the recipient address (i.e., a unique identifier for the recipient actor), $\evnt$ is an event-type identifier, and variables $\seq{\varalt}$ are bound to its payload.
We use the term \emph{message} for the package of the sender and recipient addresses with a event-type identifier $\evnt$ and a payload.
To focus on the core challenge of reasoning about asynchronous actor communication, we consider $\sendkw$ as the only actor command $\cmd$.
For simplicity, we consider actor commands in their A-normal form (i.e.,
their sub-expressions are flattened to variable reads). 

Unlike the standard actor calculus~\cite{agha}, we have no $\mknew{\mach}$ command to spawn a new actor of machine type $\mach$ --- we consider parametrized actor systems where only the main machine can spawn actors to set up the system.
As such, there is a special $\mainmach$ machine-type identifier for the main machine that starts the system ($\mach \bnfdef \mainmach \bnfalt \cdots$).
There are also two special event-type identifiers ($\evnt \bnfdef \initevnt
\bnfalt \idleevnt \bnfalt \cdots$): $\initevnt$ for a special event type
that can only be sent by the main machine to start the protocol execution
and $\idleevnt$ for a neutral, idle event type that has no handler. An
$\initevnt$ message can have different payloads but can only be received
once by each actor and before receiving any other messages.

\section{Proofs and Mechanization}\label{sec:mechanization}
Here we provide details on all theorems and proofs, as well as the mechanization of our semantics.

\subsection{Mechanization in Lean}{\label{sec:mech-details}

We have defined our concrete and abstract semantics in Lean, in particular the abstract semantics of sections 3 and 4. We have chosen not to mechanize a proof of the semantics of section 5, as this would require a mechanization of the semantics of \cite{cra}, which we deem outside the scope of this work.

The primary proof we have mechanized is that of \autoref{thm:causal-soundness}. This theorem is at the heart of our contributions as well as the prototype implementation, and we have proven it correct. In doing so, we have assumed lemmas on the concretization relation, which we believe to be reasonable.

In particular, we assume concretization to be monotonic, with the ability to extend a valuation $\valuation$ without breaking the concretization. This allows us to then show that messages and transitions in the concrete system which are not relevant to our goal-state can be ignored while preserving the relation.

From the proof of \autoref{thm:causal-soundness}, a proof of Lemma \ref{lem:local-soundness} can be derived, as it is a subset of the semantics mechanized.

We also mechanize Lemma \ref{thm:worklist}, as the final component of our termination proof. Combined with the termination proof, this lemma shows our algorithm will terminate for a particular, meaningful class of protocols.

In the remainder of this appendix, we restate each proof, note whether it is covered in the Lean mechanization, and discuss the proofs in detail of those not covered in Lean.

\subsection{Mechanized Theorems and Lemma Sketches}

\begin{lemma}[Subsumption Soundness]\label{lem:subsumption-soundness-appendix}
If $\jincluded{\abssysstate}{\abssysstate'}$ and $\concstate{\abssysstate}{\sysstate}$, then $\concstate{\abssysstate'}{\sysstate}$.
\end{lemma}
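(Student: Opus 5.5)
The plan is to prove the statement by induction on the derivation of $\jincluded{\abssysstate}{\abssysstate'}$. Entailment is generated by reflexivity, transitivity, and the materialization rules \TirName{Materialize-Actor} and \TirName{Causal-Materialize-Actor-Network}, together with the operational subtraction-based entailment described in the cycle-summarization discussion. The reflexive case is immediate. The transitive case follows by chaining the two induction hypotheses. The remaining work is therefore one case per generating rule, each checked directly against the concretization relation.

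For \TirName{Materialize-Actor}, fix $\sysstate = \mksysstate{\sysloc}{\sysstore}{\net}{\hist}$ and a valuation $\valuation$ witnessing $\concstate{\abssysstate}{\sysstate}$. The rule is only usable when the state must contain an actor at the location $\mkmachsttevntloc{\mach}{\stt}{\evnt}{\loc}$, so we argue that some concrete actor $\addr'$ sits at that location in $\sysloc$ and split on where $\addr'$ lives.
\begin{itemize}
\item If $\addr' = \maplookup{\valuation}{\absaddr_i}$ for some already-materialized $\absaddr_i$ mapped to that location, then the same $\valuation$ satisfies $\abssysstate \land \absaddr_i = \absaddr'$ once we extend $\valuation$ with $\absaddr' \mapsto \addr'$. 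This is sound because $\absaddr' \notin \dom(\abssysloc)$ makes $\absaddr'$ fresh.
\item Otherwise $\addr'$ lies in the part of $\sysloc$ absorbed by $\loctop$. We split $\sysloc = \sysloc_1, (\addr' \mapsto \cdots), \sysloc_2$, so the original map concretizes into $\sysloc_1$ and the new points-to into the singleton, with disjointness holding because $\addr' \notin \maplookup{\valuation}{\dom(\abssysloc)}$. The store is split the same way, using $\storetop$ for $\addr'$, and this split is what maintains $\dom(\sysloc)=\dom(\sysstore)$.
\end{itemize}
The network, history and pure components are unchanged, and they remain satisfied under the extended valuation by the monotonicity of concretization under valuation extension, which the mechanization also assumes. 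The network-directed variant is handled identically, with the location fixed to the send's post-location.

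For the operational entailment, the proof should go component-wise. Separation logic frame-subtraction gives location and store inclusion, linear subtraction gives network inclusion, and prefix matching gives history inclusion. In each case the unification map is composed with $\valuation$ to obtain the witnessing valuation for $\abssysstate'$, and validity of the pure implication discharges $\purecolor{\puredom'}$.

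I expect the main obstacle to be the materialization case. The right-hand disjunction is only justified if the concrete state really has an actor at the chosen location. As literally stated the rule appears to weaken rather than strengthen, so the proof must either read it as a case split over the actor's possible locations (its disjunction across all locations) or rely on $\idleevnt$ concretizing to any actor in $\dom(\sysloc)$. I would first try the $\idleevnt$ reading together with the consequence rule's usage context, and state the needed location-coverage side condition explicitly if that reading fails.
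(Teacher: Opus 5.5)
\textbf{Verdict: there is a genuine gap, in the materialization case you add to the induction.} Everything else matches the paper. Your component-wise paragraph is essentially the paper's entire proof. The paper treats $\sqsubseteq$ as the operational check: subtraction for locations and stores, linear and ordered entailment for network and history, and implication for the pure constraints, all run under a single unification map $U$. It argues that the components never refer to each other and $U$ gives each symbolic value one concrete referent everywhere, so componentwise soundness composes. Your reflexive and transitive cases are also fine.

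\textbf{Why the materialization case fails.} You flag it as the main obstacle, but it is not a matter of choosing a reading. The step ``some concrete actor $\addr'$ sits at that location in $\sysloc$'' does not follow from $\concstate{\abssysstate}{\sysstate}$, and it is false in general. Take $\abssysloc = \loctop$ and a state $\sysstate$ with no actor at $\mkmachsttevntloc{\mach}{\stt}{\evnt}{\loc}$ for a non-idle $\evnt$. Then there are no $\absaddr_i$, the right-hand side of \TirName{Materialize-Actor} is only the materialized disjunct, and $\sysstate$ does not satisfy it. So the rule is not a sound inclusion, and no case analysis inside this lemma closes the case. Reading it as a split over all locations changes the rule rather than proving the lemma for it.

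\textbf{The $\idleevnt$ reading does not help either.} It also breaks your ``otherwise'' branch even when such an actor exists. From $\addr' \neq \maplookup{\valuation}{\absaddr_i}$ for the same-location $\absaddr_i$, you cannot conclude $\addr' \notin \maplookup{\valuation}{\dom(\abssysloc)}$. An abstract actor at an $\idleevnt$ location may be valued at $\addr'$, and no alias disjunct covers it. For instance, take a one-actor state whose actor sits at $\mkmachsttevntloc{\mach}{\stt}{\evnt}{\loc}$. It satisfies $\mapping{\absaddr_1}{\mkmachsttevntloc{\mach}{\stt}{\idleevnt}{\exitloc}}$ but not $\mapping{\absaddr_1}{\mkmachsttevntloc{\mach}{\stt}{\idleevnt}{\exitloc}} \storesep \mapping{\absaddr'}{\mkmachsttevntloc{\mach}{\stt}{\evnt}{\loc}}$. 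So adding a location-coverage side condition alone would not suffice.

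\textbf{How the paper avoids this, and the fix.} The paper proves nothing about materialization in this lemma. It explicitly defers it to the places where it is used (``handled in Lean in the cases materialization is used''). That is where the missing premise is available. In the local or causal soundness proof you are simulating a specific concrete step $\prelbl{\sysstate} \longrightarrow \postlbl{\sysstate}$. That step supplies the actor at the post-location and determines which abstract actor, if any, it already corresponds to. The appendix sketch of Lemma~\ref{lem:local-soundness-appendix} does exactly this, materializing ``the actor at the post location of the concrete step.'' So restrict your induction to the operational entailment, where your third paragraph is already the proof. Then move the materialization argument, with its existence premise drawn from the concrete transition, into the soundness theorem.
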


This lemma is stated as Lemma \ref{lem:subsumption-soundness} in \autoref{sec:actorlogic}. As this is used in materialization, it is not separately mechanized in Lean, but is handled in Lean in the cases materialization is used.

In \autoref{sec:parametric}, we give an operational semantics of entailment. Given a unification map $U$, we define entailment piecewise, using the subtraction algorithm~\cite{DBLP:conf/aplas/BerdineCO05} for the two separation logic components, linear logic entailment, and ordered logic entailment for the networks and histories respectively. These are all performed under formulas renamed by $U$. Finally, implication is checked for the pure constraints $\puredom$ and $\puredom'$.

The proof obligation is that given these individually sound entailment checks per component, their composition under $U$ is a sound entailment.
This follows because the components are otherwise separate, so no sub-entailment check needs to refer to any other component, and because UU
U is applied uniformly across all of them: a given symbolic value resolves to the same concrete referent in every component in which it appears. Componentwise soundness therefore composes into soundness of the whole entailment.

\begin{lemma}[Local Soundness of Backwards Symbolic Execution]\label{lem:local-soundness-appendix}
If
$\jstep[\topsys]{\prelbl\sysstate}{\postlbl\sysstate}$
and
$\jbackstep[\topsys]{\prelbl\abssysstate}{\postlbl\abssysstate}$
such that
$\concstate{\postlbl\abssysstate}{\postlbl\sysstate}$,
then $\concstate{\prelbl\abssysstate}{\prelbl\sysstate}$.
\end{lemma}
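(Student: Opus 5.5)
We argue by induction on the derivation of the backwards step $\jbackstep[\topsys]{\prelbl\abssysstate}{\postlbl\abssysstate}$, with one case per rule of \autoref{sec:backwards-symbolic-execution-semantics}: \TirName{Abs-Step-Instr}, \TirName{Abs-Step-Send}, and \TirName{Abs-Step-Consequence}. Disjunctive abstract states are handled by a preliminary reduction: if $\concstate{\abssysstate_1\lor\abssysstate_2}{\postlbl\sysstate}$, then one disjunct concretizes $\postlbl\sysstate$, and the rules act on non-disjunctive states. In each base case we fix the valuation $\purecolor{\valuation}$ witnessing $\concstate{\postlbl\abssysstate}{\postlbl\sysstate}$. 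From $\concstate{\postlbl\abssysstate}{\postlbl\sysstate}$ we read off that the actor $\maplookup{\purecolor{\valuation}}{\absaddr}$ is at the post-location named in $\postlbl\abssysstate$. We then invert the concrete step $\jstep{\prelbl\sysstate}{\postlbl\sysstate}$.

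The consequence case is immediate from Lemma~\ref{lem:subsumption-soundness}. First, apply it to $\jincluded{\postlbl\abssysstate}{\postlbl\abssysstate'}$ to get $\concstate{\postlbl\abssysstate'}{\postlbl\sysstate}$. Next, use the induction hypothesis to get $\concstate{\prelbl\abssysstate'}{\prelbl\sysstate}$. Finally, apply the lemma again along $\jincluded{\prelbl\abssysstate'}{\prelbl\abssysstate}$. This covers \TirName{Materialize-Actor} as well, since that rule only produces entailments.

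For \TirName{Abs-Step-Instr}, the concrete step moves the same actor. We must rule out that $\postlbl\sysstate$ arose from a different actor's step while $\maplookup{\purecolor{\valuation}}{\absaddr}$ already sat at $\postlbl\loc$. \emph{This is the main obstacle.} A concrete step by another actor does not change $\absaddr$'s location, so $\prelbl\sysstate$ would place $\absaddr$ at $\postlbl\loc$, not $\prelbl\loc$. For that reason the lemma, as in the mechanization, should be read with the concrete step being the one executed by the actor at the abstract step's location. Equivalently, it is the special case of Theorem~\ref{thm:causal-soundness} in which the predecessor trace has length one. I would take this route: derive the lemma from Theorem~\ref{thm:causal-soundness} by instantiating $\prelbl\predtrace=\prelbl\sysstate$. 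Here $\concstate{\abssysstate}{\sysstate}$ for a singleton trace coincides with state concretization, and the relations $\backsteparrow\subseteq\causalbacksteparrow$ hold because the causal relation extends the basic one. The remaining work is to check the matching case directly. Invert \TirName{Step-Instr} at the same transition $\trans$. The local abstract semantics for $\trans$ is assumed sound, and we may extend $\purecolor{\valuation}$ to the fresh symbols in $\puredom'$ using the monotonicity assumption on concretization. Then $\mkpstate{\prelbl\loc}{\absactorstore'}\mid\puredom'$ concretizes the actor's pre-store. The frame $\abssysloc$, $\abssysstore$, $\absnet$, $\abshist$ is unchanged, and separation ensures the other actors' locations and stores carry over.

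For \TirName{Abs-Step-Send}, invert \TirName{Step-Send}. The post-network is $\ctxcons{\net}{\msg}$, and the concrete message $\msg$ is a witness of $\conc{\absmsg}{\cdot}$. By the disjointness condition in the definition of $\netsep$, the remaining abstract network $\absnet$ concretizes $\net$. The new pure conjuncts $\tolbl\absaddr=\maplookup{\absactorstore}{\var}$ and $\seq{\absval=\maplookup{\absactorstore}{\varalt}}$ hold under $\purecolor{\valuation}$ exactly by the premises $\tolbl\addr=\maplookup{\actorstore}{\var}$ and $\seq{\val=\maplookup{\actorstore}{\varalt}}$ of \TirName{Step-Send}. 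The store and history are untouched, so their concretizations transfer verbatim, completing the proof.
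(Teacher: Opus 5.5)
Your top-level route is the paper's: you read the lemma off Theorem~\ref{thm:causal-soundness} at the one-state trace $\prelbl\sysstate$, using that the causally-reduced relation extends the basic one, and otherwise argue rule by rule. Your counterexample to the literal statement is correct. If $\postlbl\sysstate$ comes from a step of an actor other than $\maplookup{\purecolor{\valuation}}{\absaddr}$, or from that actor along a different edge into $\postlbl\loc$, then $\prelbl\sysstate$ need not have any actor at $\prelbl\loc$. But the appeal to the theorem cannot support your restricted reading. Given the inclusion you rely on, its one-state instance has the same conclusion $\concstate{\prelbl\abssysstate}{\prelbl\sysstate}$ under a weaker hypothesis. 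So it implies the literal lemma and falls to the same counterexample. Everything therefore rests on your direct check.

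That check has a gap. The paper's sketch reads the lemma as coverage: for the given concrete step of actor $\addr$, it \emph{constructs} a matching backward step so that $\prelbl\sysstate$ lies in its pre-state. When $\addr$ is hidden in $\loctop$, it first applies \TirName{Materialize-Actor}. That existence claim is the substance, and it is the only place materialization does real work. Your restricted universal version omits it: \TirName{Materialize-Actor} appears only as an entailment inside \TirName{Abs-Step-Consequence}, where Lemma~\ref{lem:subsumption-soundness} makes it trivial. Your matching hypothesis also does not survive that case. Lemma~\ref{lem:subsumption-soundness} yields \emph{some} valuation for $\postlbl\abssysstate'$, not one that sends the inner step's actor to the concrete stepping actor, so the induction hypothesis is not available as you invoke it.

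Finally, in \TirName{Abs-Step-Send} you assume $\maplookup{\purecolor{\valuation}}{\absmsg} = \msg$ without justification. Same actor and same transition do not force this. Affine communication lets one sender have several pending $\evnt'$ messages to different recipients, and $\purecolor{\valuation}$ may map $\absmsg$ to an older one. In that case the disjoint split of $\ctxcons{\net}{\msg}$ need not put $\msg$ on the $\absmsg$ side, so $\absnet$ need not concretize $\net$. The new conjunct $\tolbl\absaddr = \maplookup{\absactorstore}{\var}$ also need not hold under $\purecolor{\valuation}$. The hypothesis you actually need is ``same actor, same transition, and (for sends) $\absmsg$ denotes the message just sent.'' Proving that a backward step matching the concrete step in this sense exists is exactly what a complete proof must supply.
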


We state this lemma in \autoref{sec:actorlogic}, as Lemma \ref{lem:local-soundness}. This follows from a proof of \autoref{thm:causal-soundness}, as the causally-reduced semantics are a subset of these semantics.

We still give the intuition here, as it is quite straightforward. Using \infrule{Abs-Step-Handle}, \infrule{Abs-Step-Send}, and \infrule{Abs-Step-Instr}, if a concrete actor which is not summarized takes a step, the abstraction can precisely match it. If the actor is summarized, we apply \infrule{Materialize-Actor}, strengthening the abstract post-state to include the actor at the post-location of the concrete step. We can then  apply one of the above 3 rules, as the strengthened state contains a matching actor at the post location, in order to match the concrete behavior. The semantics of \autoref{sec:actorlogic} thus precisely simulate the concrete system.

\begin{theorem}[\thmcolor{Causal Soundness}]\label{thm:causal-soundness-appendix}
If
$\jstep[\topsys]{\prelbl\predtrace}{\postlbl\predtrace}$
and
$\jcausalbackstep[\topsys]{\prelbl\abssysstate}{\postlbl\abssysstate}$
such that
$\concstate{\postlbl\abssysstate}{\postlbl\predtrace}$,
then $\concstate{\prelbl\abssysstate}{\prelbl\predtrace}$.
\end{theorem}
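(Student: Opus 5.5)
We argue by case analysis on the rule deriving $\jcausalbackstep[\topsys]{\prelbl\abssysstate}{\postlbl\abssysstate}$, after first unfolding the trace concretization. Write $\postlbl\predtrace = \predtracecons[\prelbl\predtrace]{\sysstate'}$, where $\prelbl\predtrace = \predtracecons{\sysstate}$ and $\jstep[\topsys]{\sysstate}{\sysstate'}$. By definition, $\concstate{\postlbl\abssysstate}{\postlbl\predtrace}$ means there is some $\sysstate^\ast \in \postlbl\predtrace$ and some valuation $\valuation$ with $\concstate{\postlbl\abssysstate}{\sysstate^\ast}$.

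If $\sysstate^\ast$ is a strict predecessor, i.e.\ $\sysstate^\ast \in \prelbl\predtrace$, the idea is to walk back inside $\prelbl\predtrace$ itself. Each $\sysstate^\ast$ in a trace has its own concrete predecessor step. So we first strengthen the statement to: for every witness position in a trace, there is an earlier-or-equal position concretizing $\prelbl\abssysstate$. We prove this by induction on the length of the prefix ending at $\sysstate^\ast$. This reduces everything to the key case where $\concstate{\postlbl\abssysstate}{\sysstate^\ast}$ and we must find $\sysstate^{\ast\ast}$ at or before $\sysstate^\ast$ in the trace with $\concstate{\prelbl\abssysstate}{\sysstate^{\ast\ast}}$.

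For the key case, the rules split into two families.

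\textbf{Faithful rules.} These are \infrule{Abs-Step-Instr}, \infrule{Abs-Step-Send}, \infrule{Abs-Step-Handle}, \infrule{Abs-Step-Consequence} and the materialization rules. Here the abstract post-state pins a materialized actor to a non-idle location: $\entryloc$, or the post-location of an instruction or send. That actor must have reached this location in some earlier concrete step.
\begin{itemize}
\item We scan backwards from $\sysstate^\ast$ to the most recent step of that actor. Steps of other actors touch disjoint parts of the control map and store, so they do not affect the separation-logic facts. Sends by other actors only add network messages, and the intuitionistic $\netsep$ with $\absnetnil$ absorbs them.
\item The actor's own step is matched exactly as in the sketch of Lemma~\ref{lem:local-soundness}, extending $\valuation$ as needed by monotonicity of concretization.
\item Consequence and materialization are discharged by Lemma~\ref{lem:subsumption-soundness}.
\item For \infrule{Abs-Step-Handle}, the pre-state uses $\idleevnt.\exitloc$, which by the concretization clause for $\idleevnt$ only requires the address to lie in $\dom(\sysloc)$. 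The constraint $\paffinefun{\cdot}$ holds because \infrule{Step-Handle} requires $\jpaffine{}$ on the extended history.
\end{itemize}

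\textbf{Causal jumps.} These are \infrule{Causal-Step-Store-Constraint} and \infrule{Causal-Step-Network}. The post-state has $\absaddr$ idle at $\idleevnt.\exitloc$, and the pre-state places it at the post-location of an assignment to $\var$ (respectively of a send of $\absmsg$).
\begin{itemize}
\item For the network rule, the concrete message $\valuation(\absmsg) \in \net$ must have been added by some earlier \infrule{Step-Send} in the trace. The network only grows by sends, and handling removes messages. Immediately after that send, the sender sits at the send's post-location, which gives $\sysstate^{\ast\ast}$.
\item Between $\sysstate^{\ast\ast}$ and $\sysstate^\ast$, every other actor's steps are irrelevant, by the frame argument above.
\item The sender's own later steps do not invalidate the store facts recorded in $\postlbl\abssysstate$. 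The only constraint is $\storetop$ on handler variables, and fields are constrained by the same symbolic values, modulo valuation extension.
\item For the store rule, the analogous argument picks the last assignment to $\var$ by $\valuation(\absaddr)$. Its existence follows from the field having a non-initial value or, failing that, from the $\initevnt$ handler's assignment.
\end{itemize}

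\textbf{Main obstacle.} The hardest step is the store-constraint jump. We must show the witnessing concrete assignment exists in the trace and that the chosen transition matches the rule's chosen $\postlbl\loc$. The rule is existential over transitions, so we need to argue that the concrete last writer corresponds to \emph{some} applicable instance. We also need to handle the case where the field keeps its declared initial value with no assignment in the trace. We would first try to treat field initialization as an assignment in the $\initevnt$ handler, which is the convention reflected by the elided assign-handle edges from $\initevnt$. Failing that, we would weaken the claim to "$\prelbl\abssysstate$ concretizes some predecessor or the state is initial", which suffices for the refutation argument.
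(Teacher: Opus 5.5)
Your ingredients match the paper's sketch: walk back through the predecessor trace past irrelevant concrete steps, frame away other actors, and extend $\valuation$ by monotonicity. The difference is that you split on the abstract rule, whereas the paper splits on the concrete step, and your split exposes a step that does not go through. The statement concerns an \emph{arbitrary} derivation of $\prelbl\abssysstate \leftsquigarrow \postlbl\abssysstate$. That derivation fixes the transition $\mktrans{\prelbl\loc}{\cdot}{\postlbl\loc}$ (the assignment, send, or instruction). Your backward scan instead returns whatever the run actually did: the actual last writer of $\var$, or the actual edge by which the actor entered $\postlbl\loc$. Nothing makes that an instance of the chosen transition. Read literally (for every derivation), the claim then fails, and the paper's own example shows it. \fmtcode{leader} is assigned both in \fmtcode{eInit} (to $-1$) and in \fmtcode{eVote}. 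Applying \textsc{Causal-Step-Store-Constraint} to the goal state with the \fmtcode{eInit} assignment gives a pre-state with empty concretization, since the field must differ from $-1$ right after being set to $-1$. Yet a violating run of the buggy protocol concretizes the post-state. \textsc{Abs-Step-Instr} at a join point fails the same way. Your phrase ``some applicable instance'' is really the existential form: the run concretizes the pre-state of \emph{some} causal predecessor of $\postlbl\abssysstate$, i.e.\ of the disjunction the analysis explores by branching. That is a different theorem from the displayed one, and your fallback (``some predecessor or initial'') does not address this mismatch.

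Two further steps break even when the instance matches.

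First, the conclusion requires a witness in $\prelbl{\tau}$, which excludes the final state $\sysstate'$. Your key case only yields one at or before $\sysstate^{\ast}$. In \textsc{Causal-Step-Network}, if the given concrete step is itself the witnessing send, the state ``immediately after that send'' is $\sysstate'$, and in general no state of $\prelbl{\tau}$ has the message in its network. Likewise, your claim that every state has its own predecessor step fails at the head of a trace, so the induction has no base case unless traces must start in initial states.

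Second, your frame argument runs the wrong way for the network and ignores the history. Scanning backwards past another actor's send \emph{removes} a message, and absorption by $\nettop$ only accounts for \emph{extra} concrete messages. So if that message concretizes some $\absmsg$ in the abstract network (which \textsc{Abs-Step-Instr} leaves unchanged), the state just before your actor's own step does not concretize the pre-state. Stepping back over another actor's handle also deletes a message from the concrete history. That can break affinity of the history extended by $\abshist$, e.g.\ by deleting the \fmtcode{eInit} of an actor whose send is recorded in $\abshist$.

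So the scan would have to stop at the last concrete step relevant to \emph{any} component of $\postlbl\abssysstate$ and let that step choose the rule. Even then, it supports at best an existential, at-or-before version of the theorem, which you would need to state and prove in place of the displayed one.
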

We state this theorem in \autoref{sec:reflogic}, as \autoref{thm:causal-soundness}. We mechanize this proof in Lean, leveraging induction on the predecessor traces, then splitting cases for each concrete step possible. Intuitively, a causal step $\jcausalbackstep[\topsys]{\prelbl\abssysstate}{\postlbl\abssysstate}$
 may skip backward past a run of concrete transitions that are irrelevant to the goal state — the stuttering-simulation intuition discussed in \autoref{sec:reflogic}. The induction shows that whichever concrete rule fired, the extended trace $\postlbl\predtrace$ contains, among its predecessors, a state matching the abstract pre-state $\prelbl\abssysstate$: irrelevant transitions only add states to $\prelbl\predtrace$ without modifying actors and fields relevant to $\postlbl\abssysstate$.
This proof assumes monotonicity of concretization under valuation extension — that a valuation 
$\valuation$ can always be extended to a larger domain without breaking the relation (see \autoref{sec:mech-details} for discussion). This assumption lets the proof discharge irrelevant concrete steps without needing to track how they affect the valuation.

In the remainder of this appendix we provide full proof treatments of remaining theorems.

\subsection{Soundness of Message-Segments}\label{sec:soundness-app}

Here we provide a more detailed proof of soundness for the semantics of message-segments.

\begin{lemma}[Segment-Length Natural]\label{thm:seglenpos}
  Given $\mkmsgseg{\mach}{\tolbl\absaddr}{\evnt}{\symvar}$, $\seglen > 0$
\end{lemma}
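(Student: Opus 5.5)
The plan is to argue that strict positivity of $\seglen$ is an invariant of how message segments come into existence. No rule of the backwards semantics builds a segment from nothing; one only arises by replacing a detected cycle of handler invocations. I will proceed by induction over the derivation of the abstract state in which the segment $\mkmsgseg{\mach}{\tolbl\absaddr}{\evnt}{\symvar}$ occurs, strengthening the statement to: for every abstract state reachable by $\jcausalbackstepn{\cdot}{\cdot}$ from the goal state, every segment occurring in $\absnet$ or $\abshist$ has its length parameter $\symseglen$ constrained by $\puredom$ to satisfy $\symseglen \geq 1$.

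\textbf{Base case: introduction.} A segment is introduced exactly when the analysis detects a cycle such as \exref{ex:cycling-message-history}, that is, a repeated causal step through the same handler $\evnt$ for the same recipient $\tolbl\absaddr$. I will make explicit that the summarization step witnesses at least one concrete unrolling before replacing it. It therefore conjoins $\symseglen \geq 1$ to $\puredom$ alongside the trivial bounds $0 \le \symseglen \le \#\mach$ from the affine restriction. This matches the paper's own usage: in \autoref{fig:message-segments} the constraint $1 + \exsegparam{1} \le \exsyssize$ sits next to the segment because at least one vote has already been folded.

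\textbf{Preservation and semantic justification.} Every subsequent rule (\TirName{Abs-Step-Instr}, \TirName{Abs-Step-Send}, \TirName{Abs-Step-Handle}, \TirName{Abs-Step-Segment-Handle}, \TirName{Abs-Step-Segment-Send}, the causal steps, and materialization) only conjoins to $\puredom$ or adds disjuncts. Hence $\symseglen \ge 1$ persists, and entailment/subsumption by Lemma~\ref{lem:subsumption-soundness} cannot weaken it inside a retained state. On the concretization side, a history satisfying $\abshisthyp{\mkmsgseg{\mach}{\tolbl\absaddr}{\evnt}{\seq{\absval}}}{\abshist}$ involves $\seglen$ distinct messages $\msg_1,\dots,\msg_k$ of type $\evnt$. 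A zero-length segment would collapse this element to the identity on histories, which is exactly the case excluded by cycle detection.

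\textbf{Main obstacle.} The hard step is the base case. The rules as stated in \autoref{sec:segment-handlers} do not spell out the introduction of $\symseglen$, so positivity depends on the summarization operation recording the witnessed iteration. I would first try to fix this by defining the introducing widening to fold the already-materialized repeated message into the segment, so that it counts toward $\seglen$. Failing that, I would treat $\seglen \ge 1$ as a side condition imposed at introduction, justified because an empty segment adds no behaviors and may be dropped soundly by Theorem~\ref{thm:segment}.
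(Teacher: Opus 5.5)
Your proposal takes essentially the paper's approach, and the paper's proof is the single line ``by construction'': a segment only arises, with $k \geq 1$, from a witnessed further invocation of the cycling handler, and your induction just records that this constraint persists (state that invariant for algorithm-generated states, since \textsc{Abs-Step-Consequence} can weaken $\varphi$ arbitrarily). One correction: the $1$ in $1 + k_1 \leq \#\mathit{Node}$ in \autoref{fig:message-segments} counts $n_3$'s separately materialized vote, which stays outside the segment, so that bound holds equally with $k_1 = 0$ and is not evidence that a vote was folded into the segment.
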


Proof: By Construction of $\mkmsgseg{\mach}{\tolbl\absaddr}{\evnt}{\symvar}$.

  \begin{mathpar}
    \infer{\predtrace' = \predtracecons{\sysstate}}{
      last(\predtrace') = \sysstate
    }

    \infer{\predtrace' = \sysstate}{
       head(\predtrace') = \sysstate
    }

    \infer{\predtrace' = \predtracecons{\sysstate}}{
      head(\predtrace') = head(\predtrace)
      }
  \end{mathpar}

  \begin{lemma}[Trace-Equivalence]\label{thm:trace-eq}
    Let $\predtrace$ be a concrete trace of states, ending in the completion of handler $h$ for actor $\addr$ (reaching $\exitloc$), and $h$ is symmetric w.r.t. other handlers. Then, $\exists \predtrace'$, which is Mazurkiewicz-equivalent, such that 
 $head(\predtrace)=head(\predtrace')$ and $last(\predtrace) = last(\predtrace')$, where happens-before relations are preserved, and all invocations of $h$ received by $\addr$ are consecutive.
  \end{lemma}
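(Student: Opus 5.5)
The plan is a bubble-sort style argument over the sequence of concrete transitions underlying $\predtrace$, using the symmetry hypothesis as the commutation (independence) relation of Mazurkiewicz trace theory. View $\predtrace$ as a word over transitions, where each transition is labelled by the actor that fires it (via \TirName{Step-Instr}, \TirName{Step-Send} or \TirName{Step-Handle}). Group $\addr$'s transitions into \emph{handler invocations}: the maximal blocks running from a \TirName{Step-Handle} into some $\entryloc$ to the next arrival at $\exitloc$. Because actors only interact through the network, transitions of distinct actors commute whenever neither consumes a message the other produces. So first, using only this standard independence, rearrange $\predtrace$ so that every invocation of $\addr$ executes contiguously, without interleaving from other actors. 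Each swap exchanges two adjacent independent transitions, so the head state is unchanged. The last state is also unchanged, except for the order of the history component, which Mazurkiewicz equivalence permits. A \TirName{Step-Send} by another actor that produces a message $\addr$ later handles must stay before that handle; we only move $\addr$'s blocks later past unrelated steps, or earlier past steps that do not produce their trigger. This preserves happens-before by construction.

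Second, once $\addr$'s invocations are atomic blocks, the sequence of blocks received by $\addr$ is a word $u_1 h u_2 h \cdots$ over handler invocations. Here we invoke $\mathit{symmetric}(h)$: $h$ is independent of every other handler $h_1$ of $\addr$, so exchanging an adjacent pair $h\,h_1$ to $h_1\,h$ gives the same post-state up to history order. We then repeatedly swap each $h$-invocation later, toward the last $h$-invocation, which sits at the end because the trace ends with $h$ reaching $\exitloc$. This terminates by induction on the number of inversions, that is, the number of pairs consisting of an $h$-block followed by a non-$h$ block. Each swap decreases this count by one and yields an equivalent trace with the same head and last state, ending with all $h$-invocations consecutive.

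Third, we must check that happens-before is preserved by each such swap, and that the moved $h$-block can still be enabled at its new position. Its triggering message was sent before its original position, so it remains in the network when moved later. The $\jpaffine{}$ side condition of \TirName{Step-Handle} is invariant under reordering, since it depends only on the multiset of messages per recipient and on the $\initevnt$-first constraint. The $\initevnt$ handler is never moved before anything, because $h \neq \initevnt$ invocations only move later.

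The main obstacle is a gap between independence of handlers and dependence through messages. A non-$h$ block $h_1$ that sits after an $h$-block may send a message whose send causally depends on it; for example, $h$ itself sends a message whose reply triggers $h_1$. In that case the two blocks cannot be swapped without breaking happens-before. I would first try to discharge this by reading $\mathit{symmetric}(h)$ as including the absence of such send-handle dependences from $h$ to other handlers of $\addr$, as the precise rule in \autoref{sec:symmetry-appendix} is intended to guarantee. I would then argue that any such causal chain would itself be a dependence edge ruled out by that hypothesis, so every required swap is between genuinely independent letters.
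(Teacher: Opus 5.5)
Your two phases are exactly the paper's proof, which gives each phase a single sentence. First you make each of $\addr$'s handler invocations contiguous using the independence of distinct actors. Then you use $\mathit{symmetric}(h)$ to bubble the $h$-invocations together. The gap is the step you flag yourself, and the discharge you propose does not work.

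The Symmetry rule of \autoref{sec:symmetry-appendix} is only a commutation (diamond) condition. When an $\evnt$-message and an $\evnt'$-message for $\addr$ are \emph{both already pending}, handling them in either order gives the same store and pure constraints. It says nothing about causal chains through other actors. Mazurkiewicz equivalence, meanwhile, only exchanges adjacent independent events, so it preserves the happens-before order created by sends and receipts.

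Concretely, suppose $h$ sends to an actor $b$ whose handler replies with the trigger of $h_1$ at $\addr$. Suppose further that $h_1$ sends to an actor $c$ whose handler sends the next $\evnt$-message to $\addr$. If $h$ and $h_1$ touch disjoint fields of $\addr$, the Symmetry premise holds, and with distinct $b,c$ the history stays affine. Yet every equivalent trace has $h^{(1)} < h_1 < h^{(2)}$ in happens-before, so the two $h$-invocations can never become consecutive.

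So your claim that ``any such causal chain would itself be a dependence edge ruled out by that hypothesis'' cannot be derived. The chain runs through $b$ and $c$, not through anything the rule constrains. You must state the missing condition as an explicit extra hypothesis. Your reading works: no invocation of another handler of $\addr$ is causally after an $h$-invocation, and under it your bubble sort goes through. The minimal requirement is weaker: no such invocation lies causally \emph{between} two $h$-invocations. The paper's proof simply asserts that symmetry permits the regrouping, so it makes the same leap without comment.

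Two smaller points. You are right that $last(\predtrace)=last(\predtrace')$ can only hold up to the order of the history component, which the statement glosses over. Also, your second phase makes the $h$-invocations consecutive only among $\addr$'s own blocks. The history concretization of message segments in \autoref{sec:parametric} appends $\msg_1;\cdots;\msg_k$ to $\hist$, so the invocations must be consecutive in the global history. That needs a further pass moving other actors' handle steps out from between them. This pass hits the same kind of obstacle, for example when $h$ sends to $b$ and $b$'s handler sends the next $\evnt$-message to $\addr$. Your strengthened reading of $\mathit{symmetric}(h)$ only concerns handlers of $\addr$, so it does not exclude this case.
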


  By the actor model assumption~\cite{agha}, the steps of $\predtrace$ can
be reordered to produce $\predtrace''$ in which all steps of any single
actor within a handler are sequential, preserving happens-before relations
and hence agreeing with $\predtrace$ on head and tail.

By $\mathit{symmetric}(h)$ (from hypothesis), the invocations of $h$ in $\predtrace''$
are independent of all other handlers, so $\predtrace''$ can be further
reordered to produce $\predtrace'$ grouping all invocations of $h$
consecutively, again preserving head and tail. $\qed$

\begin{theorem}[Soundness]

If
$\jstep[\topsys]{\prelbl\predtrace}{\postlbl\predtrace}$
and
$\jcausalbackstep[\topsys]{\prelbl\abssysstate}{\postlbl\abssysstate}$
such that
$\concstate{\postlbl\abssysstate}{\postlbl\predtrace}$,
then $\exists\predtrace'$ such that $symmetric(\prelbl\predtrace,\predtrace')$ and $\concstate{\prelbl\abssysstate}{\predtrace'}$.

\end{theorem}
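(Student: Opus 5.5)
The proof goes by case analysis on the rule deriving the causal backwards step $\jcausalbackstep[\topsys]{\prelbl\abssysstate}{\postlbl\abssysstate}$. For every rule other than \TirName{Abs-Step-Segment-Handle} and \TirName{Abs-Step-Segment-Send}, \autoref{thm:causal-soundness} already gives $\concstate{\prelbl\abssysstate}{\prelbl\predtrace}$. In those cases we choose $\predtrace' = \prelbl\predtrace$ and note that $\mathit{symmetric}(\prelbl\predtrace,\prelbl\predtrace)$ holds because Mazurkiewicz equivalence is reflexive. Only the two segment rules need new work.

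\textbf{Segment send.} For \TirName{Abs-Step-Segment-Send}, the network concretization of a segment $\mkmsgseg{\absmach}{\tolbl\absaddr}{\evnt}{\seq{\absval}}$ yields $\seglen$ distinct concrete messages $\msg_1,\ldots,\msg_k$. The hypothesis $\concstate{\postlbl\abssysstate}{\postlbl\predtrace}$ places them in some state of $\postlbl\predtrace$, and each was produced by a \TirName{Step-Send} step earlier in the trace.
\begin{itemize}
\item The witnessed sender $\fromlbl\absaddr$ is mapped by the valuation to the sender of one of these messages, say $\msg_i$.
\item Affinity ($\jpaffine{}$) says that sender sends at most one $\evnt$ message to $\tolbl\absaddr$, so the equation $\tolbl\absaddr = \maplookup{\absactorstore}{\var}$ together with the payload equations in $\puredom'$ is satisfied by the sender's store at the pre-location.
\item The remaining $k-1$ messages are still pending, so they concretize the network (we read the residual segment as having length $\seglen-1$). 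The trace state just before that send then witnesses $\prelbl\abssysstate$.
\end{itemize}
This argument mirrors the \TirName{Abs-Step-Send} case of \autoref{thm:causal-soundness}, and no reordering is required, so again $\predtrace'=\prelbl\predtrace$.

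\textbf{Segment handle.} For \TirName{Abs-Step-Segment-Handle}, the recipient at the entry location of the segment handler is matched concretely by $k$ invocations of the handler for $\evnt$ at $\maplookup{\valuation}{\tolbl\absaddr}$, and these invocations may be interleaved with other handlers. The steps are as follows.
\begin{enumerate}
\item Apply \reftxt{Lemma}{thm:trace-eq} to $\prelbl\predtrace$, using $\mathit{symmetric}(h)$, to obtain a Mazurkiewicz-equivalent $\predtrace'$ that has the same head and last state and in which the $k$ invocations are consecutive.
\item In $\predtrace'$, identify the state just before the first of these invocations. At this state $\tolbl\absaddr$ is idle, which is what \TirName{Abs-Step-Handle}'s $\idleevnt$ encoding requires. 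The $k$ messages are in the network, and the history extends to $\histcons{\histcons{\hist}{\msg_1}\cdots}{\msg_k}$. This matches the history concretization of $\abshisthyp{\absmsg}{\abshist}$, with the messages distinct by affinity.
\item Relate the fields before and after the block via the soundness of the CRA summary~\cite{cra}: the $k$ consecutive executions of the handler body are over-approximated by $\proc$ instantiated with $\seglen\mapsto k$, so the valuation extends with $\symseglen\mapsto k$.
\item Discharge $\paffinefun{\abshisthyp{\absmsg}{\abshist}}$. The bounds $0 < k \le$ system size, and the disjointness of senders from the other materialized messages, hold because the concrete history is affine and the $\msg_i$ come from distinct senders.
\end{enumerate}
We then conclude $\concstate{\prelbl\abssysstate}{\predtrace'}$ through the predecessor clause of trace concretization.

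I expect the main obstacle to be steps 1 and 2 together. The stuttering-style concretization of traces must survive the reordering: the relevant predecessor state must exist in $\predtrace'$, not merely in $\prelbl\predtrace$. To handle this, I would argue that the reordering only permutes steps independent of $h$, so the projection of every state onto $\maplookup{\valuation}{\tolbl\absaddr}$ and onto the materialized fields and messages is preserved. Monotonicity of concretization under valuation extension (the assumption used in \autoref{sec:mech-details}) then transfers the remaining constraints from $\prelbl\predtrace$ to $\predtrace'$.
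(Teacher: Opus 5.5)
\paragraph{Gap: instruction steps inside the segment handler.} Your case split on the abstract rule misses the case the paper treats separately as \TirName{Step-Instr} inside a segment handler. You send every rule other than \TirName{Abs-Step-Segment-Handle} and \TirName{Abs-Step-Segment-Send} to Theorem~\ref{thm:causal-soundness}. That includes \TirName{Abs-Step-Instr} (and \TirName{Causal-Step-Store-Constraint} jumps) on transitions of the segment handler's body $\proc$, which is the CRA closed form parameterized by $\seglen$. These abstract steps mirror no concrete step: the concrete system runs the original handler $k$ times, and no concrete handler-local store binds $\seglen$. Theorem~\ref{thm:causal-soundness} is proved for the unsummarized semantics, so it says nothing about them. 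The paper splits first on the concrete step and then on whether the transition lies in a segment handler, which makes this case explicit, and discharges it with the soundness of the recurrence summary (Theorem~4.15 of \cite{cra}).

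You do invoke CRA, but at the wrong rule. \TirName{Abs-Step-Segment-Handle} leaves $\fieldslbl\absstore$ unchanged: it only replaces the handler-local store $(\mapping{\seglen}{\symseglen}, \seq{\mapping{\var}{\absval}})$ by $\storetop$ and adds the segment to the network and history. So by the time it fires, the relation between pre-block and post-block field values has already been produced by the backward steps through $\proc$, and you attributed their soundness to a theorem that does not cover them. To close the gap, you must say which (reordered) concrete trace an abstract state positioned inside $\proc$ concretizes, for instance by relating it to the boundary states of the consecutive block from Lemma~\ref{thm:trace-eq}. Then show, via CRA soundness, that each backward step through $\proc$ preserves this relation. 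Your step~3 belongs there, not in the handle case.

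\paragraph{Segment-send case.} The rule deletes the entire segment and leaves $\absnet$, so there is no residual segment of length $\seglen-1$ to concretize. That reading would in fact fail. The valuation of $\fromlbl\absaddr$ is fixed by the post-state, and $\msg_i$ need not be the last of the $k$ sends, so some of the other messages may not yet be in the network just before $\msg_i$ is sent. The receiver and payload equations in $\puredom'$ hold by the premises of \TirName{Step-Send}, not by affinity.

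The paper also appeals to symmetry in this case, whereas you take $\predtrace'=\prelbl\predtrace$. Your choice can be defended, but only if you state explicitly that the post-state's concretization supplies a state, after the witnessed send, in which all $k$ messages are pending. That guarantees the receiver has handled none of them when $\msg_i$ is sent.
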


This theorem is stated as \autoref{thm:segment} in \autoref{sec:parametric}.

We will proceed with proof by case analysis on $\jstep{\prelbl\predtrace}{\postlbl\predtrace}$ and
$\jcausalbackstep[\topsys]{\prelbl\abssysstate}{\postlbl\abssysstate}$
\\

\begin{itemize}
\item \textbf{Case \TirName{Step-Send}}: Let the command be $\mksend{\evnt}{\symvaralt}{\symvar}$. We split on if the command is in a segment handler for $\evnt$, or a standard handler.
  \begin{itemize}
  \item \textbf{Case Standard}: this step is modeled by \TirName{Causal-Step-Send}, following \autoref{thm:causal-soundness}.
    \item \textbf{Case Segment}: we have that $\evnt$ has a corresponding message-segment handler, which we denote $h$, with segment length $\seglen$. We have an abstract state $\postlbl\abssysstate$ which models $\postlbl\predtrace$, by hypothesis. We want to apply \TirName{Causal-Step-Segment-Send}, and must show all premises to hold.

Suppose there is no actor matching the abstract post state. By applying\\ \TirName{Causal-Materialize-Actor-Network}, we can produce a state $\abssysstate''$ which contains such an actor, so we will assume $\postlbl\abssysstate$ to have already done so.

Next, we must show $\postlbl\abssysstate$ to have a message-segment in its $\absnet$. If it does not, we can avoid materialization, and $\postlbl\abssysstate$ models $\prelbl\predtrace$ already, as no constraints change.

We thus can apply  \TirName{Causal-Step-Segment-Send}, and must show that the resultant state, $\prelbl\abssysstate$, models $\concstate{\prelbl\abssysstate}{\predtrace'}$, a trace in the same equivalence class as $\prelbl\predtrace$.

As in \TirName{Causal-Step-Send}, the arguments of the sending site are linked to that of the receiver directly, so $\prelbl\abssysstate$ models a state where those arguments are sent, as in $\prelbl\predtrace$. Unlike $\prelbl\predtrace$, the abstract state  $\prelbl\abssysstate$ models a state in which $k$ copies of the message are received in a row, without interleavings, while $\prelbl\predtrace$ may have arbitrary interleavings. By construction, a segment handler $h$ is only considered an option when it is independent/symmetric of all other operations, and so within the same trace equivalence class, there must exist some trace $\predtrace'$ which only differs by having each copy of the message received in a row.
  \end{itemize}

\item  \textbf{Case \TirName{Step-Handle}}:  Let the handler in question be for $\evnt$. We split on if the command is in a segment handler for $\evnt$, or a standard handler, just as in the previous case.

  \begin{itemize}
    \item \textbf{Case Standard}:  this step is modeled by \TirName{Causal-Step-Handle}, following \autoref{thm:causal-soundness}.
    \item \textbf{Case Segment}: we have that $\evnt$ has a corresponding message-segment handler, which we denote $h$, with segment length $\seglen$, by case analysis.
      We have an abstract state $\postlbl\abssysstate$ which models $\postlbl\predtrace$, by hypothesis. We want to apply \TirName{Causal-Step-Segment-Handle}, and must show all premises to hold, and that the pre-state models an equivalent trace.

Firstly, we consider the addition of a message-segment $\mkmsgseg{\mach}{\tolbl\absaddr}{\evnt}{\symvar}$. This corresponds to a sequence of concrete messages to $\tolbl\absaddr$ with argument $\symvar$ and event $\evnt$. We must show this covers all behaviors of the concrete system.

By hypothesis, we are in a message segment handler $h$. We only construct this handler if $symmetric(h)$ holds, meaning executions of $h$ can be reordered with respect to other event handlers.

Consider $\predtrace$: By Lemma \ref{thm:trace-eq}, we can produce a trace $\predtrace''$ which is equivalent but groups executions of individual handlers under the actor model assumption. By $symmetry(h)$, we can turn $\predtrace''$ into some $\predtrace'''$ such that $\predtrace'''$ differs by sequencing executions of $h$.

$\predtrace'''$ must contain a state $\sysstate''$ such that $\sysstate''$ is the first invocation of $h$: It may be the last state of $\predtrace'''$, which we know to invoke $h$ by hypothesis, or some further state. We shall construct a state $\abssysstate''$ which is modeled by this predecessor  $\sysstate''$. 

Suppose the concrete trace executes $h$ some number of times $\seglen'$. Thus,  $\sysstate''$ contains $\seglen'$ copies of the message in its network $\net$. If we can show $\seglen$, our segment length, to be satisfied by $\seglen'$, we have concretization shown for both the history and network.

We constrain $\seglen$ only using the affine protocol restriction. In particular, we require that the count of messages with $\evnt$ be at most the system size. We also require $\seglen$ to be positive.

The latter is true as alternatively, we have a contradiction in that the message could not have been sent, but by hypothesis we witness the handle step.

For the former, suppose this restriction prevents $\seglen$ from being modeled by $\seglen'$. We can then show a contradiction with the affine protocol. This can only occur when $\seglen'$ is larger than some upper bound on $\seglen$'s value. In our abstract history, this means more copies of $\evnt$ are witnessed than are possible, by affine-ness, and so the state is invalid.

Lastly, our affine restriction in $\puredom'$ must be considered. For this case, if $\seglen$ copies cannot be added to the network and satisfy affine-ness, then the concrete protocol must also not satisfy affine-ness, and the program is invalid by our initial restrictions. Otherwise, this state satisfies concretization of $\puredom'$.

With all propositions holding, we may take the step. Location and store concretization hold trivially, as the pre-state $\prelbl\abssysstate$ returns the actor to idle. Network, history, and constraint concretization follow as discussed above.
  \end{itemize}

\item \textbf{Case \TirName{Step-Instr}}: Let the command be $\mkassign{\symvar}{\expr}$. We split on if the command is in a segment handler for $\evnt$, or a standard handler. 
  \begin{itemize}
  \item \textbf{Case Standard}: We can apply \TirName{Causal-Step-Instr} in \autoref{thm:causal-soundness}
  \item \textbf{Case Segment}: we have that $\evnt$ has a corresponding message-segment handler, which we denote $h$, with segment length $\seglen$. Recall that we construct our message segment handlers according to Algorithm 4.14 in \cite{cra}. By Theorem 4.15 in \cite{cra}, we have that the body of the segment handler soundly overapproximates the behavior of the looped handler, and thus, we can construct $\abssysstate''$ by apply \TirName{Causal-Step-Instr}, such that it is modeled by $\predtrace$.
   \end{itemize}
\end{itemize} 

\subsection{Termination Proof}\label{sec:termination}

Now we will discuss our termination proof, which in principle relies on defining a ranking function to describe the growth of the worklist used in our algorithm to compute a fixpoint. We duplicate the causal graph and causal dependence definitions first, for ease of reference.

Consider an actor program $\topsys$. We define $\causalgraph{\topsys}$ to be a static, actor-insensitive abstraction describing the causal dependencies of a distributed protocol $\topsys$.
More precisely, $\causalgraph{\topsys}$ is a directed graph where the vertices are fields $\mkmachfld{\mach}{\var}$ or event types $\mkmachevnt{\mach}{\evnt}$ with the following kinds of edges:
\begin{description}\small
  \item[data dependence]
  $\causaltofrom{\mkmachfld{\mach}{\varalt}}{\mkmachfld{\mach}{\var}}$ if there is an assignment to variable $\var$ that is data dependent on $\varalt$ (i.e.,
  $\causaltofrom{\mkmachfld{\mach}{\varalt}}{\mkmachfld{\mach}{\var}} \in \causalgraph{\topsys}$ if
  $(\mktrans{-}{\mkassign{\var}{\expr}}{-}) \in \maplookup{\maplookup{\topsys}{\mach}}{\evnt}$ where $\varalt$ is in the free variables of expression $\expr$, for some event type $\evnt$).
  \item[control dependence]
  $\causaltofrom{\mkmachfld{\mach}{\varalt}}{\mkmachfld{\mach}{\var}}$ if there is an assignment to variable $\var$ that is control dependent on $\varalt$ (i.e.,
  $\causaltofrom{\mkmachfld{\mach}{\varalt}}{\mkmachfld{\mach}{\var}} \in \causalgraph{\topsys}$ if
  $\mktrans{\prelbl\loc}{\mkassign{\var}{-}}{-} \in \maplookup{\maplookup{\topsys}{\mach}}{\evnt}$ where a guard using $\varalt$ dominates location $\prelbl\loc$, for some event type $\evnt$).
  \item[send-control dependence] 
  $\causaltofrom{\mkmachevnt{\mach'}{\varalt}}{\mkmachevnt{\mach}{\evnt}}$ if there is a send of event $\evnt$ with a control dependence on variable $\varalt$ (i.e.,
  $\causaltofrom{\mkmachevnt{\mach}{\evnt}}{\mkmachevnt{\mach'}{\varalt}} \in \causalgraph{\topsys}$ if
  $\mktrans{\prelbl\loc}{ \mksend{\var}{\evnt}{-} }{-} \in \maplookup{\maplookup{\topsys}{\mach}}{\evnt'}$ 
  where $\var$ has machine type $\mach$ and a guard using $\varalt$ dominates location $\prelbl\loc$, for some event type $\evnt'$).
  \item[assign-handle dependence] 
  $\causaltofrom{\mkmachfld{\mach}{\evnt}}{\mkmachevnt{\mach}{\var}}$ if there is an assignment to variable $\var$ in a handler for event $\evnt$ (i.e., $\causaltofrom{\mkmachfld{\mach}{\evnt}}{\mkmachevnt{\mach}{\var}} \in \causalgraph{\topsys}$ if $\mktrans{-}{\mkassign{\var}{-}}{-} \in \maplookup{\maplookup{\topsys}{\mach}}{\evnt}$).
  \item[send-handle dependence] 
  $\causaltofrom{\mkmachevnt{\mach'}{\evnt'}}{\mkmachevnt{\mach}{\evnt}}$ if there is a send of an event $\evnt$ in a handler for event $\evnt'$ (i.e., $\causaltofrom{\mkmachevnt{\mach'}{\evnt'}}{\mkmachevnt{\mach}{\evnt}} \in \causalgraph{\topsys}$ if $\mktrans{-}{ \mksend{\var}{\evnt}{-} }{-} \in \maplookup{\maplookup{\topsys}{\mach'}}{\evnt'}$ where $\var$ has machine type $\mach$).
\end{description}
We say that a causal-dependence graph is an actor-insensitive abstraction because it does not distinguish between different actors of the same machine type --- as we see in the above definition --- but otherwise follows the causal dependencies described in \autoref{sec:causal-cycles}.

\begin{figure}
\begin{align*}
         \initstate{\abssysloc} \iff & \forall \absaddr \in dom(\abssysloc). \exists \mach.
\abssysloc(\absaddr) = \mkmachsttevntloc{\mach}{\initstt}{\idleevnt}{\exitloc}\\
         \netcolor{\absnet}\initstate{\purecolor{\puredom}} \iff & \forall \symvar \in dom(\purecolor{\puredom}) . \exists \absmsg \in \netcolor{\absnet} .
        \absmsg = \mkmsg{\fromlbl\absaddr}{\tolbl\absaddr}{\evnt}{\symvar}\\
        \initstate{\netcolor{\absnet}} \iff & \forall \absmsg \in \netcolor{\absnet} .
        \absmsg = \mkmsg{\fromlbl\absaddr}{\tolbl\absaddr}{\initevnt}{\symvar}\\ \initstate{\mkabssysstate{\abssysloc}{\abssysstore}{\absnet}{\abshist}} \iff & \initstate{\abssysloc} \text{ and } \netcolor{\absnet}\initstate{\purecolor{\puredom}} \text{ and } \initstate{\netcolor{\absnet}}
\end{align*}

\caption{Initial States: To be initial, a state must only have actors at the idle location, messages for the initial event in the network, and all symbolic variables must be arguments of these messages.}
\label{fig:init}
\vspace{-.5cm}
\end{figure}

we define an initial state in our abstraction, via the judgment $\initstate{\abssysstate}$, in \autoref{fig:init}. A state is initial if each actor in $\abssysloc$ is at the idle location, if every symbolic variable in $\puredom$ is an argument of a message in the network, and every message in $\absnet$ is for $\initevnt$.

With all these components, we can now prove that the analysis terminates, via a conservative over-approximation of a ranking function for the analysis. To construct this argument, we need a couple more auxiliary definitions.

For each message handler $h$, we assume that a maximum number of steps $H$ is needed by the analysis to reach the entry location from the exit location. As each of our given rules require only a single step for actor commands, this assumption is on our underlying imperative semantics, that the over-approximation given is itself terminating. We define $D$ to be the depth of $\causalgraph{\topsys}$. Lastly, we let $W$ denote the maximum in-degree of any vertex in $\causalgraph{\topsys}$.

Termination of the analysis is dependent on the structure of both graphs.  We require  $\causalgraph{\topsys}$ to be acyclic: otherwise, leveraging \texttt{Abs-StepSend} and \texttt{Abs-Step-Materialize-Net} the analysis can materialize infinitely many new actors to consider. Additionally,  when witnessing an assignment via\texttt{Abs-Step-Instr}, a relevant field may become relevant infinitely often, preventing termination.

\begin{theorem}[Termination]\label{thm:termination-appendix}
Assume the causal-dependence graph $\causalgraph{\topsys}$ of a protocol $\topsys$ is acyclic. Then, for any singleton abstract distributed state $\abssysstate$, there exists a natural-number bound $n$ such that $\jcausalbackstepn[n]{\abssysstate'}{\abssysstate}$ leads to some singleton abstract distributed state $\abssysstate'$ where
$\jinitbot{\abssysstate'}$.
\end{theorem}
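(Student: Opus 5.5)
We argue by a ranking function on non-disjunctive abstract states that decreases along a suitably chosen causal-reduced derivation. The function is built from the acyclic causal-dependence graph $\causalgraph{\topsys}$. For a state $\abssysstate=\mkabssysstate{\abssysloc}{\abssysstore}{\absnet}{\abshist}$, list its outstanding \emph{obligations}:
\begin{itemize}
\item each materialized actor in $\abssysloc$ that is not at $\mkmachsttevntloc{\mach}{\initstt}{\idleevnt}{\exitloc}$ (count $i$);
\item each field points-to $\mapping{\var}{\symvar}$ in $\abssysstore$ that is mentioned in $\puredom$ but is not an $\initevnt$ argument (count $j$);
\item each non-$\initevnt$ message in $\absnet$ (count $k$).
\end{itemize}
Tag every obligation with the vertex $\mkmachfld{\mach}{\var}$ or $\mkmachevnt{\mach}{\evnt}$ of $\causalgraph{\topsys}$ it corresponds to. Its \emph{height} is the length of the longest path out of that vertex, which is at most $D$ because the graph is acyclic. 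The measure is the multiset of pairs (height, remaining intra-handler steps), ordered by the Dershowitz--Manna multiset extension of the lexicographic order on pairs. This order is well-founded, and it is what the quantity $s(n)=2^n(D!/(D-n)!)(i+j+k)HW^n$ crudely bounds.

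First, we fix a deterministic strategy for choosing the next step. If the state is contradictory or satisfies $\jinitbot{\cdot}$ (\autoref{fig:init}), we stop. Otherwise we pick an obligation and take one step that discharges it:
\begin{itemize}
\item An actor strictly inside a handler takes \TirName{Abs-Step-Instr} or \TirName{Abs-Step-Send}. By assumption this happens at most $H$ times before the actor reaches $\entryloc$.
\item At $\entryloc$, \TirName{Abs-Step-Handle} returns the actor to $\idleevnt$ and moves the trigger message to the network and history.
\item An idle actor with a pending field or network obligation takes \TirName{Causal-Step-Store-Constraint} or \TirName{Causal-Step-Network}. If needed, this is preceded by \TirName{Causal-Materialize-Actor-Network} via \TirName{Abs-Step-Consequence}.
\end{itemize}
Materialization produces a disjunction, and a handler has at most $W$ choices of predecessor edge. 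We therefore take any one disjunct (the ``$2^n$'' and ``$W^n$'' factors), which suffices because the statement quantifies existentially over paths and asks only for a non-disjunctive $\abssysstate'$.

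The key lemma is that each such step replaces an obligation tagged with vertex $u$ by finitely many new obligations. Each new one is either the same obligation with fewer remaining handler steps, or one tagged with a vertex $v$ that has an edge $\causaltofrom{v}{u}$, hence strictly smaller height. We check this by cases against the five edge kinds:
\begin{itemize}
\item backward-executing an assignment $\mkassign{\var}{\expr}$ introduces only fields free in $\expr$ (data dependence) or in dominating guards (control dependence);
\item reaching $\entryloc$ introduces the triggering event (assign-handle or send-handle dependence);
\item witnessing a send introduces fields in its guard (send-control dependence) and the enclosing handler's event (send-handle dependence).
\end{itemize}
Newly materialized actors inherit the tag of the obligation that caused them. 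The affine constraint $\paffinefun{\cdot}$ guarantees that a sender is never asked to re-send the same message, so a discharged obligation is not regenerated at equal height. Aliasing disjuncts only merge obligations, which can only decrease the measure.

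The measure therefore strictly decreases in the multiset order, and because each step adds at most a bounded number of new elements we obtain an explicit bound $n$. We would invoke \reftxt{Lemma}{thm:worklist} to conclude that the sequence $s(n)$ reaches $0$. When no obligations remain, every actor is idle at $\initstt$, $\absnet$ holds only $\initevnt$ messages, and every symbolic variable is an $\initevnt$ argument, so $\jinitbot{\abssysstate'}$ holds. If the pure constraints become unsatisfiable earlier, $\abssysstate'$ is contradictory and the conclusion holds as well.

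The main obstacle is the key lemma for materialization driven by the network. \TirName{Causal-Materialize-Actor-Network} creates a fresh actor, and with it new field obligations, so it is not obvious that the new actor's obligations sit strictly below the message obligation that caused them. We plan to argue that the fresh actor is placed at the post-location of a send inside the handler for some event $\evnt'$. All of its subsequent obligations are then reached through edges $\causaltofrom{\mkmachevnt{\mach'}{\evnt'}}{\mkmachevnt{\mach}{\evnt}}$ or send-control edges, and so they descend in the acyclic graph. The subtle point is that the message must be removed from $\absnet$ in the same macro-step, through \TirName{Abs-Step-Send}, so that the parent obligation is genuinely consumed rather than duplicated.
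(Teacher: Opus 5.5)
\textbf{Verdict: the approach is well structured, but the key descent lemma fails, so there is a genuine gap.}

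\textbf{How your route compares.} Your bookkeeping is cleaner than the paper's. The paper counts $i+j+k$ in synchronized rounds and bounds round $n$ by $s(n)=2^n(D!/(D-n)!)(i+j+k)HW^n$, whose falling factorial vanishes for $n>D$. You need only a local descent property plus well-foundedness of the Dershowitz--Manna order. That makes your final appeal to Lemma~\ref{thm:worklist} redundant. Also, the $2^n$ in $s(n)$ comes from the factor $2$ in $2D(i+j+k)$, not from materialization disjuncts. If descent held, it would hold for every disjunct and predecessor edge, so you would get termination of \emph{all} derivations, which is what the paper's $W^n$ factor is after. However, your descent lemma is exactly the paper's unproved claim that ``everything newly added must be one step closer to the root of $\causalgraph{\topsys}$''. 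Your case check of it fails at the very step you flagged as the main obstacle.

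\textbf{Where descent fails.} When the fresh sender takes \TirName{Abs-Step-Send}, the rule conjoins $\tolbl\absaddr = \maplookup{\absactorstore}{\var}$ and $\absval = \maplookup{\absactorstore}{\varalt}$. Hence every sender \emph{field} read as the recipient or as a payload argument becomes a new field obligation. None of the five edge kinds links the event vertex to such fields, since send-control covers only guards. So their height need not drop. The same problem arises for guards met on a path-sensitive backward traversal that do not \emph{dominate} the assignment or send.

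\textbf{A concrete counterexample.} Let machine $A$ set $f = p$ in its $\fmtcode{eX}(p)$ handler and send $\fmtcode{eZ}(f)$ to a $B$ in an $\fmtcode{eV}$ handler. Let $B$ set $g = q$ in its $\fmtcode{eZ}(q)$ handler and send $\fmtcode{eX}(g)$ to an $A$ in an $\fmtcode{eW}$ handler. Both $\fmtcode{eV}$ and $\fmtcode{eW}$ are sent from the $\initevnt$ handlers. Write $u \to v$ for $\causaltofrom{v}{u}$. The graph then consists of the acyclic chains $A.f \to A.\fmtcode{eX} \to B.\fmtcode{eW} \to A.\initevnt$ and $B.g \to B.\fmtcode{eZ} \to A.\fmtcode{eV} \to B.\initevnt$, plus $A.f \to A.\initevnt$ and $B.g \to B.\initevnt$.
\begin{enumerate}
\item Witnessing $A_1.f$ through $\fmtcode{eX}$ materializes a fresh $B_1$.
\item $B_1$'s send makes $B_1.g$ relevant. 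This is an $A.\fmtcode{eX}$ obligation spawning a $B.g$ obligation of strictly greater height.
\item Witnessing $B_1.g$ through $\fmtcode{eZ}$ materializes a fresh $A_2$ with $A_2.f$ relevant, and the pattern repeats forever with satisfiable constraints.
\end{enumerate}
Your remark about $\paffinefun{\cdot}$ does not rescue this. Every message in the chain has a distinct sender--receiver pair, so no affine constraint is violated.

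\textbf{Consequence and repair.} The key lemma, and with it all-paths termination, is false for $\causalgraph{\topsys}$ as defined. To repair it:
\begin{itemize}
\item Add a send-argument edge $\causaltofrom{\mach'.\varalt}{\mach.\evnt}$ whenever a field $\varalt$ of $\mach'$ is read as recipient or payload of a send of $\evnt$.
\item Replace ``dominates'' by ``lies on some path to'' in the (send-)control-dependence edges.
\item Give field and message obligations a second component exceeding $H$, so that the switch to an actor obligation at the same height is a decrease.
\end{itemize}
With these changes your multiset argument goes through. If the graph is left unchanged, the most you can hope for is the literal existential statement via a different strategy, such as always discharging field obligations through their $\initevnt$ initialization. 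That is not what your height measure proves.
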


We state this theorem in \autoref{sec:parametric}, as \autoref{thm:termination}.

\textbf{Proof}: Consider an abstract state $\abssysstate$. If $\jinitbot{\abssysstate}$, let $n=0, \abssysstate'=\abssysstate$.\\
Else, let $\abssysstate = \mkabssysstate{\abssysloc}{\abssysstore}{\absnet}{\abshist}$.
We have $\neg(\initstate{\abssysloc} \land \absnet\initstate{\puredom} \land \initstate{\absnet} )$ by inversion.

We must show that in a finite number of steps, a state $\abssysstate'$ can be reached which is initial, or is known to be unreachable.

A state is unreachable if its components are contradictory, or it has no predecessors and is not init. A state is contradictory if the buffer contains a message which cannot be sent, or a pure constraint over fields is not satisfiable by any valuation $\theta$. 

Let $i, k,j$ be the number of non-initial actors, unconstrained fields, and non-initial messages, respectively. We define the sum of remaining work for $\abssysstate$ to be $i+j+k$, as this many elements will be added to the worklist next (branching for each $W$ times).

To resolve a non-initial actor (bring it to the initial location), at most $1$ messages will be added to network, and at most $D$ fields become relevant. To resolve a non-initial message, at most $D$ new actors are materialized, with $D$ fields becoming relevant. For a field, at most $D$ messages, and $D$ actors are added.

Materializing these field constraints or messages may cause contradiction, or be impossible. If so, we have proven that the state is $\bot$, and are done.

Otherwise, we continue executing our semantics.
Thus, we go from $i+j+k$ to $(Dj+Dk) + (Di+Dk) + (Dj+Di)$, or $2D(i+j+k)$ (one copy of $i$ becomes $Di$, to over-approximate and yield a more conservative bound. We may repeat this process with our new values of $i,j,k$, noting that instead of $D$, we use $D-1$, as everything newly added must be one step closer to the root of $\causalgraph{\topsys}$. Thus, the next term is $4D(D-1)(i+j+k)$. We note that at each ``step'', we are actually taking $H$ steps, to resolve entire handlers, and may branch $W$ times. We can define this to be a sequence $s(n) = 2^n(D!/(D-n)!)(i+j+k)HW^n$. As $H(i+j+k)$ is a constant for our given state, we elide it. The sequence $s(n)$ tells us how much work is added to our worklist by the analysis at the $n-th$ iteration, and so it suffices to show $s(n)$ eventually becomes 0 for all $n$ after some point, to show termination. We call this the \texttt{Worklist-Lemma}, and by applying this lemma, we guarantee termination for the left side of our disjunction. $\qed$

\begin{lemma}[\thmcolor{Worklist-Lemma}]\label{thm:worklist}
$\forall D \in \mathbb{N}, \exists n \in \mathbb{N} . \forall m. m > n \implies  s(m) = 0$.
\end{lemma}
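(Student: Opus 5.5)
The plan is to take $n = D$ and show that $s(m)$ contains a factor equal to zero whenever $m > D$. The whole argument rests on reading the coefficient $D!/(D-n)!$ in $s(n) = 2^n(D!/(D-n)!)(i+j+k)HW^n$ correctly. In the termination proof it arises as the product of the branching factors used at successive rounds: $D$ at the first round, then $D-1$, then $D-2$, and so on, since every newly added obligation is one step closer to the root of $\causalgraph{\topsys}$. So I will fix the definition as the falling factorial
\[
D^{\underline{n}} \;=\; \prod_{t=0}^{n-1} (D - t),
\]
with natural-number (truncated) subtraction. Equivalently, I define $s$ by the recursion $s(0) = c$ and $s(t+1) = 2W(D-t)\,s(t)$, where $c = (i+j+k)H$ is the constant that the termination proof elides.

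The first step is to record that this recursive $s$ agrees with the closed form $2^n D^{\underline{n}} c\, W^n$. This is a routine induction on $n$, and for $n \le D$ it coincides with the paper's $D!/(D-n)!$.

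The second step is to show $s(D+1) = 0$. Unfolding the recursion once gives $s(D+1) = 2W(D-D)\,s(D) = 0$, whatever the values of $W$, $c$ and $s(D)$.

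The third step is an induction on $m$ showing that $s(m) = 0$ for every $m > D$. The base case is $m = D+1$, which is the second step. For the inductive step, $s(m+1) = 2W(D-m)\,s(m) = 0$ because the factor $s(m)$ is already zero. Taking $n = D$ in the statement of \reftxt{Lemma}{thm:worklist} then yields the claim. In Lean, this argument amounts to \texttt{Nat.sub\_self} followed by \texttt{mul\_zero} and induction on $m$.

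The main obstacle is the first step, namely pinning down the definition rather than doing any computation. If $D!/(D-n)!$ is read literally as natural-number division with truncated subtraction, then for $n > D$ we get $D-n = 0$ and hence $D!/0! = D! \neq 0$, so the lemma would be false. I would therefore state explicitly that $s$ is the recursively defined falling-factorial sequence. I would justify this choice by pointing to the ranking argument in the termination proof, where the multiplier at round $t$ is the remaining causal depth $D - t$. Once the depth is exhausted, no vertex of the acyclic graph $\causalgraph{\topsys}$ can contribute new work, so that multiplier becomes $0$. With this definition fixed, the remaining steps are immediate.
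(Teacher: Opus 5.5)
Your proposal is correct and takes the same approach as the paper: the paper also reads $D!/(D-m)!$ as a falling factorial that vanishes once $m > D$, which makes $s(m)=0$, so $n = D$ works. Your recursive definition of $s$, and your remark that the literal truncated-subtraction reading would give $D! \neq 0$, simply make precise what the paper leaves implicit.
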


Though mechanized, we state Worklist-Lemma here since it depends on definitions introduced above, in particular the sequence $s(n)$.

The proof follows naturally from the sequence $s(m) = 2^m(D!/(D-m)!)(i+j+k)HW^m$. The coefficient $(D!/(D-m)!)$ is a falling factorial, and when $D<m$, this term reduces to $0$, making the sequence $0$.

\section{Symmetry and Trace Equivalence}\label{sec:symmetry-appendix}

The use of message-segments requires additional restrictions on trace equivalence, discussed in \autoref{sec:parametric}. Here, we provide a formalization of the intuition discussed, and further details on relaxations.

Mazurkiewicz-equivalence\cite{mazurkiewicz} is a property of two traces $\predtrace,\predtrace'$ which states the two are equal modulo the ordering of independent events. Two events are said to be independent if their relative ordering is irrelevant to the state (i.e. they commute). Formally, given a state $\sysstate$, with history $\hist$, two messages $\msg_1,\msg_2$ are independent if the state reached by handling $\msg_1$ then $\msg_2$ is equivalent to the one reached by handling the inverse order. The two final states differ \emph{only} in their message history, which records the order.

We can extend this notion of equivalence, defining what we refer to as handler symmetry. Formally, a handler $h$ is said to be symmetric if the effects of this handler are independent of all other handlers for the actor type $\mach$. A key distinction to make is that it is not required to be independent of the distinguished initial event $\initevnt$, as it cannot be reordered under our parametric system assumption.

Below, we formalize this rule in our abstraction. Despite its verbosity, it encodes precisely the aforementioned intuition.

\begin{mathpar}
\infer[Symmetry]{
        \forall \evnt' \neq\evnt \and \abssysstate = \mkabssysstate{\abssysloc}{\abssysstore}{\absnet}{\abshist} \and
\abssysstate' = \mkabssysstate[\puredom']{\abssysloc}{\abssysstore'}{\absmsg\netsep\absnet}{\abshisthyp{\absmsg}{\abshist}} \and
        \jcausalbackstepn{\abssysstate'}{ }{\abssysstate} \\
\abssysstate'' = \mkabssysstate[\puredom'']{\abssysloc}{\abssysstore''}{\absmsg'\netsep\absnet}{\abshisthyp{\absmsg'}{\abshist}} \and 
\jcausalbackstepn{\abssysstate''}{ }{\abssysstate} \and
\absmsg = \mkmsg{\absaddr_1}{\absaddr}{\evnt}{\symvar} \and
\absmsg' = \mkmsg{\absaddr_2}{\absaddr}{\evnt'}{\symvaralt} \\
\jcausalbackstepn{\mkabssysstate[\puredom''']{\abssysloc}{\abssysstore'''}{\absmsg\netsep\absmsg'\netsep\absnet}{\abshisthyp{\absmsg}{\abshisthyp{\absmsg'}{\abshist}}}}{ }{\abssysstate''} \\
\jcausalbackstepn{\mkabssysstate[\puredom''']{\abssysloc}{\abssysstore'''}{\absmsg'\netsep\absmsg\netsep\absnet}{\abshisthyp{\absmsg'}{\abshisthyp{\absmsg}{\abshist}}}}{ }{\abssysstate'} \and h = \mkonhand{\evnt}{\seq{\var}}{\proc}
}{
        symmetric(h)
}
\end{mathpar}

While this notion of symmetry is sufficient for our usage, it is quite a strong restriction, and not neccessary. As actors do not share memory, if two handlers $h_1$ and $h_2$ are not always independent, but they are being applied to distinct actors $\absaddr_1$ and $\absaddr_2$, the specific invocations are independent. This weakened condition is more broadly applicable, but must be checked during analysis, and cannot be done as a pre-analysis step.

\end{document}